\documentclass[11pt,a4paper]{article}
\usepackage{amsmath, amsfonts, amssymb}
\usepackage{a4wide}
\usepackage{parskip}
\usepackage{enumitem}
\usepackage{xcolor}

\usepackage{hyperref}
\usepackage{booktabs}% nicer horizontal lines
\usepackage{caption}% fix vertical spacing of table captions
\usepackage{siunitx}% align numbers at the decimal point
\usepackage{braket}
\usepackage{tabularx}
\usepackage{graphicx}
\usepackage{adjustbox}
\usepackage{multirow}
\usepackage{amsthm}
\usepackage{bm}
\usepackage{lscape}
\usepackage{float}
\usepackage[noadjust]{cite}

\def\qed{\hfill {$\square$}\goodbreak \medskip}

\theoremstyle{theorem}
\newtheorem{theorem}{Theorem}
\newtheorem{corollary}{Corollary}
\theoremstyle{definition}
\newtheorem{definition}{Definition}

\theoremstyle{lemma}
\newtheorem{lemma}{Lemma}
\newtheorem{proposition}[theorem]{Proposition}

\theoremstyle{remark}
\newtheorem{remark}{Remark}
\usepackage{tikz,xcolor,hyperref}

\definecolor{lime}{HTML}{A6CE39}
\DeclareRobustCommand{\orcidicon}{%
	\begin{tikzpicture}
		\draw[lime, fill=lime] (0,0)
		circle [radius=0.16]
		node[white] {{\fontfamily{qag}\selectfont \tiny ID}};
		\draw[white, fill=white] (-0.0625,0.095)
		circle [radius=0.007];
	\end{tikzpicture}
	\hspace{-2mm}
}

\foreach \x in {A, ..., Z}{%
	\expandafter\xdef\csname orcid\x\endcsname{\noexpand\href{https://orcid.org/\csname orcidauthor\x\endcsname}{\noexpand\orcidicon}}
}

\begin{document}
	\date{}
	{\vspace{0.01in}
		\title{Construction of codes over a commutative non-unital ring from simplicial complexes and their applications}
		\author{{\bf Vidya Sagar\footnote{email: {\tt vsagariitd@gmail.com }},\; \bf Shikha Patel\footnote{email: {\tt shikha\_1821ma05@iitp.ac.in}}\; and \bf Sanjay Kumar Singh\footnote{email: {\tt sanjayks@iiserb.ac.in}}} \\~\\
			$^{\ast \ddagger}$Department of Mathematics,\\ Indian Institute of Science Education and Research Bhopal\\
			Bhopal - 462066, Madhya Pradesh, India\\~\\
			$^\dagger$Department of Mathematics,\\ 
			Indian Institute of Information Technology Bhopal\\
			Bhopal - 462003, Madhya Pradesh, India}
		
		\maketitle
		\begin{abstract}
			In this article, we investigate the construction of linear codes over a finite ring $\mathcal{S}$, where $\mathcal{S}$ is taken to be an extension of a commutative non-unital ring $I$ of order $p^2$. Our approach is based on the defining set method. The defining sets considered in this work are derived from general simplicial complexes that may contain multiple maximal elements. We determine the parameters of these codes over $\mathcal{S}$ and study their Gray images. We also study the corresponding subfield-like codes. We show that these Gray image codes and subfield-like codes produce several families of divisible codes. Furthermore, we establish sufficient conditions under which these codes are minimal, optimal, and self-orthogonal. As applications of our results, we obtain several families of projective few-weight codes, and locally recoverable codes with small locality. We also study the minimal access structures of secret-sharing schemes associated with the duals of these minimal codes. Moreover, we construct several families of strongly regular graphs from projective two-weight codes and determine their parameters explicitly.

			\medskip
			
			\noindent \textit{Keywords:} Linear code, Non-unital ring, Minimal code, Optimal code, Projective code, Divisible code, Locally recoverable code, Secret-sharing scheme, Strongly regular graph, Simplicial complex.
			
			\medskip
			
			\noindent \textit{2020 Mathematics Subject Classification:} Primary 94B05 $\cdot$ Secondary 16L30, 05E45
			
		\end{abstract}
		
		\section{Introduction}
        Linear codes over finite fields and finite rings form one of the most important classes of error-correcting codes due to their rich algebraic structure. Determining and constructing codes with optimal parameters is a central problem in coding theory, with significant implications for both theory and applications. %Linear codes over rings of size four such as  $\mathbb{Z}_4$ \cite{Z4}, $\mathbb{F}_4$ \cite{F4}, $\mathbb{F}_2 \times \mathbb{F}_2$ \cite{F2xF2}, and $\mathbb{F}_2 + u\mathbb{F}_2$ \cite{F2uF2} have been extensively studied in algebraic coding theory. These are the only commutative unital rings among the eleven rings of order four classified by B. Fine \cite{Fine}. The remaining rings are non-unital, among which five are commutative, namely $B, C, H, I,$ and $J$, while the remaining two, namely $E$ and $F$, are non-commutative.
        More recently, attention has shifted to codes over non-unital rings. Alahmadi et al. \cite{AlahmadiI1,AlahmadiI2, SagarTIT, SagarDCC} initiated the study of linear and quasi self-dual (QSD) codes, with classifications for small lengths. This work was extended by Kim and Roe \cite{KimI}. Further work on the non-commutative ring $E$ was carried out in \cite{AlahmadiE1,AlahmadiE2}, focusing on the construction and classification of quasi self-dual (QSD) and self-orthogonal codes. Subsequent studies have further advanced the theory of codes over non-unital rings (see \cite{DNAnonunital,MinjaNonunital,LCDnonUnital}).\par 
		In \cite{DingN}, the authors introduced the following generic construction of linear codes over $\mathbb{F}_q$, which gives a nice tool for constructing linear codes. Let 
		\[
		D=\{\{d_1 < d_2 < \dots < d_n\}\}\subseteq \mathbb{F}_{q^m}
		\]
		be an ordered multiset. Then, the linear code over $\mathbb{F}_q$ of length $|D|=n$ is defined as
		\begin{equation*}
			C_D=\left\{ c_D(v)=\big({\rm Tr}_{q}^{q^m}(vd_1), {\rm Tr}_{q}^{q^m}(vd_2), \dots, {\rm Tr}_{q}^{q^m}(vd_n)\big) : v\in \mathbb{F}_{q^m}\right\},
		\end{equation*}
		where ${\rm Tr}_{q}^{q^m}(\cdot)$ denotes the trace function from $\mathbb{F}_{q^m}$ onto $\mathbb{F}_q$ \cite{Lidl}. The code $C_D$ is referred to as the \textit{trace code} over $\mathbb{F}_q$ with defining set $D$. By identifying $\mathbb{F}_{q^m}$ with $\mathbb{F}_{q}^{m}$, it follows that the above code is equivalent to
		\begin{equation}\label{DingConst}
			C_D=\left\{ c_D(v)=\big(\langle v,d_1\rangle_q,\langle v,d_2\rangle_q,\dots,\langle v,d_n\rangle_q\big) : v\in \mathbb{F}_{q}^{m}\right\},
		\end{equation}
		where $\langle v, d_i\rangle_q$ denotes the Euclidean bilinear form of $v$ and $d_i$ over $\mathbb{F}_q$ (see \cite{Anuradha}). Subsequently, the construction given in Eq.~\eqref{DingConst} was generalized from finite fields to finite rings in an analogous manner.\par

		Consider the commutative and non-unital ring $I$ of order $p^2$ defined in \cite{Fine}. For a  positive integer $r$ and prime number $p$, set $q=p^r$. Let $\mathcal{S}$ be an extension of the ring $I$ of degree $r$. Throughout the paper, we consider the linear code
		\begin{equation}\label{OurConst}
			C_D = \{(v\cdot d)_{d\in D} : v\in \mathcal{S}^m\}
		\end{equation}
		over $\mathcal{S}$ of length $\vert D\vert$, and $D \subseteq \mathcal{S}^m$ is an ordered finite multiset. The study of $C_D$ becomes more interesting when the defining set $D$ is constructed via simplicial complexes. This approach has recently produced several interesting linear codes (see \cite{Hyun_Kim,Hyuan, Hyun_Lee, AMC1, MShi21, shi_nonchain, shi_x2, Shi_guan, shi_qian, shi_xuan, mixed2, wu_zhu, Zhu_Wei, Sagar_DM1, Sagar_CCDS, SagarDCC, Sagar2026, NKewat, SagarIWSDA, Sagar_AMC2, Yadav2025, Yadav2026}), and is expected to yield codes with good parameters for suitable choices of rings and defining sets.
		
		The weight distribution of a linear code plays a key role in determining its error-detecting and error-correcting capability \cite{wchuffman, Klove}, while few-weight and minimal codes have important applications in combinatorics and secret-sharing \cite{Few1, Few2, Ding_Ding}. In 2024, linear codes over the non-commutative non-unital ring $E$ of order $4$ were investigated via simplicial complexes in \cite{SagarTIT}. More recently, this work was extended in \cite{Wu2026}, where the authors considered the non-commutative non-unital ring $E$ of order $p^2$ and studied the codes over an extension ring $\mathcal{R}$ of~$E$. They studied the Lee weight distributions and investigated the corresponding Gray image codes and subfield-like codes.
		
		In this work, inspired by the above research, we construct linear codes over the ring $\mathcal{S}$ of order $p^{2r}$, where $\mathcal{S}$ is an $r$-degree extension of the commutative non-unital ring $I$ \cite{Fine}, using defining sets arising from general simplicial complexes. We determine their Lee weight distributions. We also investigate the corresponding  Gray image codes and subfield-like codes. In addition, we obtain several divisible codes over $\mathbb{F}_q$ from these Gray image codes and subfield-like codes, and determine their parameters. We provide sufficient conditions under which these $\mathbb{F}_q$-linear codes are minimal, optimal, and self-orthogonal. As applications of our results, we construct several classes of locally  recoverable codes (LRCs) with locality $2$ and $3$, as well as two infinite families of projective few-weight codes. We further examine the minimal access structures of secret-sharing schemes (SSSs) derived from the duals of these minimal codes. Besides, we investigate several classes of strongly regular Cayley graphs associated with projective two-weight codes and explicitly determine their parameters. The complement graphs of these Cayley graphs are also shown to be strongly regular, and their corresponding parameters are determined.

        %%Organization
		%The remainder of this manuscript is organized as follows. Section \ref{Sec:2} provides the necessary preliminaries. In Section \ref{Sec:3}, we investigate the construction of linear codes over $\mathcal{S}$, where defining sets are obtained via general simplicial complexes, and obtain $\mathcal{S}$-linear codes and their parameters. Section \ref{Sec:4} presents explicit Lee wight distributions of these $\mathcal{S}$-linear codes and obtains divisible codes from these Gray image codes. Construction of subfield-like codes from $\mathcal{S}$-linear codes are studied in Section \ref{Sec:5} and the corresponding divisible codes are determined together with their parameters. We also establish certain conditions under which these codes are minimal and optimal. Section \ref{Sec:6} provides self-orthogonal codes from the Gray image codes and the subfield-like codes. Finally, Section \ref{Sec:7} presents applications of the constructed $\mathbb{F}_q$-linear codes obtained in this work. In this section, we obtain LRCs with locality $2$ and $3$ as well as two infinite family of projective few-weight codes, and determine the minimal access structures of SSS derived from the duals of these minimal codes. We also obtain several SRGs associated with projective two-weight codes in this section. Section \ref{Sec:8} concludes the paper.

        The remainder of this paper is organized as follows. Section~\ref{Sec:2} provides the necessary preliminaries. In Section~\ref{Sec:3}, we study the construction of linear codes over $\mathcal{S}$ using defining sets arising from general simplicial complexes, and obtain the corresponding $\mathcal{S}$-linear codes together with their parameters. Section~\ref{Sec:4} determines the explicit Lee weight distributions of these codes and obtains divisible codes through their Gray images. In Section~\ref{Sec:5}, we study subfield-like codes derived from the $\mathcal{S}$-linear codes and obtain several families of divisible codes. We further establish conditions under which these codes are minimal and optimal. Section~\ref{Sec:6} presents self-orthogonal codes obtained from both the Gray image codes and the subfield-like codes. Applications of these obtained $\mathbb{F}_q$-linear codes are discussed in Section~\ref{Sec:7}. In particular, we construct locally repairable codes (LRCs) with locality $2$ and $3$, derive two infinite families of projective few-weight codes, determine the minimal access structures of secret-sharing schemes (SSSs) associated with the duals of these minimal codes, and obtain several strongly regular graphs (SRGs) arising from projective two-weight codes. Finally, Section~\ref{Sec:8} concludes the paper.
		
		\section{Basic definitions and notations}\label{Sec:2}
		This section presents the fundamental definitions and concepts needed for the remainder of the paper.\par 
		Throughout this paper, let $p$ be a prime number and let $q=p^r$ for some $r\in \mathbb{N}$. We denote the finite field of order $q$ by $\mathbb{F}_q$. Let $I$ be the commutative non-unital ring of order $p^2$ given by
		\begin{equation}
			I = \langle a, b ~\vert ~ pa = 0,\ pb = 0,\ a^2 = b,\ ab = 0\rangle.
		\end{equation}
		We adopt the notation $I$ from the work of Fine \cite{Fine}. \par
		Let $\mathcal{S}$ be an extension of the ring $I$ of degree $r$, that is, $[\mathcal{S}: I]=r$. Then each element of $\mathcal{S}$ can be expressed as $as + bt$, where $s, t \in \mathbb{F}_q$.\par
		Suppose $\Phi : \mathcal{S} \longrightarrow \mathbb{F}_q^2$ is the \textit{Gray map} given by 
		\begin{equation}\label{PhiEq}
			\Phi(as + bt) = (t, s+t).
		\end{equation}
		This extends to a map from $\mathcal{S}^m$ to $(\mathbb{F}_q^{m})^2$ component-wise for any $m \in \mathbb{N}$, where $\mathcal{S}^m = a\mathbb{F}_q^m + b\mathbb{F}_q^m$. Note that the map $\Phi$ is one-one and onto.
		\begin{definition}
			An $\mathcal{S}$-submodule $C$ of $\mathcal{S}^m$ is called a \textit{linear code} over $\mathcal{S}$ of length $m$.
		\end{definition}
		For $v, w \in \mathbb{F}_q^m$, the \textit{Hamming weight} of $v$ denoted by $wt_{H}( v )$ is the number of non-zero entries in $v$. The \textit{Hamming distance} between $v$ and $w$ is $d_H(v, w) = wt_{H}(v-w)$.\\
		Let $x=a\alpha + b\beta$, $y=a\alpha' + b\beta' \in \mathcal{S}^m$, where $\alpha, \beta, \alpha', \beta'\in \mathbb{F}_q^m$. Then the \textit{Lee weight} of $x$ is $wt_{Lee}(x) = wt_{H}(\Phi(x)) = wt_{H}(\beta) + wt_{H}(\alpha +\beta )$. The \textit{Lee distance} between $x$ and $y$ is $d_{Lee}(x, y) = wt_{Lee}(x-y)$. Note that the image of an $\mathcal{S}$-linear code under the above Gray map is an $\mathbb{F}_q$-linear code. Suppose $C$ is an $\mathcal{S}$-linear code of length $m$. Let $A_i$ be the cardinality of the set that contains all codewords of $C$ having Lee weight $i$, $0\leq i\leq 2m$. Then the homogeneous polynomial in two variables \[Lee_C(X, Y) = \sum_{c\in C}X^{2m-wt_{Lee}(c)}Y^{wt_{Lee}(c)}\] is called the \textit{Lee weight enumerator} of $C$ and the string $(1, A_1,\dots ,A_{2m})$ is called the \textit{Lee weight distribution} of $C$. Similarly, we may define Hamming weight enumerator and Hamming weight distribution of a linear code over a finite field. Two $\mathbb{F}_q$-linear codes $C_1$ and $C_2$ are said to be \textit{equivalent} if
		$$ C_2=\{(a_1c_{\pi(1)}, a_2c_{\pi(2)}, \dots, a_nc_{\pi(n)}) \ \vert \ (c_1, c_2, \dots, c_n)\in C_1\},$$
		where $\pi$ is a permutation of $\{1, 2, \dots, n\}$ and each $a_i \in \mathbb{F}_q^{\ast}$. In addition, if the total number of $i\geq 1$ such that $A_i\neq 0$ is $l$, then $C$ is called an $l$-\textit{weight linear code}. Every $1$-weight linear code is an equidistant linear code. Bonisoli characterized all equidistant linear codes over finite fields in \cite{Bonisoli}.
		\begin{lemma}[\cite{Bonisoli}\textnormal{(Bonisoli)}]\label{Bonisoli}
			Suppose $C$ is a equidistant linear code over $\mathbb{F}_q$. Then $C$ is equivalent to the $r$-fold replication of a simplex code, possibly with added $0$-coordinates.
		\end{lemma}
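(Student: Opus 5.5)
We prove Bonisoli's theorem: an equidistant $\mathbb{F}_q$-linear code of dimension $k$ is, up to equivalence, a replicated simplex code with possibly some zero coordinates added. The plan is to pass to the projective-geometry (multiset of columns) description of the code and show that every point of the projective space occurs with the same multiplicity.

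\emph{Step 1 (set-up).} Let $C$ have dimension $k$ and common nonzero weight $w$. Fix a generator matrix $G$ with columns $g_1,\dots,g_n\in\mathbb{F}_q^k$. Discard the zero columns; these give the added $0$-coordinates. Two columns that are nonzero scalar multiples of each other can be rescaled to coincide, since monomial equivalence permits this. So $C$ is determined up to equivalence by a multiplicity function $\mu$ on the points $P$ of $\mathrm{PG}(k-1,q)$. Each codeword is $c_x=(x\cdot g_j)_j$ for some $x\in\mathbb{F}_q^k\setminus\{0\}$. Its weight is $n'-\sum_{P\in H_x}\mu(P)$, where $n'$ is the number of nonzero columns and $H_x$ is the hyperplane $x^\perp$. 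Equidistance therefore says that every hyperplane has the same total multiplicity $n'-w$.

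\emph{Step 2 (the key step: constant hyperplane sums force constant $\mu$).} Consider the point–hyperplane incidence matrix $M$ of $\mathrm{PG}(k-1,q)$. It is square, since the numbers of points and hyperplanes agree, and it satisfies
\[
MM^{T}=(q^{k-2})\,\mathrm{Id}+\tfrac{q^{k-2}-1}{q-1}\,J .
\]
This follows because any two distinct points lie on exactly $\frac{q^{k-2}-1}{q-1}$ common hyperplanes, and each point lies on $\frac{q^{k-1}-1}{q-1}$ hyperplanes. For $k\ge 2$ this matrix is nonsingular: its eigenvalues are $q^{k-2}>0$ and $q^{k-2}+\frac{q^{k-2}-1}{q-1}N$, where $N$ is the number of points. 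Hence $M$ is invertible. The constant vector $\mu\equiv\lambda$ satisfies $M^{T}\mu=(\text{const})\mathbf{1}$, and by invertibility it is the unique solution. So $\mu$ is constant, say $\mu\equiv\lambda$.

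As an elementary alternative, one can use a double count over the codewords vanishing at a fixed column, or the standard character-sum identity: summing $\sum_{x}\mathrm{wt}(c_x)$ gives $n'(q^k-q^{k-1})$. The invertibility argument is cleaner, and I expect it to be the only real obstacle. The case $k=1$ is trivial, since then every point is the unique point.

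\emph{Step 3 (conclusion).} Every point of $\mathrm{PG}(k-1,q)$ appears $\lambda$ times among the normalized columns. Thus $G$ is, up to column permutation and scaling, $\lambda$ copies of the simplex generator matrix together with zero columns. This is exactly the $\lambda$-fold replication of the simplex code with added zero coordinates, as claimed (the statement's ``$r$-fold'' corresponds to $\lambda$ here).
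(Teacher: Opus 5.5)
The paper gives no proof of this lemma; it is quoted from Bonisoli as background. There is therefore nothing in the paper to compare against. Your argument is a correct, self-contained proof, with two spots that should be tightened.

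\emph{Step 1.} The fact you actually need is that $c_x\neq 0$ for every $x\neq 0$, which holds because the rows of $G$ are linearly independent. Only then is every hyperplane $x^{\perp}$ realized by a codeword of weight $w$. Your sentence ``each codeword is some $c_x$'' states the converse direction, which is not the one used.

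\emph{Step 2.} Exhibit the constant solution explicitly rather than asserting it. Since $M^{T}\mathbf{1}=\frac{q^{k-1}-1}{q-1}\mathbf{1}$ with a nonzero coefficient for $k\ge 2$, the vector $\lambda\mathbf{1}$ with $\lambda=(n'-w)\frac{q-1}{q^{k-1}-1}$ solves $M^{T}y=(n'-w)\mathbf{1}$. Invertibility of $M$ then forces $\mu=\lambda\mathbf{1}$. Integrality and positivity of $\lambda$ follow automatically, because $\lambda=\mu(P)$ for any point $P$ and $k\ge 1$.

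\emph{Comparison with the elementary route.} Your incidence-matrix argument proves a more general geometric fact: any multiset of points of $\mathrm{PG}(k-1,q)$ with constant hyperplane sums is a multiple of the full point set. The price is the eigenvalue computation for $MM^{T}$.

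The classical elementary route avoids linear algebra. Fix a nonzero column $g_i$. The subcode $\{c_x : x\cdot g_i=0\}$ has $q^{k-1}-1$ nonzero words, all of weight $w$. Counting its total weight column by column gives $(q^{k-1}-1)w=(n'-m_i)(q-1)q^{k-2}$, where $m_i$ is the number of columns proportional to $g_i$. Combine this with the first moment $(q^{k}-1)w=n'(q-1)q^{k-1}$. This yields $m_i=n'\frac{q-1}{q^{k}-1}$ for every point that occurs. Hence exactly $\frac{q^{k}-1}{q-1}$ points occur, which is all of them.

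The first-moment identity you cite as an alternative is not sufficient by itself. It only relates $w$ and $n'$; it is the second count that forces the multiplicities to be constant.
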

		An $[n, k, d]$-linear code $C$ is called \textit{distance optimal} if there exist no $[n,k,d+1]$-linear codes (see \cite{wchuffman}). Next we recall the Griesmer bound.
		\begin{lemma}[\cite{Griesmer}
			\textnormal{(Griesmer Bound)}]\label{griesmerbound} 
			If $C$ is an $[n,k,d]$-linear code over $\mathbb{F}_q$, then we have
			\begin{equation}\label{GriesmerBound}
				\sum\limits_{i=0}^{k-1}\left\lceil \frac{d}{q^i}\right\rceil \leq n,
			\end{equation}  	
			where $\lceil \cdot \rceil$ denotes the ceiling function.
		\end{lemma}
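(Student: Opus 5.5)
The statement immediately above is the Griesmer bound (Lemma~\ref{griesmerbound}). I would prove it by the standard residual-code argument, with induction on the dimension $k$.

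\textbf{Base case.} For $k=1$ the inequality reads $d\le n$, which is immediate.

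\textbf{Induction step, setup.} For $k\ge 2$, let $C$ be an $[n,k,d]$ code over $\mathbb{F}_q$ and choose a codeword $c\in C$ with $wt_H(c)=d$. After a permutation of coordinates, which does not change the parameters, assume $c$ is supported on the first $d$ coordinates. Let $C'$ be the residual code: the punctured code obtained by deleting these $d$ coordinates from every codeword. It has length $n-d$.

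\textbf{The two claims.} The whole proof rests on showing that $C'$ has dimension $k-1$ and minimum distance $d'\ge \lceil d/q\rceil$.

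\emph{Dimension.} Consider the restriction map $C\to C'$. Suppose a codeword $x\in C$ lies in its kernel, so $x$ vanishes outside the support of $c$. If $x$ were not a scalar multiple of $c$, then some combination $x-\lambda c$ would be nonzero and have weight strictly less than $d$. This contradicts the minimality of $d$. Hence the kernel is $\langle c\rangle$ and $\dim C'=k-1$.

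\emph{Distance.} Take $x\in C$ whose image $x'\in C'$ is nonzero, and write $w=wt_H(x')$. Restrict $x$ to the first $d$ coordinates; on this support, $c$ is nonzero in every position. By pigeonhole over $\lambda\in\mathbb{F}_q$, some $\lambda$ makes $x-\lambda c$ agree with $x$ on at least $\lceil d/q\rceil$ of those positions, because each position is killed by exactly one value of $\lambda$. Hence $x-\lambda c$ has at most $d-\lceil d/q\rceil$ nonzero entries among the first $d$ coordinates. It is nonzero, since $x'\neq 0$, so its weight is at least $d$. This gives $w+d-\lceil d/q\rceil\ge d$, that is, $w\ge\lceil d/q\rceil$.

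\textbf{Conclusion.} Apply the induction hypothesis to $C'$ and use $\lceil\lceil d/q\rceil/q^i\rceil=\lceil d/q^{i+1}\rceil$. This gives
\[
n-d\ \ge\ \sum_{i=0}^{k-2}\left\lceil \frac{d}{q^{i+1}}\right\rceil ,
\]
which is exactly the desired inequality after moving $d$ (the $i=0$ term) to the right-hand side.

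\textbf{Main obstacle.} The step needing the most care is the distance bound for $C'$, specifically choosing $\lambda$ so that as few as possible of the support positions of $c$ are cancelled. The dimension claim is routine once minimality of $d$ is used.
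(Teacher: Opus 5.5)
The paper gives no proof of this lemma. It is quoted from \cite{Griesmer} and used only as a tool for the later optimality claims, so there is no argument of the authors to compare with.

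Your proof is the standard residual-code induction (as in \cite{wchuffman}), and it is correct:
\begin{itemize}
\item The kernel of the restriction map $C\to C'$ is exactly $\langle c\rangle$, by minimality of $d$. To make the ``some combination'' explicit, take $\lambda=x_i/c_i$ for a position $i$ in the support of $c$.
\item Pigeonholing the $d$ positions by the value $x_j/c_j\in\mathbb{F}_q$ gives $d(C')\ge\lceil d/q\rceil$.
\item Monotonicity of $\lceil\cdot\rceil$ together with $\lceil\lceil d/q\rceil/q^i\rceil=\lceil d/q^{i+1}\rceil$ closes the induction.
\end{itemize}

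One wording slip needs fixing. In the distance step you want $\lambda c$ to agree with $x$ on at least $\lceil d/q\rceil$ of the first $d$ positions. Equivalently, $x-\lambda c$ should vanish there. As written, ``$x-\lambda c$ agree with $x$'' holds on a position only when $\lambda=0$, since $c_j\neq 0$. That is not what your next sentence uses.
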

		A linear code is called a \textit{Griesmer code} if equality holds in Eq. \eqref{GriesmerBound}. Note that every Griesmer code is distance optimal, but not conversely.
        
    The following result characterizes the minimum Hamming distance of a linear code via its parity-check matrix.
    \begin{theorem}[{\cite[Corollary~1.4.14]{wchuffman}}]\label{ThmDual}
    Let $C$ be a linear code over $\mathbb{F}_q$ with parity-check matrix $H$. Then the following hold:
    \begin{enumerate}
    \item The minimum Hamming distance of $C$ satisfies $d(C)\geq d$ if and only if every set of $d-1$ columns of $H$ is linearly independent over $\mathbb{F}_q$.

    \item The minimum Hamming distance of $C$ satisfies $d(C)\leq d$ if and only if there exists a collection of $d$ columns of $H$ that is linearly dependent.
    
    \item The minimum Hamming distance of $C$ is equal to $d$ if and only if every set of $d-1$ columns of $H$ is linearly independent over $\mathbb{F}_q$, while there exists a collection of $d$ columns of $H$ that is linearly dependent.
    \end{enumerate}
    \end{theorem}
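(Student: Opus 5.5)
The plan rests on one basic fact: for a codeword $c=xH^{T}$ we have $cH^{T}=0$ exactly when $\sum_{j} c_j h_j=0$, where $h_1,\dots,h_n$ are the columns of $H$. Indeed, since $H$ is a parity-check matrix,
\[
C=\{x\in\mathbb{F}_q^n : Hx^{T}=0\},
\]
and $Hx^{T}=\sum_{j=1}^{n} x_j h_j$. Hence a nonzero codeword whose support is $J$ is the same thing as a nontrivial vanishing linear combination $\sum_{j\in J} x_j h_j=0$ in which every coefficient $x_j$, $j\in J$, is nonzero. Hamming weights of codewords therefore translate directly into sizes of dependent sets of columns of $H$, and each part of the statement follows from this.

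For part 2, suppose first that $d(C)\le d$. Pick a nonzero codeword $x$ with $wt_H(x)=d(C)\le d$, and let $J=\mathrm{supp}(x)$. Then $\sum_{j\in J}x_jh_j=0$ is a nontrivial relation, so the columns indexed by $J$ are linearly dependent. If $|J|<d$, enlarge $J$ to any set of $d$ columns; this is possible when $d\le n$, which is the implicit range of the statement. A superset of a dependent set is still dependent, so we obtain $d$ dependent columns. Conversely, suppose some $d$ columns $\{h_j\}_{j\in J}$ are dependent, say $\sum_{j\in J}x_jh_j=0$ with not all $x_j$ zero. Set $x_j=0$ for $j\notin J$. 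Then $x$ is a nonzero codeword of weight at most $d$, so $d(C)\le d$.

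For part 1, note that $d(C)\ge d$ is equivalent to the failure of $d(C)\le d-1$. By part 2 applied with $d-1$, this means no set of $d-1$ columns is dependent, i.e.\ every set of $d-1$ columns is independent. One degenerate case needs checking: if $C$ has no nonzero codewords, take the convention $d(C)=\infty$, or equivalently read $H$ as having full column rank; the equivalence still holds trivially. Part 3 is then the conjunction of part 1 and part 2, since $d(C)=d$ if and only if both $d(C)\ge d$ and $d(C)\le d$ hold.

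The main obstacle is only bookkeeping. We must handle the padding step, where a dependent set of size less than $d$ is extended to exactly $d$ columns, and the zero-code and $d>n$ edge cases. Both are resolved by the monotonicity of dependence and by the implicit assumption $1\le d\le n$.
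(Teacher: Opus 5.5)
Your proof is correct and is the standard argument behind the cited Huffman--Pless corollary; the paper itself gives no proof. The argument rests on the fact that a nonzero codeword of $C$ with support $J$ is exactly a nontrivial dependence relation among the columns $h_j$, $j\in J$, and you handle the padding step and the $d\le n$ and zero-code caveats properly. The only blemish is the opening phrase ``for a codeword $c=xH^{T}$'', which is not how codewords of $C$ are parametrized. This is harmless, because you immediately switch to the correct description $C=\{x : Hx^{T}=0\}$, which is all the argument uses.
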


    \begin{definition}
            Let $C$ be a linear code over $\mathbb{F}_q$. If the Hamming weight of every codeword in $C$ is divisible by a positive integer $l$, then $C$ is called \emph{$l$-divisible}. The largest such integer $l$ is referred to as the \emph{divisibility constant} of $C$.
    \end{definition}
		For $m\in \mathbb{N}$, we shall write $[m]$ to denote the set $\{1, 2,\dots ,m\}$. For $w\in \mathbb{F}_{q}^m$, the set $\textnormal{Supp}(w)=\{ i\in [m]: w_i \neq  0\}$ is called the \textit{support} of $w$. Note that the Hamming weight of $w\in \mathbb{F}_q^m$ is $wt_H(w)=\vert \textnormal{Supp}(w)\vert $. For $v,w\in \mathbb{F}_{q}^m$, one says that $v$ \textit{covers} $w$ if $\textnormal{Supp}(w)\subseteq \textnormal{Supp}(v)$. If $v$ covers $w$, we write $w\preceq v$.
		\begin{definition}
			Let $C$ be a linear code over $\mathbb{F}_q$. An element $v\in C\setminus \{0\}$ is called \textit{minimal} if $w\preceq v$ and $w\in C\setminus \{0\}$ $\implies$ $w = \lambda v$ for some $\lambda \in \mathbb{F}_q^{\ast}$. If each nonzero codeword of $C$ is minimal then $C$ is called a \textit{minimal code}.
		\end{definition}
		Now we recall a result from \cite{ABcondition} which is a sufficient condition for a linear code over $\mathbb{F}_q$ to be minimal.
		\begin{lemma}[\cite{ABcondition}\textnormal{(Ashikhmin-Barg)}]\label{minimal_lemma}
			Let $C$ be a linear code over $\mathbb{F}_q$. Denote by $wt_0$ and $wt_\infty$ the minimum and maximum Hamming weights, respectively, among all nonzero codewords of $C$. If $\frac{wt_0}{wt_{\infty}}> \frac{q-1}{q}$, then $C$ is minimal.
		\end{lemma}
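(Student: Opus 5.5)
The statement is the Ashikhmin--Barg criterion: if $\frac{wt_0}{wt_\infty}>\frac{q-1}{q}$, then $C$ is minimal. I will argue by contradiction. Suppose some nonzero $v\in C$ is not minimal. Then there is a nonzero $w\in C$ with $w\preceq v$ and $w\neq \lambda v$ for every $\lambda\in\mathbb{F}_q^{\ast}$.

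The key construction is the family $v-\lambda w$ for $\lambda\in\mathbb{F}_q^{\ast}$. Each of these lies in $C$. None is zero, since $v=\lambda w$ would give $w=\lambda^{-1}v$, contrary to the choice of $w$. Hence every one of them has Hamming weight at least $wt_0$.

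Next I bound their total weight.
\begin{itemize}
\item For $i\notin \mathrm{Supp}(v)$ we also have $i\notin\mathrm{Supp}(w)$, since $w\preceq v$. So every $v-\lambda w$ vanishes at $i$.
\item For $i\in \mathrm{Supp}(v)\setminus\mathrm{Supp}(w)$, every $v-\lambda w$ is nonzero at $i$. This contributes $q-1$ to the sum for each such $i$.
\item For $i\in\mathrm{Supp}(w)$, exactly one $\lambda\in\mathbb{F}_q^{\ast}$ (namely $\lambda=v_i/w_i$) makes the entry zero. This contributes $q-2$ for each such $i$.
\end{itemize}
Summing over the three cases gives
\[
\sum_{\lambda\in\mathbb{F}_q^{\ast}} wt_H(v-\lambda w)=(q-1)\,wt_H(v)-wt_H(w).
\]

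Now combine the bounds. The left side is at least $(q-1)wt_0$. On the right, $wt_H(v)\le wt_\infty$ and $wt_H(w)\ge wt_0$. This yields $(q-1)wt_0\le (q-1)wt_\infty-wt_0$, that is, $q\,wt_0\le (q-1)wt_\infty$. This contradicts the hypothesis $\frac{wt_0}{wt_\infty}>\frac{q-1}{q}$, so every nonzero codeword is minimal.

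The only step that needs care is the counting identity. In particular, the hypothesis $w\preceq v$ is exactly what forces the entries of $v-\lambda w$ outside $\mathrm{Supp}(v)$ to vanish, and the case split above must cover each coordinate exactly once.
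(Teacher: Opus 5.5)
The paper gives no proof of this lemma; it is quoted from Ashikhmin and Barg. There is therefore no internal argument to compare against, and your proposal has to stand on its own.

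It does. Your proof is correct and complete, and it is the standard averaging argument for this criterion. Equivalently, you could choose by pigeonhole a single $\lambda$ that makes $v-\lambda w$ vanish on at least $wt_H(w)/(q-1)$ coordinates of $\mathrm{Supp}(w)$; the two formulations give the same inequality.

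The details all hold:
\begin{itemize}
\item Your three-way split of coordinates is a genuine partition, precisely because $\mathrm{Supp}(w)\subseteq\mathrm{Supp}(v)$.
\item The value $\lambda=v_i/w_i$ lies in $\mathbb{F}_q^{\ast}$ because $v_i\neq 0$ on $\mathrm{Supp}(w)$.
\item No $v-\lambda w$ is zero, since $w$ is not a scalar multiple of $v$.
\end{itemize}
The identity $\sum_{\lambda\in\mathbb{F}_q^{\ast}}wt_H(v-\lambda w)=(q-1)\,wt_H(v)-wt_H(w)$, combined with $wt_H(v-\lambda w)\ge wt_0$, $wt_H(v)\le wt_\infty$ and $wt_H(w)\ge wt_0$, then gives $q\,wt_0\le (q-1)\,wt_\infty$. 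This contradicts the hypothesis. The argument also covers $q=2$ without change, since there the contribution $q-2$ from $\mathrm{Supp}(w)$ is simply zero.
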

		Here we give the standard orthogonality relation of additive characters over the vector space~$\mathbb{F}_p^r$.
		\begin{lemma}[\cite{Lidl}]\label{OrthChar}
			Let $\omega_p$ be a primitive $p$-th root of unity and let $x\in \mathbb{F}_p^r$. Then
			\begin{equation*}
				\sum_{u\in \mathbb{F}_p^r}\omega_p^{\langle u,x\rangle_p}
				=
				\begin{cases}
					p^r, & \text{if } x=0,\\[4pt]
					0, & \text{if } x\neq 0,
				\end{cases}
			\end{equation*}
			where $\langle u,x\rangle_p$ denotes the Euclidean inner product over $\mathbb{F}_p$.
		\end{lemma}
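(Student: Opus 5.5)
This is the standard orthogonality relation for additive characters of $\mathbb{F}_p^r$ (Lemma~\ref{OrthChar}). I would prove it by splitting into the two cases $x=0$ and $x\neq 0$, and in the second case reducing the sum to a product of one-dimensional character sums.

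First, if $x=0$, then $\langle u,x\rangle_p=0$ for every $u\in\mathbb{F}_p^r$. Each term $\omega_p^{0}=1$, so the sum equals $|\mathbb{F}_p^r|=p^r$.

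Now suppose $x=(x_1,\dots,x_r)\neq 0$. Since $\langle u,x\rangle_p=\sum_{j=1}^r u_jx_j$ and $\omega_p$ has order $p$, exponentiation respects addition modulo $p$. The sum therefore factorizes as
\begin{equation*}
\sum_{u\in \mathbb{F}_p^r}\omega_p^{\langle u,x\rangle_p}=\prod_{j=1}^{r}\Bigl(\sum_{u_j\in\mathbb{F}_p}\omega_p^{u_jx_j}\Bigr).
\end{equation*}
Pick an index $k$ with $x_k\neq 0$. As $u_k$ runs over $\mathbb{F}_p$, so does $u_kx_k$, because multiplication by a nonzero element of the field $\mathbb{F}_p$ is a bijection. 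Hence the $k$-th factor equals $\sum_{t=0}^{p-1}\omega_p^{t}$. This is a geometric sum, equal to $(\omega_p^p-1)/(\omega_p-1)=0$ because $\omega_p\neq 1$ and $\omega_p^p=1$. The whole product therefore vanishes.

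An alternative to the factorization is the shift trick. Choose $u_0$ with $\langle u_0,x\rangle_p=1$; such a $u_0$ exists because the map $u\mapsto\langle u,x\rangle_p$ is a nonzero linear functional and hence surjective. Writing $S$ for the sum and substituting $u\mapsto u+u_0$ gives $S=\omega_p S$, so $S=0$.

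There is no real obstacle here. The only point needing care is that $\omega_p^{a}$ is well defined for $a\in\mathbb{F}_p$, which holds precisely because $\omega_p^p=1$. This is also what makes the map $a\mapsto\omega_p^{a}$ a homomorphism from $(\mathbb{F}_p,+)$ into $\mathbb{C}^\ast$, which justifies both the factorization and the shift.
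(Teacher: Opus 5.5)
Your proof is correct; both the factorization argument and the translation argument are complete. The paper gives no proof of this lemma and only cites Lidl--Niederreiter. There the statement is an instance of the orthogonality relation $\sum_{g\in G}\chi(g)=0$ for a nontrivial character $\chi$ of a finite abelian group $G$. That relation is proved by exactly your second (translation) argument: here $\chi(u)=\omega_p^{\langle u,x\rangle_p}$ is nontrivial precisely when $x\neq 0$, and $\chi(u_0)S=\sum_{u}\chi(u+u_0)=S$ with $\chi(u_0)=\omega_p\neq 1$.

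Your main route is genuinely different. It uses the product structure $\mathbb{F}_p^r=\mathbb{F}_p\times\cdots\times\mathbb{F}_p$ to factor the sum, reducing everything to the single identity $1+\omega_p+\cdots+\omega_p^{p-1}=0$. That is very concrete, but it depends on a choice of coordinates. The translation argument is basis-free, so it applies verbatim to any nonzero $\mathbb{F}_p$-linear functional on any finite $\mathbb{F}_p$-space.

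For instance, the translation argument gives the paper's next lemma on $\sum_{x\in V}\omega_p^{\langle u,\langle y,x\rangle_q\rangle_p}$ at once. For $u\neq\mathbf{0}$ and $y\notin V^{\perp}$, the map $x\mapsto\langle u,\langle y,x\rangle_q\rangle_p$ is a nonzero $\mathbb{F}_p$-linear functional on $V$. The factorization approach would first require choosing a basis of $V$ and rewriting the exponent in those coordinates.
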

		Let $V$ be an $k$-dimensional $\mathbb{F}_q$-subspace of $\mathbb{F}_q^m$. The dual of $V$ is
		\[
		V^{\perp}=\{x\in \mathbb{F}_q^m:\langle x,y\rangle_q=0 \text{ for any } y\in V\}.
		\]
		
		It is clear that $V^{\perp}$ is an $(m-k)$-dimensional $\mathbb{F}_q$-subspace of $\mathbb{F}_q^m$. 
		Now we have the following lemma.
		\begin{lemma}
			Let $m, r$ be positive integers and $q=p^r$. Suppose $V$ is an $k$-dimensional $\mathbb{F}_q$-subspace of
			$\mathbb{F}_q^m$ and $u \in \mathbb{F}_p^r \setminus \{\textbf{0}\}$. Then we have
			\[
			\sum_{x\in V} w_p^{\langle u,\langle y,x\rangle_q\rangle_p}
			=
			\begin{cases}
				q^k, & \text{if } y\in V^\perp;\\
				0, & \text{otherwise.}
			\end{cases}
			\]
		\end{lemma}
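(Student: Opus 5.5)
The plan is to reduce the sum to the orthogonality relation of Lemma~\ref{OrthChar} through the linear map $x\mapsto \langle y,x\rangle_q$. Here $y\in\mathbb{F}_q^m$ is fixed. We identify $\mathbb{F}_q$ with $\mathbb{F}_p^r$ as an $\mathbb{F}_p$-vector space, which is already implicit in the statement, since $\langle u,\cdot\rangle_p$ is applied to an element of $\mathbb{F}_q$. The function $f_y\colon V\to\mathbb{F}_q$, $f_y(x)=\langle y,x\rangle_q$, is $\mathbb{F}_q$-linear. Composing it with $\langle u,\cdot\rangle_p$ gives an $\mathbb{F}_p$-linear functional $\psi(x)=\langle u,\langle y,x\rangle_q\rangle_p$ on $V$. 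Hence $x\mapsto \omega_p^{\psi(x)}$ is a character of the additive group $(V,+)$, and the sum in question is the sum of this character over $V$.

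\emph{Case $y\in V^\perp$.} Then $\langle y,x\rangle_q=0$ for every $x\in V$. Every term equals $\omega_p^0=1$, so the sum is $|V|=q^k$.

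\emph{Case $y\notin V^\perp$.} Then $f_y$ is a nonzero $\mathbb{F}_q$-linear functional on $V$, so it is surjective onto $\mathbb{F}_q$. Each fibre is a coset of $\ker f_y$ and has size $q^{k-1}$. Grouping the terms by the value $z=f_y(x)$, the sum becomes
\[
q^{k-1}\sum_{z\in\mathbb{F}_q}\omega_p^{\langle u,z\rangle_p}=q^{k-1}\sum_{z\in\mathbb{F}_p^r}\omega_p^{\langle z,u\rangle_p}.
\]
By Lemma~\ref{OrthChar} with $x=u\neq \mathbf{0}$, this is $0$.

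The only point needing care is the identification step: one must fix an $\mathbb{F}_p$-basis of $\mathbb{F}_q$ so that $\langle u,\cdot\rangle_p$ is well defined and $\mathbb{F}_p$-linear. One must also check that surjectivity of $f_y$ makes $z$ run uniformly over all of $\mathbb{F}_p^r$. Both are routine. The argument never uses anything beyond $u\neq\mathbf{0}$, which is exactly what Lemma~\ref{OrthChar} needs for the vanishing.
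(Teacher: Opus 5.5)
Your proof is correct and complete. The case $y\in V^\perp$ is immediate. For $y\notin V^\perp$, surjectivity of the nonzero $\mathbb{F}_q$-linear functional $f_y$ on $V$ gives fibres of size $q^{k-1}$, which reduces the sum to $q^{k-1}$ times the full character sum over $\mathbb{F}_p^r$; that sum vanishes by Lemma~\ref{OrthChar} because $u\neq\mathbf{0}$.

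The paper states this lemma without proof, treating it as a standard consequence of Lemma~\ref{OrthChar}, so there is no separate route to compare against. Your fibre-counting argument is the natural one. It is equivalent to observing that $x\mapsto\omega_p^{\langle u,\langle y,x\rangle_q\rangle_p}$ is a nontrivial additive character of $V$ whenever $y\notin V^\perp$.
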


		\subsection{Simplicial Complexes}
		In this subsection, we begin by recalling the basic definition and fundamental concepts of simplicial complexes that will be needed in the subsequent sections.
		\begin{definition}
			A subset $\Delta$ of $\mathbb{F}_{q}^m$ is called a \textit{simplicial complex} if $v\in \Delta, w\in \mathbb{F}_q^m$ and $w\preceq  v$ $\implies$ $w\in \Delta$. An element $v\in \Delta$ is called a \textit{maximal element of} $\Delta$ if $v$  is not properly covered
			by any other element of $\Delta$.
		\end{definition} 
		Consider the map $\psi: \mathbb{F}_q^m\longrightarrow 2^{[m]}$ is defined as $\psi(w)=\textnormal{Supp}(w)$, where $2^{[m]}$ denotes the power set of $[m]$. Observe that the map $\psi$ is onto but not one-one for $q\geq 3$, and ${\rm Supp}(w) = {\rm Supp}(\lambda w)$ for $\lambda \in \mathbb{F}_q^{\ast}$, $w \in \mathbb{F}_q^m$. Now onwards, we will write $w$ instead of Supp($w$) whenever $w$ is a maximal element of $\Delta$.
		
		A simplicial complex can have more than one maximal elements. Let $M\subseteq [m]$. The simplicial complex generated by $M$ is denoted by $\Delta_{M}$ and is defined as 
		\begin{equation}
			\Delta_{M}=\{w\in \mathbb{F}_{q}^m \vert \textnormal{ Supp}(w)\subseteq M\}.
		\end{equation}
		Note that $\psi^{-1}(M)$ is the only maximal element of $\Delta_M$, and $\vert \Delta_{M}\vert = q^{\vert M\vert}$. Observe that $\Delta_{M}$ is a vector space over $\mathbb{F}_q$ of dimension $\vert M \vert$.\par
		
		Suppose that $\Delta_M$ is a simplicial complex of $\mathbb{F}_q^m$ with one maximal element $M\subseteq [m]$, then one can immediately get that
		\[
		\Delta_M^{\perp}=\Delta_{[m]\setminus M},
		\]
		since $\Delta_M$ is an $|M|$-dimensional $\mathbb{F}_q$-subspace of $\mathbb{F}_q^m$.\par 
		
		We now recall a few lemmas from \cite{Wu2026}, which will be useful in the next section.
		\begin{lemma}[\cite{Wu2026}]
			Let $m, s$ be positive integers and $M_i\subseteq [m]$ for $i\in [s]$.
			\begin{enumerate}
				\item For any $i, j \in [s]$, if $\Delta_{M_i} \subseteq \Delta_{M_j}$, then $\Delta_{M_j}^{\perp}\subseteq \Delta_{M_i}^{\perp}$.
				
				\item
				\(
				\bigcap_{i\in [s]}\Delta_{M_i}^{\perp}
				=
				\Delta_{\bigcup_{i\in [s]}M_i}^{\perp}.
				\)
			\end{enumerate}
		\end{lemma}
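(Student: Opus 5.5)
We prove the two parts separately, working throughout with the identification $\Delta_{M}=\{w\in\mathbb{F}_q^m : \textnormal{Supp}(w)\subseteq M\}$ and the fact, recalled above, that $\Delta_M$ is the $|M|$-dimensional coordinate subspace with $\Delta_M^{\perp}=\Delta_{[m]\setminus M}$.

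For part 1 the argument is purely formal and does not use the coordinate structure. Assume $\Delta_{M_i}\subseteq\Delta_{M_j}$ and take $x\in\Delta_{M_j}^{\perp}$. Then $\langle x,y\rangle_q=0$ for every $y\in\Delta_{M_j}$. In particular this holds for every $y\in\Delta_{M_i}$, so $x\in\Delta_{M_i}^{\perp}$. This is just the order-reversing property of orthogonal complements.

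For part 2 we first use $\Delta_{M_i}^{\perp}=\Delta_{[m]\setminus M_i}$ to rewrite the left-hand side as $\bigcap_{i}\Delta_{[m]\setminus M_i}$. A vector $w$ lies in this intersection if and only if $\textnormal{Supp}(w)\subseteq [m]\setminus M_i$ for every $i\in[s]$. That is equivalent to $\textnormal{Supp}(w)\subseteq\bigcap_i([m]\setminus M_i)=[m]\setminus\bigcup_i M_i$, using De Morgan's law. Hence the intersection equals $\Delta_{[m]\setminus\bigcup_i M_i}$. Applying the same duality identity once more, this is $\Delta_{\bigcup_i M_i}^{\perp}$, as required.

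As a sanity check, and as an alternative route if one prefers to avoid the coordinate description, note the general identity $\bigcap_i V_i^{\perp}=(\sum_i V_i)^{\perp}$ for subspaces $V_i$. Together with $\sum_i\Delta_{M_i}=\Delta_{\bigcup_i M_i}$ this gives part 2 directly. The identity $\sum_i\Delta_{M_i}=\Delta_{\bigcup_i M_i}$ holds because the span of the coordinate vectors $e_k$ with $k\in M_i$, taken over all $i$, is exactly the span of the $e_k$ with $k\in\bigcup_i M_i$.

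There is no real obstacle here. The only step needing care is the duality $\Delta_M^{\perp}=\Delta_{[m]\setminus M}$, which is stated earlier in the paper. If one wanted to verify it directly: $\Delta_{[m]\setminus M}\subseteq\Delta_M^{\perp}$ because the two sets have disjoint supports, and equality then follows by comparing dimensions, $m-|M|$ on both sides.
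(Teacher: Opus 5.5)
Your proof is correct. The paper gives no proof of this lemma; it imports it from \cite{Wu2026}. The only supporting fact it records is $\Delta_M^{\perp}=\Delta_{[m]\setminus M}$, stated just beforehand, and that is exactly the fact your part 2 relies on. Your verification of it (disjoint supports give $\Delta_{[m]\setminus M}\subseteq\Delta_M^{\perp}$, then both sides have dimension $m-|M|$) fills in a step the paper calls immediate. Your alternative route via $\bigcap_i V_i^{\perp}=\bigl(\sum_i V_i\bigr)^{\perp}$ and $\sum_i\Delta_{M_i}=\Delta_{\bigcup_i M_i}$ is equally valid, and it avoids needing the explicit form of the orthogonal complement.
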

		
		The following lemma can be proved using the principle of inclusion-exclusion.
		\begin{lemma}[\cite{Wu2026}]\label{generatinglemma}
			Suppose $\Delta\subseteq \mathbb{F}_{q}^m$ is a simplicial complex and $\mathcal{M}$ is the collection of its maximal elements. Then 
			\begin{equation}
				\vert \Delta\vert =\sum\limits_{\emptyset\neq S\subseteq \mathcal{M}}(-1)^{\vert S\vert +1}q^{\vert \cap S\vert},
			\end{equation}
			where $\cap S=\bigcap\limits_{M\in S}\textnormal{Supp}(M)$.
		\end{lemma}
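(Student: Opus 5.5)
The plan is to write $\Delta$ as the union of the subspaces $\Delta_M$, $M\in\mathcal{M}$, and count that union by inclusion--exclusion. First I will check that $\Delta=\bigcup_{M\in\mathcal{M}}\Delta_{M}$, where $\Delta_M$ means $\Delta_{\textnormal{Supp}(M)}$. If $w\in\Delta$, then in the finite poset $(\Delta,\preceq)$ we can climb from $w$ to some maximal element $M$ with $w\preceq M$. This gives $\textnormal{Supp}(w)\subseteq\textnormal{Supp}(M)$, hence $w\in\Delta_{M}$. Conversely, if $w\in\Delta_M$ then $w\preceq M\in\Delta$, so $w\in\Delta$ because $\Delta$ is a simplicial complex. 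One point needs care: several maximal elements may share a support (they differ by nonzero scalars). With the paper's convention of writing $M$ for $\textnormal{Supp}(M)$, I will read $\mathcal{M}$ as the set of distinct supports of maximal elements, so that repeated supports are not counted twice.

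Next I will apply the inclusion--exclusion principle to the finite union:
\[
|\Delta|=\sum_{\emptyset\neq S\subseteq\mathcal{M}}(-1)^{|S|+1}\Bigl|\bigcap_{M\in S}\Delta_M\Bigr|.
\]
The remaining step is to identify the intersections. A vector $w$ lies in every $\Delta_M$ with $M\in S$ exactly when $\textnormal{Supp}(w)\subseteq\textnormal{Supp}(M)$ for all $M\in S$, that is, when $\textnormal{Supp}(w)\subseteq\cap S$. Therefore $\bigcap_{M\in S}\Delta_M=\Delta_{\cap S}$. This set is the coordinate subspace of dimension $|\cap S|$, so its size is $q^{|\cap S|}$, using $|\Delta_M|=q^{|M|}$ as noted earlier. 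Substituting this gives the formula.

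I do not expect a real obstacle here; the argument is routine. The only subtle points are the bookkeeping about maximal elements versus their supports, and the empty intersection. When $\cap S=\emptyset$, the set $\Delta_{\emptyset}=\{0\}$ has size $q^0=1$, which is consistent with the formula, so that case needs no separate treatment.
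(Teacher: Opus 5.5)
Your proposal is correct and follows the same route the paper indicates: the paper simply says the lemma follows from inclusion--exclusion, which is exactly your argument of writing $\Delta=\bigcup_{M\in\mathcal{M}}\Delta_M$ and using $\bigcap_{M\in S}\Delta_M=\Delta_{\cap S}$, a set of size $q^{|\cap S|}$. One small point: $\preceq$ is only a preorder on vectors, since $w$ and $\lambda w$ cover each other, but your climbing step is still valid because each step strictly enlarges the support inside the finite set $[m]$.
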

		For $M\subseteq [m]$, define a $\{0,1\}$-valued function
		\(
		\Psi(\cdot \mid \Delta_M): \mathbb{F}_q^m \longrightarrow \{0,1\}
		\)
		by
		\begin{equation}\label{EqPsi}
			\Psi(\beta\mid \Delta_M)=
			\begin{cases}
				1, & \text{if } \beta\in \Delta_M^{\perp},\\
				0, & \text{otherwise},
			\end{cases}
		\end{equation}
		and let $\delta_{i,j}$ denote the Kronecker delta function:
		\begin{equation}\label{KronekEq}
			\delta_{i,j}=
			\begin{cases}
				1, & \text{if } i=j,\\
				0, & \text{otherwise}.
			\end{cases}
		\end{equation}
		Then we have the following results.
		\begin{lemma}[\cite{Wu2026}]\label{Lem-8}
			Let $q = p^r$ and $\Delta$ be a simplicial complex of  $\mathbb{F}_q^m$ with $\mathcal{M} = \{M_1, M_2, \ldots, M_s\}$ the set of maximal  elements of $\Delta$. For any $\beta \in \mathbb{F}_q^m$, we have
			
			\[
			\sum_{u\in\mathbb{F}_p^r}\ \sum_{t\in\Delta}
			w_p^{\langle u, \langle \beta, t\rangle_q\rangle_p}
			=
			\sum_{\emptyset \neq S \subseteq \mathcal{M}}
			(-1)^{\vert S \vert + 1}
			q^{\vert \cap S \vert }
			\left(
			1 + (q-1) \Psi(\beta \mid \Delta_{\cap S})
			\right)
			\]
			
			and
			
			\[
			\sum_{u \in \mathbb{F}_p^r}\ \sum_{t \in \Delta^c}
			w_p^{\langle u, \langle \beta, t \rangle_q\rangle_p}
			=
			q^m+(q-1)q^m\delta_{\mathbf{0}, \beta}
			-
			\sum_{u\in\mathbb{F}_p^r}\ \sum_{t \in \Delta}
			w_p^{\langle u, \langle \beta, t\rangle_q\rangle_p},
			\]
			where $\cap S = \bigcap_{M \in S} M,$ and $\Psi(\beta\mid \Delta_M)$ and $\delta_{i, j}$ are defined by Eqs. \eqref{EqPsi} and \eqref{KronekEq}, respectively.
		\end{lemma}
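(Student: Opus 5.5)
The first identity will come from inclusion–exclusion over the maximal elements, applied pointwise to the character sum, followed by the character-sum lemma for subspaces. For each $M\in\mathcal{M}$ the set $\Delta_M$ is an $|M|$-dimensional $\mathbb{F}_q$-subspace, and $\Delta=\bigcup_{M\in\mathcal{M}}\Delta_M$. For any nonempty $S\subseteq\mathcal{M}$ the intersection $\bigcap_{M\in S}\Delta_M$ equals $\Delta_{\cap S}$, because a vector has support inside every $M\in S$ exactly when its support lies in $\cap S$. Inclusion–exclusion holds for any function summed over a union. Taking $f(t)=\sum_{u\in\mathbb{F}_p^r}w_p^{\langle u,\langle\beta,t\rangle_q\rangle_p}$ gives
\[
\sum_{t\in\Delta}f(t)=\sum_{\emptyset\neq S\subseteq\mathcal{M}}(-1)^{|S|+1}\sum_{t\in\Delta_{\cap S}}f(t).
\]
This is the same argument that proves Lemma~\ref{generatinglemma}, with $f$ in place of the constant function $1$.

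Next I evaluate the inner sum for a fixed subspace $V=\Delta_{\cap S}$ of dimension $k=|\cap S|$. I swap the two sums and separate the term $u=\mathbf{0}$, which contributes $q^k$. For $u\neq\mathbf{0}$, the preceding lemma on sums over $V$ gives $q^k$ when $\beta\in V^\perp$ and $0$ otherwise. The $p^r-1=q-1$ nonzero values of $u$ therefore contribute $(q-1)q^k\Psi(\beta\mid\Delta_{\cap S})$. The total is $q^{|\cap S|}\bigl(1+(q-1)\Psi(\beta\mid\Delta_{\cap S})\bigr)$, and substituting this into the inclusion–exclusion formula gives the first identity.

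For the second identity I write $\Delta^c=\mathbb{F}_q^m\setminus\Delta$, so the sum over $\Delta^c$ equals the full sum over $\mathbb{F}_q^m$ minus the sum over $\Delta$. I compute the full sum with the same split. The term $u=\mathbf{0}$ gives $q^m$. For $u\neq\mathbf{0}$, the subspace lemma with $V=\mathbb{F}_q^m$ and $V^\perp=\{\mathbf{0}\}$ gives $q^m\delta_{\mathbf{0},\beta}$, and summing over the $q-1$ nonzero $u$ gives $(q-1)q^m\delta_{\mathbf{0},\beta}$. This yields the second identity.

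The only point needing care is the justification of the subspace lemma, which I am taking as given. If it had to be proved, I would note that $x\mapsto\langle y,x\rangle_q$ is an $\mathbb{F}_q$-linear functional on $V$. It vanishes identically precisely when $y\in V^\perp$. Otherwise it is onto $\mathbb{F}_q$ with every fibre of size $q^{k-1}$. The sum then reduces to $q^{k-1}\sum_{z\in\mathbb{F}_p^r}w_p^{\langle u,z\rangle_p}=0$ by Lemma~\ref{OrthChar}, since $u\neq\mathbf{0}$. Apart from this, everything is bookkeeping.
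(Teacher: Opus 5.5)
Your proof is correct. You apply inclusion--exclusion over the subspaces $\Delta_M$, using $\bigcap_{M\in S}\Delta_M=\Delta_{\cap S}$, then split off the term $u=\mathbf{0}$ and apply the subspace character-sum lemma; this gives both identities, and the complement case is handled by subtracting from the full sum over $\mathbb{F}_q^m$. The paper gives no proof of this lemma (it cites \cite{Wu2026}), but the tools it sets up just before it, namely the inclusion--exclusion behind Lemma~\ref{generatinglemma} and the subspace character-sum lemma, are exactly what you use, so this is essentially the intended route.
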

		\begin{lemma}[\cite{Wu2026}]\label{Lem-A}
			Let $m, s, r$ be positive integers, $q=p^r$ and $\Delta$ be a simplicial complex of $\mathbb{F}_q^m$ with $\mathcal{M} = \{M_1, M_2, \dots, M_s\}$ the set of maximal elements of $\Delta$. Define
			\[
			\mathcal{A}_{\Delta,\beta}
			=
			|\Delta|-\frac{1}{q}
			\sum_{u\in \mathbb{F}_p^r}
			\sum_{t \in \Delta}
			w_p^{\langle u, \langle \beta, t\rangle_q \rangle_p}
			\]
			and
			\(
			f_{\beta}(S)
			=
			(-1)^{\vert S \vert + 1}
			q^{\vert \cap S\vert - 1}
			(q-1)
			\big(1-\Psi(\beta|\Delta_{\cap S})\big),
			\)
			where $\beta\in \mathbb{F}_q^m$, $\cap S = \bigcap_{M \in S} M$, $S\subseteq \mathcal{M}$ and $\Psi(\beta|\Delta_M)$ is given by Eq. \eqref{EqPsi}. Then
			\(
			\mathcal{A}_{\Delta,\beta}
			=
			\sum_{\emptyset\neq S \subseteq \mathcal{M}}
			f_{\beta}(S),
			\)
			and
			\begin{enumerate}
				\item \label{Lem-A1}
				$\mathcal{A}_{\Delta,\beta}=0$ if and only if
				\(
				\beta\in \bigcap_{i\in [s]}\Delta_{M_i}^{\perp},
				\)
				that is,
				\(
				\beta\in \Delta_{\bigcup_{i\in [s]}M_i}^{\perp}.
				\)
				
				\item \label{Lem-A2}
				If $\beta\notin \Delta_{M_i}^{\perp}$ for some $i\in [s]$, then
				\(
				\mathcal{A}_{\Delta, \beta}\geq (q-1)q^{\vert M_i \vert - 1}.
				\)
				Moreover, the equality holds if
				\(
				\beta\notin \Delta_{M_i}^{\perp}
				\quad \text{and} \quad
				\beta\in \Delta_{\bigcup_{j \in [s]\setminus \{i\}}M_j}^{\perp}.
				\)
				
				\item \label{Lem-A3}
				\(
				\mathcal{A}_{\Delta, \beta}
				\leq
				(q-1)\sum_{i \in [s]}q^{\vert M_i \vert - 1},
				\)
				and the equality holds if and only if
				\(
				\beta\notin \bigcup_{i \in [s]}\Delta_{M_i}^{\perp}
				\)
				and
				\(
				\beta\in \bigcap_{1 \leq i<j \leq s}
				\Delta_{M_i\cap M_j}^{\perp}.
				\)
			\end{enumerate}
		\end{lemma}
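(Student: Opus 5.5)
We first rewrite $\mathcal{A}_{\Delta,\beta}$ as a sum over nonempty $S\subseteq\mathcal{M}$. By Lemma~\ref{generatinglemma}, $|\Delta|=\sum_{\emptyset\neq S\subseteq\mathcal{M}}(-1)^{|S|+1}q^{|\cap S|}$. By Lemma~\ref{Lem-8}, $\frac1q\sum_{u}\sum_{t\in\Delta}w_p^{\langle u,\langle\beta,t\rangle_q\rangle_p}=\sum_{S}(-1)^{|S|+1}q^{|\cap S|-1}\bigl(1+(q-1)\Psi(\beta\mid\Delta_{\cap S})\bigr)$. Subtracting term by term, each $S$ contributes
\[
(-1)^{|S|+1}q^{|\cap S|-1}\bigl(q-1-(q-1)\Psi(\beta\mid\Delta_{\cap S})\bigr)=f_\beta(S).
\]
This gives the displayed identity. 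For the three items, we use a cleaner expression. Put $N=\bigcup_{i\in[s]}M_i$. Since $\Delta_N^\perp=\Delta_{[m]\setminus N}$, we can pass to the index set $T=\{i: \beta_i\neq 0\}$. Directly, $\sum_{u}\omega_p^{\langle u,\langle\beta,t\rangle_q\rangle_p}=q\,[\langle\beta,t\rangle_q=0]$, so
\[
\mathcal{A}_{\Delta,\beta}=\#\{t\in\Delta:\langle\beta,t\rangle_q\neq0\}.
\]
Consequently $\beta\in\Delta_M^\perp$ if and only if $\mathrm{Supp}(\beta)\cap M=\emptyset$.

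For item~\ref{Lem-A1}: if $\beta\in\Delta_N^\perp$, then $\langle\beta,t\rangle_q=0$ for all $t\in\Delta$, so $\mathcal{A}_{\Delta,\beta}=0$. Conversely, if some $j\in \mathrm{Supp}(\beta)\cap M_i$, then $t=e_j\in\Delta$ has $\langle\beta,t\rangle_q=\beta_j\neq0$, so $\mathcal{A}_{\Delta,\beta}>0$. The lemma on intersections of duals identifies $\bigcap_i\Delta_{M_i}^\perp$ with $\Delta_N^\perp$.

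For item~\ref{Lem-A2}: since $\Delta_{M_i}\subseteq\Delta$, we have $\mathcal{A}_{\Delta,\beta}\ge\#\{t\in\Delta_{M_i}:\langle\beta,t\rangle_q\neq0\}$. The map $t\mapsto\langle\beta,t\rangle_q$ is a nonzero linear functional on the $|M_i|$-dimensional space $\Delta_{M_i}$, so it has exactly $q^{|M_i|}-q^{|M_i|-1}$ nonzeros. For the equality case, suppose $\mathrm{Supp}(\beta)$ avoids all $M_j$ with $j\neq i$. Then any $t\in\Delta$ with $\langle\beta,t\rangle_q\neq0$ must have support meeting $\mathrm{Supp}(\beta)\subseteq M_i\setminus\bigcup_{j\ne i}M_j$. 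This alone does not force $t\in\Delta_{M_i}$. It is better to argue from the inclusion--exclusion sum: every $S\neq\{i\}$ contains some $j\neq i$, hence $\cap S\subseteq M_j$, hence $\Psi(\beta\mid\Delta_{\cap S})=1$ and $f_\beta(S)=0$. Only $f_\beta(\{i\})=(q-1)q^{|M_i|-1}$ survives.

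For item~\ref{Lem-A3}: by the union bound,
\[
\#\{t\in\Delta:\langle\beta,t\rangle\ne0\}\le\sum_i\#\{t\in\Delta_{M_i}:\langle\beta,t\rangle\ne0\}\le\sum_i(q-1)q^{|M_i|-1}.
\]
Equality requires two things. First, each $\beta\notin\Delta_{M_i}^\perp$. Second, the sets $\{t\in\Delta_{M_i}:\langle\beta,t\rangle\ne0\}$ must be pairwise disjoint. Disjointness of the $i$-th and $j$-th sets means no $t\in\Delta_{M_i\cap M_j}$ has $\langle\beta,t\rangle\ne0$, that is, $\beta\in\Delta_{M_i\cap M_j}^\perp$. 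Conversely, under these conditions, the inclusion--exclusion formula kills every $f_\beta(S)$ with $|S|\ge2$, since $\cap S\subseteq M_i\cap M_j$. The singletons then give exactly the bound.

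The main obstacle is item~\ref{Lem-A2}'s inequality when the $f_\beta(S)$ have alternating signs. The counting interpretation above sidesteps this, so the crux is carefully justifying $\mathcal{A}_{\Delta,\beta}=\#\{t\in\Delta:\langle\beta,t\rangle_q\ne0\}$ via Lemma~\ref{OrthChar}.
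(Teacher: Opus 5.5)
Your proof is correct. The paper gives no proof of this lemma. It is quoted from \cite{Wu2026}, like Lemmas~\ref{generatinglemma} and~\ref{Lem-8}, so there is no in-paper argument to compare against. The statement is phrased through the signed expansion $\sum_{S}f_\beta(S)$, and your route deliberately avoids working inside it.

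Your key step is the identity $\mathcal{A}_{\Delta,\beta}=\#\{t\in\Delta:\langle\beta,t\rangle_q\neq 0\}$ from Lemma~\ref{OrthChar}. After it, each item becomes elementary:
item~\ref{Lem-A1} reduces to exhibiting a unit vector $e_j\in\Delta$; the inequality of item~\ref{Lem-A2} reduces to counting the non-zeros of a nonzero linear functional on $\Delta_{M_i}\subseteq\Delta$; and item~\ref{Lem-A3} reduces to a union bound over $\Delta=\bigcup_i\Delta_{M_i}$. For item~\ref{Lem-A3}, equality means the sets $E_i=\{t\in\Delta_{M_i}:\langle\beta,t\rangle_q\neq 0\}$ are pairwise disjoint, and via $\Delta_{M_i}\cap\Delta_{M_j}=\Delta_{M_i\cap M_j}$ this gives both directions of the stated ``if and only if'' at once. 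As you say, this removes any need to control cancellation between terms of alternating sign. In fact the displayed identity is then just inclusion--exclusion for $\bigl|\bigcup_i E_i\bigr|$, since $\bigl|\bigcap_{M_i\in S}E_i\bigr|=(q-1)q^{|\cap S|-1}\bigl(1-\Psi(\beta\mid\Delta_{\cap S})\bigr)$. So you could bypass Lemma~\ref{Lem-8} entirely.

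Two small points. First, the direct argument you abandoned for the equality case of item~\ref{Lem-A2} does work. Every $t\in\Delta$ lies in some $\Delta_{M_k}$. If $k\neq i$, then $t\in\Delta_{M_k}\subseteq\Delta_{\bigcup_{j\neq i}M_j}$, and $\beta$ is orthogonal to that space, so $\langle\beta,t\rangle_q=0$. Hence every counted $t$ lies in $\Delta_{M_i}$, and the count is exactly $(q-1)q^{|M_i|-1}$; your inclusion--exclusion alternative is also fine. Second, the equivalence $\beta\in\Delta_M^\perp\iff\mathrm{Supp}(\beta)\cap M=\emptyset$ follows from $\Delta_M^\perp=\Delta_{[m]\setminus M}$, not from the counting identity. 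The remark about passing to the index set $T$ is never used and can be dropped. Since you rely on $\Delta=\bigcup_i\Delta_{M_i}$ (every element of $\Delta$ is covered by a maximal one) in items~\ref{Lem-A1} and~\ref{Lem-A3}, state it once explicitly.
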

		The following lemma presents the necessary and sufficient conditions on simplicial complexes under which some $\beta \in \mathbb{F}_q^m$ satisfies Lemma \ref{Lem-A}.
		\begin{lemma}[\cite{Wu2026}]\label{Lem-A+}
			Let $m, s$ be positive integers. Let $M_i, N_j \subseteq [m]$ and $\Delta_{M_i}, \Delta_{N_j}$ be the simplicial complexes of $\mathbb{F}_q^m$ for $i \in [s]$. Then we have 
			\begin{enumerate}
				\item \label{Lem-A+1}
				There exists $\beta \in \mathbb{F}_q^m$ such that $\beta \notin \Delta_{M_i}^{\perp}$ and $\beta \in \Delta_{\bigcup_{j\in [s]\setminus\{i\}} M_j}^{\perp}$ iff
				\(
				M_i \setminus \bigcup_{j\in [s]\setminus\{i\}} M_j \neq \emptyset;
				\)
				
				\item \label{Lem-A+2}
				There exists $\beta \in \mathbb{F}_q^m$ such that
				\(
				\beta \notin \bigcup_{i\in [s]} \Delta_{M_i}^{\perp}
				\)
				and
				\(
				\beta \in \bigcap_{1 \leq i < j\leq s}
				\Delta_{M_i \cap M_j}^{\perp}
				\)
				if and only if
				\(
				M_i \setminus \bigcup_{j\in [s] \setminus \{i\}} M_j \neq \emptyset
				\quad \text{for any } i\in [s].
				\)
			\end{enumerate}
		\end{lemma}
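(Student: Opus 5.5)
Both parts reduce to statements about the support of $\beta$. The plan is to translate each membership condition into a support condition using $\Delta_M^{\perp}=\Delta_{[m]\setminus M}$, recalled above. That identity says
\[
\beta\in\Delta_M^{\perp}\iff \textnormal{Supp}(\beta)\cap M=\emptyset .
\]
Once every condition is in this form, each existence statement becomes a combinatorial statement about subsets of $[m]$. Necessity then follows by inspecting $\textnormal{Supp}(\beta)$, and sufficiency by an explicit choice of $\beta$ built from standard basis vectors $e_k\in\mathbb{F}_q^m$.

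\emph{Part 1.} Set $U_i=\bigcup_{j\in[s]\setminus\{i\}}M_j$, so that the condition $\beta\in\Delta_{U_i}^{\perp}$ means $\textnormal{Supp}(\beta)\cap U_i=\emptyset$. The condition $\beta\notin\Delta_{M_i}^{\perp}$ means $\textnormal{Supp}(\beta)\cap M_i\neq\emptyset$.

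For necessity, take any $k\in\textnormal{Supp}(\beta)\cap M_i$. Since $\textnormal{Supp}(\beta)$ avoids $U_i$, we have $k\notin U_i$, so $k\in M_i\setminus U_i$ and this set is nonempty. For sufficiency, pick $k\in M_i\setminus U_i$ and put $\beta=e_k$. Then $\textnormal{Supp}(\beta)=\{k\}$ meets $M_i$ and misses $U_i$, as required.

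\emph{Part 2.} The condition $\beta\notin\bigcup_i\Delta_{M_i}^{\perp}$ means $\textnormal{Supp}(\beta)\cap M_i\neq\emptyset$ for every $i\in[s]$. The condition $\beta\in\bigcap_{i<j}\Delta_{M_i\cap M_j}^{\perp}$ means $\textnormal{Supp}(\beta)$ avoids every pairwise intersection $M_i\cap M_j$.

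For necessity, fix $i$ and take $k\in\textnormal{Supp}(\beta)\cap M_i$. If $k$ lay in some $M_j$ with $j\neq i$, then $k$ would lie in $M_i\cap M_j$, contradicting the second condition. Hence $k\in M_i\setminus\bigcup_{j\neq i}M_j$ for every $i$.

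For sufficiency, choose $k_i\in M_i\setminus\bigcup_{j\neq i}M_j$ for each $i$ and put $\beta=\sum_{i\in[s]}e_{k_i}$. The indices $k_i$ are pairwise distinct, because $k_i$ belongs to $M_i$ and to no other $M_j$; so there is no cancellation and $\textnormal{Supp}(\beta)=\{k_1,\dots,k_s\}$. This set meets each $M_i$ in $k_i$. It also misses every $M_i\cap M_j$ with $i\neq j$, because each $k_l$ belongs to exactly one of the sets $M_1,\dots,M_s$, namely $M_l$, and so lies in no pairwise intersection.

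None of these steps is a serious obstacle. The only point needing care is the degenerate case $s=1$. There $U_1=\emptyset$ and the pairwise-intersection condition is vacuous, so both parts reduce to the existence of some $\beta\notin\Delta_{M_1}^{\perp}$, which is equivalent to $M_1\neq\emptyset$. This agrees with the stated criterion, since $M_1\setminus\emptyset=M_1$.
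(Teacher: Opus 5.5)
Your proof is correct. Via $\Delta_M^{\perp}=\Delta_{[m]\setminus M}$, both parts reduce to the support conditions $\textnormal{Supp}(\beta)\cap M_i\neq\emptyset$ and $\textnormal{Supp}(\beta)\cap(M_i\cap M_j)=\emptyset$. The witnesses $e_k$ and $\beta=\sum_{i\in[s]}e_{k_i}$, built from private indices $k_i\in M_i\setminus\bigcup_{j\neq i}M_j$, then settle both directions, and your handling of the vacuous case $s=1$ is right.

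The paper gives no proof of its own here; it imports the lemma from \cite{Wu2026}. Your argument is the standard support-based proof that such lemmas rest on, so there is nothing to add.
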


		%%%%%Section 3
		\section{Linear codes over the ring $\mathcal{S}$ by using simplicial complexes}\label{Sec:3}
		Consider a subset $D = \{\{d_1<d_2<\dots <d_n\}\} \subseteq \mathcal{S}^m$, an ordered multiset.
		Define
		\begin{equation}\label{OurConst}
			C_D = \{c_D(v)=\big(v\cdot d\big)_{d\in D} ~\vert~ v \in \mathcal{S}^m\},
		\end{equation}
		where $x\cdot y$ is defined by Eq.~\eqref{EqEIP}.\\
		Observe that $C_D$ is a linear code over $\mathcal{S}$ of length $\vert D \vert $. The ordered set $D$ is called the \textit{defining set} of $C_{D}$. Note that by changing the ordering of $D$, we will get a code that is permutation equivalent (see \cite{wchuffman}) to $C_D$.\\
		Now, we have $c_{D}: \mathcal{S}^m \longrightarrow C_{D}\subseteq \mathcal{S}^{\vert D \vert}$ defined by
		\begin{equation}\label{c_DMap}
			c_{D}(v)=\big(v\cdot d\big)_{d\in D}
		\end{equation}
		is a surjective homomorphism of modules over $\mathcal{S}$.\\	   
		Set $D_i\subseteq \mathbb{F}_q^m, i=1, 2$ and let $D = aD_1 + bD_2\subseteq \mathcal{S}^m$. Assume that $v = a\beta_1 + b\beta_2 \in \mathcal{S}^m$ and $x = at_1 + bt_2\in D = aD_1 + bD_2$, where $ \beta_1, \beta_2 \in \mathbb{F}_q^m$ and $t_i\in D_i, i = 1, 2$. We have
		\begin{equation}\label{EqEIP}
			\begin{split}
				v\cdot x & = \big(a\beta_1 + b\beta_2 \big) \cdot \big(at_1 + bt_2 \big)\\
				& = a^2\langle \beta_1, t_1\rangle_q + ab\langle \beta_1, t_2\rangle_q + ba\langle \beta_2, t_1\rangle_q + b^2\langle \beta_2, t_2\rangle_q\\
				& = b\langle \beta_1, t_1\rangle_q.
			\end{split}
		\end{equation}
		
		Then the Lee weight of $c_{D}(v)$ is
		\begin{equation}\label{Eq-Lee}
			\begin{split}
				wt_{Lee}(c_{D}(v)) = 
				& ~wt_{Lee}\big(\big(\big(a\beta_1 + b\beta_2 \big) \cdot \big(at_1 + bt_2 \big)\big)_{t_i\in D_i}\big)\\
				= & ~wt_{Lee}\big(\big(b\langle \beta_1 , t_1\rangle_q \big)_{t_i\in D_i}\big)\\
				= & ~wt_{H}\big(\Phi\big((b \langle \beta_1 , t_1\rangle_q )_{t_i\in D_i}\big)\big)\\
				= & ~wt_{H}\big((\langle \beta_1 , t_1\rangle_q )_{t_i\in D_i}\big) + wt_{H}\big(( \langle \beta_1 , t_1\rangle_q )_{t_i\in D_i}\big)\\
				= & ~2wt_{H}\big(( \langle \beta_1 , t_1\rangle_q )_{t_i\in D_i}\big).
			\end{split}
		\end{equation}
		We know that if $x \in \mathbb{F}_{q}^n$, then $wt_H(x)=0 \iff x=\textbf{0}\in \mathbb{F}_{q}^n $. Hence from Lemma \ref{OrthChar}, we have
		\begin{equation}\label{keyeq1}
			\begin{split}
				wt_{Lee}(c_{D}(v))
				& =  2\Big[\vert D \vert -\frac{1}{q} \sum\limits_{u \in \mathbb{F}^r_p}\sum\limits_{t_1 \in D_1}\sum\limits_{t_2\in D_2} w_p^{\langle u, \ \langle \beta_1, t_1\rangle_q\rangle_p}\Big]\\
				& =  2\Big[\vert D_1 \vert \vert D_2 \vert  - \frac{1}{q} \vert D_2\vert  \sum\limits_{u \in \mathbb{F}^r_p}\sum\limits_{t_1 \in D_1}  w_p^{\langle u,\ \langle \beta_1, t_1\rangle_q\rangle_p}\Big]\\
				& =  2\vert D_2\vert \Big[\vert D_1 \vert   - \frac{1}{q}   \sum\limits_{u \in \mathbb{F}^r_p}\sum\limits_{t_1 \in D_1}  w_p^{\langle u,\ \langle \beta_1, t_1\rangle_q\rangle_p}\Big].
			\end{split}
		\end{equation}
		\begin{remark}
			Note that in Eq.~\eqref{keyeq1}, for $\beta_1,t_1\in \mathbb{F}_q^m$ and $u\in \mathbb{F}_p^r$, we have $\langle \beta_1,t_1\rangle_q \in \mathbb{F}_q$. By identifying $\mathbb{F}_q=\mathbb{F}_{p^r}$ with $\mathbb{F}_p^r$, we will consider $\langle \beta_1,t_1\rangle_q$ as an element of $\mathbb{F}_p^r$, and hence the inner product
			$\langle u,\langle \beta_1,t_1\rangle_q\rangle_p$ is computed over $\mathbb{F}_p$.
		\end{remark}
		
		The following result describes the lengths, sizes, and minimum Lee weights of $C_D$ for five distinct choices of $D$, where the corresponding $C_D$ are $\mathcal{S}$-linear codes defined by Eq.~\eqref{OurConst}.
		
		%%%%%%%%2

		\begin{theorem}\label{Thm-1}
			Let $m, r, s, k \in \mathbb{N}$ and $q=p^r$. Suppose that $\Delta_1$ and $\Delta_2$ are simplicial complexes of $\mathbb{F}_q^m$ with
			\[
			\mathcal{M}=\{M_1,M_2,\dots,M_s\}
			\quad \text{and} \quad
			\mathcal{N}=\{N_1,N_2,\dots,N_k\}
			\]
			being the sets of maximal elements of $\Delta_1$ and $\Delta_2$, respectively. Let $M_i \setminus \cup_{j\in [s]\setminus \{i\}}M_j\neq \emptyset$ for any $i\in [s]$. 
			\begin{enumerate}
				\item \label{part:1}
				Let $D = a\Delta_1 + b\Delta_2 \subseteq \mathcal{S}^m$. Then $C_D$ is an $\mathcal{S}$-linear code of length $\vert D\vert =\vert \Delta_1 \vert \vert \Delta_2 \vert $ and size $q^{\vert \cup_{i\in [s]} M_i \vert}$ and minimum distance $d_L = 2\vert \Delta_2\vert (q-1) q^{{\rm min}_{i\in [s]} \{\vert M_i \vert\} -1}$.
				
				\item \label{part:2}
				Let $D = a\Delta_1 + b\Delta^c_2 \subseteq \mathcal{S}^m$. Then $C_D$ is an $\mathcal{S}$-linear code of length $\vert D\vert =\vert \Delta_1 \vert (q^m- \vert \Delta_2 \vert )$ and size  $q^{\vert \cup_{i\in [s]} M_i \vert}$and minimum distance $d_L = 2(q^m-\vert \Delta_2 \vert )(q-1)q^{{\rm min}_{i\in [s]} \{\vert M_i \vert\} -1}$.
				
				\item \label{part:3}
				Let $D = a\Delta^c_1 + b\Delta_2 \subseteq \mathcal{S}^m$. Then $C_D$ is an $\mathcal{S}$-linear code of length $\vert D\vert =(q^m - \vert \Delta_1 \vert ) \vert \Delta_2 \vert $ and size $q^{m}$ and minimum distance $d_L = 2\vert \Delta_2 \vert (q-1)\big(q^{m-1} - \sum_{ i \in [s]}q^{ \vert M_i \vert -1}\big)$.
				
				\item \label{part:4}
				Let $D = a\Delta^c_1 + b\Delta^c_2 \subseteq \mathcal{S}^m$. Then $C_D$ is an $\mathcal{S}$-linear code of length $\vert D\vert = (q^m- \vert \Delta_1 \vert ) (q^m - \vert \Delta_2 \vert )$ and size $q^{m}$ and minimum distance $d_L = 2(q^m - \vert \Delta_2 \vert )(q-1)\big(q^{m-1} - \sum_{ i \in [s]} q^{\vert M_i \vert -1}\big)$.
				
				\item \label{part:5}
				Let $D = a\Delta_1 + b\Delta_2 \subseteq \mathcal{S}^m$. Then $C_{D^c}$ is an $\mathcal{S}$-linear code of length $\vert D^c\vert =q^{2m}-\vert \Delta_1 \vert \vert \Delta_2 \vert $ and size $q^{m}$ and minimum distance $d_L = 2(q-1)\big(q^{2m-1} - \vert \Delta_2 \vert \sum_{ i \in [s]} q^{\vert M_i \vert -1}\big)$.
			\end{enumerate}
		\end{theorem}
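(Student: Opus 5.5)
The plan is to reduce every part to the character-sum expression in Eq.~\eqref{keyeq1}, which says that for $v=a\beta_1+b\beta_2$ and $D=aD_1+bD_2$,
\[
wt_{Lee}(c_D(v))=2|D_2|\Big[|D_1|-\tfrac1q\sum_{u\in\mathbb{F}_p^r}\sum_{t_1\in D_1}w_p^{\langle u,\langle\beta_1,t_1\rangle_q\rangle_p}\Big].
\]
Both the size and the minimum Lee distance then follow from Lemmas \ref{Lem-8}, \ref{Lem-A} and \ref{Lem-A+}. The length in each part is immediate: $|aD_1+bD_2|=|D_1||D_2|$, and $|D^c|=q^{2m}-|\Delta_1||\Delta_2|$.

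\textbf{Size.} By Eq.~\eqref{EqEIP}, $c_D(v)=(b\langle\beta_1,t_1\rangle_q)_{(t_1,t_2)}$ depends only on $\beta_1$. Hence $|C_D|$ equals the number of $\beta_1\in\mathbb{F}_q^m$ modulo the subspace of those $\beta_1$ with $\langle\beta_1,t_1\rangle_q=0$ for all $t_1\in D_1$, which is $\langle D_1\rangle^{\perp}$.
\begin{itemize}
\item For $D_1=\Delta_1$, the span is $\Delta_{\cup_i M_i}$. Its dual has dimension $m-|\cup_iM_i|$, so $|C_D|=q^{|\cup_i M_i|}$. This matches Lemma~\ref{Lem-A}(\ref{Lem-A1}): $\mathcal{A}_{\Delta_1,\beta_1}=0$ exactly on that dual.
\item For $D_1=\Delta_1^c$ (and for $D^c$ in part 5), the defining set spans $\mathbb{F}_q^m$ as soon as $\Delta_1\neq\mathbb{F}_q^m$. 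Indeed, $e_j+e_{j'}$ type vectors and vectors of full support lie outside $\Delta_1$ and span everything. So the kernel is $\beta_1=0$ and the size is $q^m$.
\end{itemize}
I would state the mild nondegeneracy assumption $\bigcup_i M_i\neq[m]$ (or simply $\Delta_1\ne\mathbb{F}_q^m$) where it is needed.

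\textbf{Parts 1 and 2.} Here the bracket is exactly $\mathcal{A}_{\Delta_1,\beta_1}$, so $wt_{Lee}=2|D_2|\mathcal{A}_{\Delta_1,\beta_1}$ with $|D_2|=|\Delta_2|$ or $q^m-|\Delta_2|$. For a nonzero codeword, $\mathcal{A}\neq 0$. By Lemma~\ref{Lem-A}(\ref{Lem-A1}) this means $\beta_1\notin\Delta_{M_i}^\perp$ for some $i$, and then Lemma~\ref{Lem-A}(\ref{Lem-A2}) gives $\mathcal{A}\ge (q-1)q^{|M_i|-1}\ge (q-1)q^{\min_i|M_i|-1}$. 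For attainment, choose $i_0$ minimizing $|M_i|$. The hypothesis $M_{i_0}\setminus\cup_{j\ne i_0}M_j\neq\emptyset$ together with Lemma~\ref{Lem-A+}(\ref{Lem-A+1}) yields a $\beta_1$ realizing the equality case of Lemma~\ref{Lem-A}(\ref{Lem-A2}).

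\textbf{Parts 3 and 4.} Apply the second identity of Lemma~\ref{Lem-8} with $D_1=\Delta_1^c$. For $\beta_1\neq0$ (the nonzero codewords), $\delta_{\mathbf 0,\beta_1}=0$, and the bracket simplifies to
\[
(q^m-|\Delta_1|)-q^{m-1}+\tfrac1q\sum_{u}\sum_{t\in\Delta_1}w_p^{\langle u,\langle\beta_1,t\rangle_q\rangle_p}=(q-1)q^{m-1}-\mathcal{A}_{\Delta_1,\beta_1}.
\]
The minimum therefore corresponds to the maximum of $\mathcal{A}$. Lemma~\ref{Lem-A}(\ref{Lem-A3}) bounds it by $(q-1)\sum_i q^{|M_i|-1}$. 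Lemma~\ref{Lem-A+}(\ref{Lem-A+2}), whose condition is precisely our hypothesis on every $M_i$, guarantees that this bound is attained. Multiplying by $2|D_2|$ gives both formulas.

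\textbf{Part 5.} Write the sum over $D^c$ as the sum over all of $\mathcal{S}^m$ minus the sum over $D$. For $\beta_1\neq0$, the full sum over $t_1\in\mathbb{F}_q^m$, $t_2\in\mathbb{F}_q^m$ and $u$ equals $q^m\cdot q^m$, by Lemma~\ref{OrthChar} (only $u=0$ survives). This gives
\[
\tfrac12 wt_{Lee}=(q-1)q^{2m-1}-|\Delta_2|\,\mathcal{A}_{\Delta_1,\beta_1}.
\]
Again, maximizing $\mathcal{A}$ through Lemma~\ref{Lem-A}(\ref{Lem-A3}) and Lemma~\ref{Lem-A+}(\ref{Lem-A+2}) gives the stated $d_L$.

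\textbf{Main obstacle.} I expect the delicate step to be the bookkeeping in parts 3–5: correctly handling the $\delta_{\mathbf 0,\beta}$ terms and the complement sums, and confirming that nonzero codewords correspond exactly to $\beta_1\neq0$ (respectively $\beta_1\notin\Delta_{\cup M_i}^\perp$). Justifying the sizes rigorously in the non-unital setting, where $C_D$ is just the image of $c_D$, is a second point needing care.
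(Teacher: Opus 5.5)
Your route is the paper's. You reduce everything via Eq.~\eqref{keyeq1} to $\mathcal{A}_{\Delta_1,\beta_1}$, and take the bounds and their attainment from Lemmas~\ref{Lem-A} and~\ref{Lem-A+}. In part 5 you subtract the sum over $D$ from the full sum over $\mathcal{S}^m$ rather than splitting $D^c=(a\Delta_1^c+b\mathbb{F}_q^m)\sqcup(a\Delta_1+b\Delta_2^c)$ as the paper does. That is an equally valid way to reach Eq.~\eqref{Eq.13}. The weight computations and parts 1--2 are fine.

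The step that fails is the size argument in parts 3--5, and with it your identification of nonzero codewords with $\beta_1\neq\mathbf{0}$. It is not true that $\Delta_1^c$ spans $\mathbb{F}_q^m$ whenever $\Delta_1\neq\mathbb{F}_q^m$. Take $q=2$, $m=2$, $\mathcal{M}=\{\{1\},\{2\}\}$, which satisfies the theorem's hypothesis. Then $\Delta_1^c=\{(1,1)\}$, so $\beta_1=(1,1)$ yields the zero codeword, $|C_D|=2\neq q^m$, and the formula of part 3 gives $d_L=0$.

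So the statement itself needs an extra hypothesis, and $\Delta_1\neq\mathbb{F}_q^m$ is not enough. Your alternative $\bigcup_i M_i\neq[m]$ does suffice, but it is far stronger than needed. The paper's proof has the same hole: it simply asserts $wt_{Lee}(c_D(v))=0\iff\beta_1=\mathbf{0}$.

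The fix is to use your own weight formulas instead of a spanning argument. Every $\beta_1\neq\mathbf{0}$ has weight at least the claimed $d_L$, so assume that value is positive. For parts 3 and 4 this means $q^{m-1}>\sum_{i\in[s]}q^{|M_i|-1}$; in part 5 positivity can only fail when $\Delta_2=\mathbb{F}_q^m$. Under this assumption $\ker c_D=\{a\beta_1+b\beta_2:\beta_1=\mathbf{0}\}$, hence $|C_D|=q^{2m}/q^m=q^m$, and minimizing over $\beta_1\neq\mathbf{0}$ is justified.

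For $q\geq 3$ the assumption holds automatically once $\Delta_1\neq\mathbb{F}_q^m$. Every full-support vector then lies in $\Delta_1^c$, and these already span $\mathbb{F}_q^m$ (use $\mathbf{1}$ and $\mathbf{1}+(\lambda-1)e_j$ with $\lambda\notin\{0,1\}$). So the degeneracy is purely a $q=2$ phenomenon.
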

		
		\begin{proof}
			We only provide the proof of part \ref{part:5} since the
			others can be proved by using a similar method. Note that $D^c = (a\Delta_1 + b\Delta_2 )^c = (a\Delta_1^c + b \mathbb{F}_q^m) \bigsqcup (a\Delta_1 + b\Delta^c_2)$, where $\bigsqcup $ denotes the disjoint union. By Eq.~\eqref{keyeq1} and Lemmas~\ref{Lem-8} and~\ref{Lem-A}, we obtain
			\begin{equation}\label{Eq.13}
				\begin{split}
					wt_{Lee}(c_{D}(v))
					& =  2\vert D_2\vert \Big[\vert D_1 \vert   - \frac{1}{q}   \sum\limits_{u \in \mathbb{F}^r_p}\sum\limits_{t_1 \in D_1}  w_p^{\langle u, \langle \beta_1, t_1\rangle_q\rangle_p}\Big]\\
					& = 2 q^m \Big[\vert \Delta_1^c \vert   - \frac{1}{q}   \sum\limits_{u \in \mathbb{F}^r_p}\sum\limits_{t_1 \in \Delta^c_1}  w_p^{\langle u, \langle \beta_1, t_1\rangle_q\rangle_p}\Big]\\ 
					&~ + 2\vert \Delta^c_2\vert \Big[\vert \Delta_1 \vert   - \frac{1}{q}   \sum\limits_{u \in \mathbb{F}^r_p}\sum\limits_{t_1 \in \Delta_1}  w_p^{\langle u, \langle \beta_1, t_1\rangle_q\rangle_p}\Big]\\
					& = 2 q^{2m-1} (q-1)\big(1 - \delta_{\mathbf{0}, \beta_1}\big) - 2q^m\mathcal{A}_{\Delta_1, \beta_1} + 2 \vert \Delta_2^c\vert \mathcal{A}_{\Delta_1, \beta_1}\\
					& = 2 q^{2m-1} (q-1)\big(1 - \delta_{\mathbf{0}, \beta_1}\big)-2 \vert \Delta_2 \vert \mathcal{A}_{\Delta_1, \beta_1}.
				\end{split}
			\end{equation}
			Now we discuss the following cases:
			\begin{enumerate}[label=(\roman*)]
				\item We have, $wt_{Lee}(c_{D}(v))=0 \iff \beta_1 = \textbf{0}$.\\
				In this case, $\#\beta_1=1, \#\beta_2=q^m.$\\
				Therefore, we have $\#v=q^m.$
				
				\item When $\beta_1 \neq \textbf{0}$, then $wt_{Lee}(c_{D}(v))
				= 2 (q-1)q^{2m-1}  - 2 \vert \Delta_2 \vert \mathcal{A}_{\Delta_1, \beta_1}$. From Lemma \ref{Lem-A}(\ref{Lem-A3}), we have $$\mathcal{A}_{\Delta_1, \beta_1} \leq \sum\limits_{i \in [s]} (q-1)q^{\vert M_i\vert -1}.$$
				Therefore,		 
				$wt_{Lee}(c_{D}(v))
				\geq 2 (q-1)q^{2m-1} - 2 \vert \Delta_2 \vert \sum\limits_{i \in [s]} (q-1)q^{\vert M_i\vert -1}.$ By Lemma \ref{Lem-A}(\ref{Lem-A3}) and Lemma \ref{Lem-A+}(\ref{Lem-A+2}), there exists $\beta_1 \in \mathbb{F}_q^m$ such that 
				\begin{equation*}
					wt_{Lee}(c_{D}(v))
					= 2 (q-1)q^{2m-1} - 2 \vert \Delta_2 \vert \sum\limits_{i \in [s]} (q-1)q^{\vert M_i\vert -1}.
				\end{equation*}
				Hence the minimum distance $d_{L} = 2  (q-1)\big(q^{2m-1}  -  \vert \Delta_2 \vert \sum\limits_{i \in [s]} q^{\vert M_i\vert -1}\big)$. \qed
				
			\end{enumerate}
		\end{proof}

        The following theorem determines the parameters and establishes the minimality conditions for the Gray images of the codes presented in Theorem~\ref{Thm-1}.
		\begin{theorem}\label{Thm-GrayImage}
		Let $m, r, s, k \in \mathbb{N}$ and $q=p^r$. Suppose that $\Delta_1$ and $\Delta_2$ are simplicial complexes of $\mathbb{F}_q^m$ with
		\[
		\mathcal{M}=\{M_1, M_2, \dots, M_s\}
		\quad \text{and} \quad
		\mathcal{N}=\{N_1, N_2, \dots, N_k\}
		\]
		being the sets of maximal elements of $\Delta_1$ and $\Delta_2$, respectively. Let $M_i \setminus \cup_{j\in [s]\setminus \{i\}}M_j\neq \emptyset$ for any $i\in [s]$ and let $\Phi$ be the map given by Eq.~\eqref{PhiEq}.
		\begin{enumerate}
			\item \label{part:1GI}
			Let $D = a\Delta_1 + b\Delta_2 \subseteq \mathcal{S}^m$. Then $\Phi(C_D)$ is an $\big[2 \vert \Delta_1 \vert \vert \Delta_2 \vert, {\vert \cup_{i\in [s]} M_i \vert}, 2\vert \Delta_2\vert (q-1) q^{{\rm min}_{i\in [s]} \{\vert M_i \vert\} -1}\big]$-linear code over $\mathbb{F}_q$, and is minimal if
			\[
			q^{{\rm min}_{i \in [s]}\{\vert M_i \vert\} +1} > (q-1)\sum_{ i \in [s]}q^{\vert M_i \vert}.
			\]
			
			\item \label{part:2GI}
			Let $D = a\Delta_1 + b\Delta^c_2 \subseteq \mathcal{S}^m$. Then $\Phi(C_D)$ is an $\big[2\vert \Delta_1 \vert (q^m- \vert \Delta_2 \vert ), {\vert \cup_{i\in [s]} M_i \vert}, 2(q^m-\vert \Delta_2 \vert )(q-1)q^{{\rm min}_{i\in [s]} \{\vert M_i \vert\} -1} \big]$-linear code over $\mathbb{F}_q$, and is minimal if
			\[
			q^{{\rm min}_{i \in [s]}\{\vert M_i \vert \} +1} > (q-1)\sum_{ i \in [s]}q^{\vert M_i \vert}.
			\]
			
			\item \label{part:3GI}
			Let $D = a\Delta^c_1 + b\Delta_2 \subseteq \mathcal{S}^m$. Then $\Phi(C_D)$ is an $\big[2(q^m - \vert \Delta_1 \vert ) \vert \Delta_2 \vert, m, 2\vert \Delta_2 \vert (q-1)\big(q^{m-1} - \sum_{ i \in [s]}q^{ \vert M_i \vert -1}\big)\big]$-linear code over $\mathbb{F}_q$, and is minimal if
			\[
			q^m > \sum_{ i \in [s]} q^{\vert M_i \vert + 1}.
			\]
			
			\item \label{part:4GI}
			Let $D = a\Delta^c_1 + b\Delta^c_2 \subseteq \mathcal{S}^m$. Then $\Phi(C_D)$ is an $\big[2(q^m- \vert \Delta_1 \vert ) (q^m - \vert \Delta_2 \vert ), m, 2(q^m - \vert \Delta_2 \vert )(q-1)\big(q^{m-1} - \sum_{ i \in [s]} q^{\vert M_i \vert -1}\big)\big]$-linear code over $\mathbb{F}_q$, and is minimal if
			\[q^m > \sum_{ i \in [s]} q^{\vert M_i \vert + 1}.\]
			
			\item \label{part:5GI}
			Let $D = a\Delta_1 + b\Delta_2 \subseteq \mathcal{S}^m$. Then $\Phi(C_{D^c})$ is an $\big[2(q^{2m}-\vert \Delta_1 \vert \vert \Delta_2 \vert), m, 2(q-1)\big(q^{2m-1} - \vert \Delta_2 \vert \sum_{ i \in [s]} q^{\vert M_i \vert -1}\big)\big]$-linear code over $\mathbb{F}_q$, and is minimal if
			\[
			q^{2m} > \vert \Delta_2 \vert \sum_{ i \in [s]} q^{\vert M_i \vert + 1}.
			\]
		\end{enumerate}
	    \end{theorem}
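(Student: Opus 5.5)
Since $\Phi$ is an $\mathbb{F}_q$-linear bijection with $wt_H(\Phi(x))=wt_{Lee}(x)$, the Gray image $\Phi(C_D)$ is an $\mathbb{F}_q$-linear code of length $2|D|$. Its Hamming weight distribution is exactly the Lee weight distribution of $C_D$. So the minimum distances in all five parts are read off directly from Theorem~\ref{Thm-1}.

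For the dimension, I would work with the map $v=a\beta_1+b\beta_2\mapsto \Phi(c_D(v))$. By Eq.~\eqref{EqEIP} this codeword depends only on $\beta_1$, and the map $\beta_1\mapsto \Phi(c_D(v))$ is $\mathbb{F}_q$-linear, since $v\cdot x=b\langle\beta_1,t_1\rangle_q$ and $\Phi(bs)=(s,s)$. Its kernel consists of those $\beta_1$ with Lee weight $0$. In part~\ref{part:1} this is $\mathcal{A}_{\Delta_1,\beta_1}=0$, which by Lemma~\ref{Lem-A}(\ref{Lem-A1}) means $\beta_1\in\Delta^{\perp}_{\cup M_i}$. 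That subspace has dimension $m-|\cup_i M_i|$, so the dimension is $|\cup_i M_i|$, matching the size $q^{|\cup M_i|}$ in Theorem~\ref{Thm-1}; part~\ref{part:2} is identical. In parts~\ref{part:3}--\ref{part:5} the weight formulas analogous to Eq.~\eqref{Eq.13} vanish only at $\beta_1=\mathbf{0}$, so the dimension is $m$. I would check this from the size $q^m$ in Theorem~\ref{Thm-1}, taking $\log_q$.

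For minimality I apply Lemma~\ref{minimal_lemma}, which requires $wt_0/wt_\infty>(q-1)/q$. I will compute $wt_\infty$ in each case from the explicit weight formula together with Lemma~\ref{Lem-A}.

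\textbf{Parts~\ref{part:1},~\ref{part:2}.} Here the weight is $2|\Delta_2|\mathcal{A}_{\Delta_1,\beta_1}$ (with $|\Delta_2^c|$ in place of $|\Delta_2|$ for part~\ref{part:2}).
\begin{itemize}
\item By Lemma~\ref{Lem-A}(\ref{Lem-A3}), $wt_\infty = 2|\Delta_2|(q-1)\sum_i q^{|M_i|-1}$.
\item By Lemma~\ref{Lem-A}(\ref{Lem-A2}), $wt_0 = 2|\Delta_2|(q-1)q^{\min|M_i|-1}$.
\end{itemize}
Substituting, the Ashikhmin--Barg inequality becomes $q\cdot q^{\min|M_i|-1} > (q-1)\sum_i q^{|M_i|-1}$, i.e.\ $q^{\min|M_i|+1}>(q-1)\sum_i q^{|M_i|}$ after multiplying by $q$.

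\textbf{Parts~\ref{part:3},~\ref{part:4}.} Using Lemma~\ref{Lem-8}, the weight for $\beta_1\ne\mathbf{0}$ is $2|\Delta_2|\big((q-1)q^{m-1}-\mathcal{A}_{\Delta_1,\beta_1}\big)$ (again with $|\Delta_2^c|$ for part~\ref{part:4}).
\begin{itemize}
\item Since $\mathcal{A}\ge 0$, we get $wt_\infty\le 2|\Delta_2|(q-1)q^{m-1}$.
\item By Lemma~\ref{Lem-A}(\ref{Lem-A3}), $wt_0=2|\Delta_2|(q-1)\big(q^{m-1}-\sum_i q^{|M_i|-1}\big)$.
\end{itemize}
An upper bound on $wt_\infty$ suffices for the ratio test. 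The condition becomes $q\big(q^{m-1}-\sum_i q^{|M_i|-1}\big)>(q-1)q^{m-1}$, i.e.\ $q^{m-1}>\sum_i q^{|M_i|}$, equivalently $q^m>\sum_i q^{|M_i|+1}$.

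\textbf{Part~\ref{part:5}.} From Eq.~\eqref{Eq.13}, $wt_\infty\le 2(q-1)q^{2m-1}$ and $wt_0$ is as in Theorem~\ref{Thm-1}(\ref{part:5}). The same algebra gives $q^{2m-1}>|\Delta_2|\sum_i q^{|M_i|}$, i.e.\ $q^{2m}>|\Delta_2|\sum_i q^{|M_i|+1}$.

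The main obstacle is bookkeeping rather than ideas. I need to make sure the weight formulas for parts~\ref{part:1}--\ref{part:4}, which Theorem~\ref{Thm-1} proves only "similarly", are written correctly, in particular the sign of $\mathcal{A}$ in the complement cases. I also need to check that replacing $wt_\infty$ by an upper bound keeps the inequality pointing the right way, which it does since the Ashikhmin--Barg ratio only improves.
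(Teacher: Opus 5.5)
Your argument is correct and is the route the paper evidently intends. The paper never writes out a proof of this theorem, but it ties the result to Theorem~\ref{Thm-1}; your steps (the isometry $wt_H(\Phi(x))=wt_{Lee}(x)$, the bounds on $wt_0$ and $wt_\infty$ from Lemma~\ref{Lem-A}, and the algebra turning each Ashikhmin--Barg ratio into the stated inequality) all check out.

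One small caveat concerns your claim that in parts 3--4 the weight vanishes only at $\beta_1=\mathbf{0}$. This holds exactly when the stated minimum distance is positive, i.e.\ $q^{m-1}>\sum_{i\in[s]}q^{|M_i|-1}$, and it fails in degenerate cases such as $q=2$, $M_1=[m]\setminus\{1\}$, $M_2=[m]\setminus\{2\}$: there $\beta_1=e_1+e_2$ is orthogonal to all of $\Delta_1^c$, so the dimension drops below $m$. This defect is inherited from Theorem~\ref{Thm-1} itself, and it cannot occur under the minimality hypotheses.
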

       % \begin{proof}
        %    We give the proof of part \ref{part:5GI} only, as the remaining parts can be proved similarly.\\
         %   \textcolor{red}{Proof pending}
        %\end{proof}

	%%%Section4	
	\section{Lee weight distribution of $C_D$ when each $\Delta_i$ has a single maximal element}\label{Sec:4}
		In this section, we determine the Lee weight distribution of $C_D$, when both $\Delta_1$ and $\Delta_2$ are simplicial complexes with a single maximal element.
		
		\begin{proposition}\label{Prop:2}
			Let $r, m \in \mathbb{N}$, $q=p^r$ and let $M, N\subseteq [m]$. Suppose  $\Delta_M$ and $\Delta_N$ are simplicial complexes of $\mathbb{F}_q^m$ with maximal elements $M$ and $N$, respectively.
			\begin{enumerate}
				\item \label{prop:1}
				Let $D=a\Delta_M + b\Delta_N \subseteq\mathcal{S}^m$. Then $C_D$ is an $\mathcal{S}$-linear code of length $q^{\vert M\vert +\vert N \vert} $, size $q^{\vert M \vert}$ and its Lee weight distribution is presented in Table \ref{Table:1}. 
				
				\begin{table}[h]
					\centering
					%\begin{adjustbox}{width=\textwidth}
					\begin{tabular}{  c | c  }
						\hline
						Lee weight    & Frequency \\
						\hline
						$2(q-1)q^{\vert M \vert + \vert N \vert -1}$ & $q^{2m-\vert M \vert} (q^{\vert M \vert}-1)$\\
						\hline
						$0$ & $q^{2m - \vert M \vert}$\\
						\hline
					\end{tabular}
					%\end{adjustbox}
					\caption{Lee weight distribution in Theorem \ref{Prop:2}(\ref{prop:1})}
					\label{Table:1}
				\end{table}
				
				\item \label{prop:2}
				Let $D=a\Delta_M + b\Delta_N^c \subseteq\mathcal{S}^m$. Then $C_D$ is an $\mathcal{S}$-linear code of length $q^{\vert M\vert}(q^m- q^{\vert N \vert})$, size $q^{\vert M \vert}$ and its Lee weight distribution is presented in Table \ref{Table:2}.
				
				\begin{table}[h]
					\centering
					%\begin{adjustbox}{width=\textwidth}
					\begin{tabular}{  c | c  }
						\hline
						Lee weight    & Frequency \\
						\hline
						$2(q-1)(q^m - q^{\vert N \vert })q^{\vert M \vert -1}$ & $q^{2m-\vert M \vert} (q^{\vert M \vert}-1)$\\
						\hline
						$0$ & $q^{2m - \vert M \vert}$\\
						\hline
					\end{tabular}
					%\end{adjustbox}
					\caption{Lee weight distribution in Theorem \ref{Prop:2}(\ref{prop:2})}
					\label{Table:2}
				\end{table}
				
				\item \label{prop:3}
				Let $D=a\Delta_M^c + b\Delta_N \subseteq\mathcal{S}^m$. Then $C_D$ is an $\mathcal{S}$-linear code of length $q^{\vert N\vert}(q^m- q^{\vert M \vert})$, size $q^{m}$ and its Lee weight distribution is presented in Table \ref{Table:3}.
				
				\begin{table}[h]
					\centering
					%\begin{adjustbox}{width=\textwidth}
					\begin{tabular}{  c | c  }
						\hline
						Lee weight    & Frequency \\
						\hline
						$2(q-1)q^{m + \vert N \vert -1}$ & $q^m(q^{m-\vert M\vert} -1)$\\
						\hline
						$2(q-1)(q^m - q^{\vert M \vert})q^{\vert N \vert -1}$ & $q^{2m-\vert M \vert} (q^{\vert M \vert}-1)$\\
						\hline
						$0$ & $q^{m}$\\
						\hline
					\end{tabular}
					%\end{adjustbox}
					\caption{Lee weight distribution in Theorem \ref{Prop:2}(\ref{prop:3})}
					\label{Table:3}
				\end{table}
				
				\item \label{prop:4}
				Let $D=a\Delta_M^c + b\Delta_N^c \subseteq\mathcal{S}^m$. Then $C_D$ is an $\mathcal{S}$-linear code of length $(q^m - q^{\vert M \vert})(q^m- q^{\vert N \vert})$, size $q^{m}$ and its Lee weight distribution is presented in Table \ref{Table:4}.
				
				\begin{table}[h]
					\centering
					%\begin{adjustbox}{width=\textwidth}
					\begin{tabular}{  c | c  }
						\hline
						Lee weight    & Frequency \\
						\hline
						$2(q-1)(q^m - q^{\vert N \vert})q^{m -1}$ & $q^m(q^{m-\vert M\vert} -1)$\\
						\hline
						$2(q-1)(q^m - q^{\vert N \vert})(q^{m -1} - q^{\vert M \vert -1})$ & $q^{2m-\vert M \vert} (q^{\vert M \vert}-1)$\\
						\hline
						$0$ & $q^{m}$\\
						\hline
					\end{tabular}
					%\end{adjustbox}
					\caption{Lee weight distribution in Theorem \ref{Prop:2}(\ref{prop:4})}
					\label{Table:4}
				\end{table}
				
				\item \label{prop:5}
				Let $D=a\Delta_M + b\Delta_N \subseteq\mathcal{S}^m$. Then $C_{D^c}$ is an $\mathcal{S}$-linear code of length $q^{2m} - q^{\vert M \vert + \vert N \vert}$, size $q^{m}$ and its Lee weight distribution is presented in Table \ref{Table:5}.
				
				\begin{table}[h]
					\centering
					%\begin{adjustbox}{width=\textwidth}
					\begin{tabular}{  c | c  }
						\hline
						Lee weight    & Frequency \\
						\hline
						$2(q-1)q^{2m -1}$ & $q^m(q^{m-\vert M\vert} -1)$\\
						\hline
						$2(q-1)(q^{2m -1} - q^{\vert M \vert + \vert N \vert -1})$ & $q^{2m-\vert M \vert} (q^{\vert M \vert}-1)$\\
						\hline
						$0$ & $q^{m}$\\
						\hline
					\end{tabular}
					%\end{adjustbox}
					\caption{Lee weight distribution in Theorem \ref{Prop:2}(\ref{prop:5})}
					\label{Table:5}
				\end{table}
				
			\end{enumerate}
		\end{proposition}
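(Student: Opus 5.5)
The plan is to specialize the general weight formula \eqref{keyeq1} to $s=k=1$, where $\Delta_1=\Delta_M$ and $\Delta_2=\Delta_N$ are $\mathbb{F}_q$-subspaces of dimensions $|M|$ and $|N|$. For $v=a\beta_1+b\beta_2$, Eq.~\eqref{EqEIP} shows that $c_D(v)$ depends only on $\beta_1$. So in every part, the weight of $c_D(v)$ is determined by $\beta_1$, while $\beta_2$ ranges freely over $\mathbb{F}_q^m$ and contributes a factor $q^m$ to every frequency.

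The key computation is $\mathcal{A}_{\Delta_M,\beta_1}$. With a single maximal element, Lemma~\ref{Lem-A} gives
\[
\mathcal{A}_{\Delta_M,\beta_1}=(q-1)q^{|M|-1}\bigl(1-\Psi(\beta_1\mid\Delta_M)\bigr).
\]
Hence $\mathcal{A}_{\Delta_M,\beta_1}=0$ exactly when $\beta_1\in\Delta_M^\perp=\Delta_{[m]\setminus M}$, a set of size $q^{m-|M|}$; otherwise it equals $(q-1)q^{|M|-1}$.

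For the complement, the second identity of Lemma~\ref{Lem-8} gives
\[
\mathcal{A}_{\Delta_M^c,\beta_1}=(q-1)q^{m-1}(1-\delta_{\mathbf 0,\beta_1})-\mathcal{A}_{\Delta_M,\beta_1}.
\]
This is the same manipulation as in Eq.~\eqref{Eq.13}. It yields three cases:
\begin{itemize}
\item[(i)] $\beta_1=\mathbf 0$;
\item[(ii)] $\beta_1\in\Delta_M^\perp\setminus\{\mathbf 0\}$, giving $(q-1)q^{m-1}$;
\item[(iii)] $\beta_1\notin\Delta_M^\perp$, giving $(q-1)(q^{m-1}-q^{|M|-1})$.
\end{itemize}

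Each part then follows from \eqref{keyeq1}, which reads $wt_{Lee}=2|D_2|\,\mathcal{A}_{D_1,\beta_1}$.
\begin{itemize}
\item[(1)] Here $|D_2|=q^{|N|}$. We get weight $0$ for $\beta_1\in\Delta_M^\perp$, which gives $q^{m-|M|}\cdot q^m$ vectors $v$, and weight $2(q-1)q^{|M|+|N|-1}$ for the other $(q^m-q^{m-|M|})q^m=q^{2m-|M|}(q^{|M|}-1)$ vectors.
\item[(2)] Same as part~(1) with $|D_2|=q^m-q^{|N|}$.
\item[(3)] and (4) Use the three-case formula for $\Delta_M^c$, with $|D_2|=q^{|N|}$ and $q^m-q^{|N|}$ respectively. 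The frequencies are $q^m$, $q^m(q^{m-|M|}-1)$ and $q^{2m-|M|}(q^{|M|}-1)$.
\item[(5)] Use the final line of \eqref{Eq.13}, $2q^{2m-1}(q-1)(1-\delta_{\mathbf 0,\beta_1})-2q^{|N|}\mathcal{A}_{\Delta_M,\beta_1}$, with the same case split. The three cases give the entries of Table~\ref{Table:5}.
\end{itemize}

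Lengths are immediate: $|a D_1+bD_2|=|D_1||D_2|$, and $|D^c|=q^{2m}-q^{|M|+|N|}$.

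I expect the size statement to be the main obstacle, because the tabulated frequencies sum to $q^{2m}$, which counts vectors $v$, not distinct codewords. I would determine the kernel of $c_D$. By Eq.~\eqref{EqEIP}, $c_D(v)=\bigl(b\langle\beta_1,t_1\rangle_q\bigr)$, so $v$ lies in the kernel iff $\langle\beta_1,t_1\rangle_q=0$ for all $t_1\in D_1$. This means $\beta_1\in D_1^\perp$, with $\beta_2$ arbitrary.
\begin{itemize}
\item For $D_1=\Delta_M$, the kernel has size $q^{m}\cdot q^{m-|M|}$, so $|C_D|=q^{|M|}$.
\item For $D_1=\Delta_M^c$, which spans $\mathbb{F}_q^m$ when $M\neq[m]$, the only solution is $\beta_1=\mathbf 0$, so $|C_D|=q^m$. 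The same holds in part~(5), since $D^c$ contains $a\Delta_M^c+b\,\mathbb{F}_q^m$.
\end{itemize}
Dividing each frequency by the kernel size converts the tables into codeword counts. I would state explicitly that the tables are read with this multiplicity, each codeword being counted $|\ker c_D|$ times, so that they are consistent with the stated sizes.
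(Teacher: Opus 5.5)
Your proposal is correct and follows essentially the same route as the paper: specialize Eq.~\eqref{Eq.13} (and its analogues from Eq.~\eqref{keyeq1}) via Lemma~\ref{Lem-A} with a single maximal element, split into the cases $\beta_1=\mathbf{0}$, $\beta_1\in\Delta_M^{\perp}\setminus\{\mathbf{0}\}$, $\beta_1\notin\Delta_M^{\perp}$ with $\beta_2$ free, and obtain the size from $|\ker(c_D)|$ by the first isomorphism theorem. Your observation that the tabulated frequencies count vectors $v\in\mathcal{S}^m$ rather than codewords, so each codeword is counted $|\ker(c_D)|$ times, is accurate (the paper's proof likewise records $\#v$), and making that reading explicit is a genuine clarification of the statement.
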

		\begin{proof}
			We present the proof of part \eqref{prop:5} only; the proofs of the other parts are similar and therefore omitted. Clearly, the length of the code $C_D$ is $q^{2m}-q^{\vert M \vert + \vert N \vert }$.\\ 
			Let $v = a\beta_1 + b\beta_2 \in \mathcal{S}^m$ and $x=at_1 + bt_2 \in D^c$, where
			\begin{equation*}
				D^c = \big(a\Delta_M^c + b\mathbb{F}_q^m \big) \bigsqcup \big(a\Delta_M + b\Delta_N^c \big).
			\end{equation*}
			From Eq.~\eqref{Eq.13} and Lemma~\ref{Lem-A}, we get
			\begin{equation*}
				wt_{Lee}(c_D(v)) = 2q^{2m-1}(q-1)(1-\delta_{\mathbf{0}, \beta_1})-2\vert \Delta_N \vert \mathcal{A}_{\Delta_M, \beta_1}.
			\end{equation*}
			Now we consider the following cases:
			\begin{enumerate}[label=(\roman*)]
				\item \label{Case-1}
				Let $\beta_1 = \mathbf{0}$. Then $wt_{Lee}(c_D(v)) = 0$. In this case, we have $\# \beta_1 =1, \ \# \beta_2 = q^m$. Therefore, $\# v=q^m.$
				
				\item Let $\beta_1 \neq \mathbf{0}$. Then we have
				\begin{equation*}
					\begin{split}
						wt_{Lee}(c_D(v)) &= 2q^{2m-1}(q-1)-2\vert \Delta_N \vert \mathcal{A}_{\Delta_M, \beta_1}\\
						&=2(q-1)q^{2m-1}-2(q-1)q^{\vert M \vert + \vert N \vert -1}\big(1-\Psi(\beta_1 \mid \Delta_M)\big).
					\end{split}
				\end{equation*}
				Here we deal the following sub-cases:
				\begin{enumerate}[label=(\alph*)]
					\item If $\beta_1 \in \Delta_M^{\perp}$ then $wt_{Lee}(c_D(v)) =2(q-1)q^{2m-1}$. In this case, we have $\# \beta_1 = q^{m-\vert M \vert }-1, \ \# \beta_2 = q^m$. Therefore, $\# v=q^m(q^{m-\vert M \vert }-1).$
					
					\item If $\beta_1 \notin \Delta_M^{\perp}$ then $wt_{Lee}(c_D(v)) = 2(q-1)\big[q^{2m-1}-q^{\vert M \vert + \vert N \vert -1}\big]$. In this case, we have $\# \beta_1 = q^m - q^{m-\vert M \vert }, \ \# \beta_2 = q^m$. Therefore, $\# v=q^{2m - \vert M \vert }(q^{\vert M \vert} - 1).$
				\end{enumerate}
			\end{enumerate}
			Based on the above calculations, we obtain Table \ref{Table:5}.\\
			Define the map $c_D : \mathcal{S}^m \longrightarrow C_D$ defined by $c_D(v) = (v \cdot d)_{d \in D}$. Note that it is a surjective linear transformation. From case \ref{Case-1}, we have $|\ker(c_D)| = |\{ v \in \mathcal{S}^m : x \cdot d = 0 \ \forall \ d \in D \}| = q^{m}$. By the first isomorphism theorem, we obtain $\vert C_D\vert = \frac{\vert \mathcal{S}^m \vert }{\vert \ker(c_D) \vert} = q^{m}$. This completes the proof.~\qed 
		\end{proof}
		
		\begin{theorem}
			Let $r, m \in \mathbb{N}$, $q=p^r$ and let $M, N\subseteq [m]$. Suppose  $\Delta_M$ and $\Delta_N$ are simplicial complexes of $\mathbb{F}_q^m$ with maximal elements $M$ and $N$, respectively.
			\begin{enumerate}
				\item \label{corol:1}
				Let $D=a\Delta_M + b\Delta_N \subseteq\mathcal{S}^m$. Then $\Phi(C_D)$ is a $1$-weight $2(q-1)q^{\vert M \vert + \vert N \vert -1}$-divisible code over $\mathbb{F}_q$ with parameters $\big[2q^{\vert M\vert +\vert N \vert}, \vert M \vert, 2(q-1)q^{\vert M \vert + \vert N \vert -1}\big]$, and hence minimal.
				
				\item \label{corol:2}
				Let $D=a\Delta_M + b\Delta_N^c \subseteq\mathcal{S}^m$. Then $\Phi(C_D)$ is a $1$-weight $2(q-1)(q^m - q^{\vert N \vert })q^{\vert M \vert -1}$-divisible code over $\mathbb{F}_q$ with parameters $\big[2q^{\vert M\vert}(q^m- q^{\vert N \vert}), \vert M \vert, 2(q-1)(q^m - q^{\vert N \vert })q^{\vert M \vert -1} \big]$, and hence minimal.
				
				\item \label{corol:3}
				Let $D=a\Delta_M^c + b\Delta_N \subseteq\mathcal{S}^m$. Then $\Phi(C_D)$ is a $2$-weight $2(q-1)q^{\vert M \vert + \vert N \vert -1}$-divisible code over $\mathbb{F}_q$ with parameters $\big[2q^{\vert N\vert}(q^m- q^{\vert M \vert}), m,  2(q-1)(q^m - q^{\vert M \vert})q^{\vert N \vert -1} \big]$, where $wt_{0} = 2(q-1)(q^m - q^{\vert M \vert})q^{\vert N \vert -1}$ and $wt_{\infty} = 2(q-1)q^{m + \vert N \vert -1}$, and is minimal if $m > {\vert M \vert +1}$.
                \begin{enumerate}
                    \item Let $\vert M \vert + \vert N \vert \leq m$ and $\theta_1 = 2(q^{\vert N \vert }-1)$. If $0 \leq \theta_1 < 2(q-1)(\vert M \vert + \vert N \vert)$ then $\Phi(C_D)$ is optimal with respect to the Griesmer bound.

                    \item Let $\vert M \vert + \vert N \vert \geq m$ and $\theta_2 = 2(q^{\vert N \vert } - q^{\vert M \vert + \vert N \vert -m})$. If $0 \leq \theta_2 < 2(q-1)m$ then $\Phi(C_D)$ is optimal with respect to the Griesmer bound.
                \end{enumerate}

				\item \label{corol:4}
				Let $D=a\Delta_M^c + b\Delta_N^c \subseteq\mathcal{S}^m$. Then $\Phi(C_D)$ is a $2$-weight $2(q-1)(q^m - q^{\vert N \vert}) q^{\vert M \vert -1}$-divisible code over $\mathbb{F}_q$ with parameters $\big[2(q^m - q^{\vert M \vert})(q^m- q^{\vert N \vert}), m, 2(q-1)(q^m - q^{\vert N \vert})(q^{m -1} - q^{\vert M \vert -1})\big]$, where $wt_{0} = 2(q-1)(q^m - q^{\vert N \vert})(q^{m -1} - q^{\vert M \vert -1})$ and $wt_{\infty} = 2(q-1)(q^m - q^{\vert N \vert})q^{m -1}$, and is minimal if $m > {\vert M \vert +1}$.
				\begin{enumerate}
                    \item Let $\vert M \vert + \vert N \vert \leq m$ and $\theta_1 = 2[(q^m - q^{\vert M \vert} - q^{\vert N \vert} + 1) -(q-1)(m-\vert M \vert - \vert N \vert )]$. If $0 \leq \theta_1 < 2(q-1)(\vert M \vert + \vert N \vert)$ then $\Phi(C_D)$ is optimal with respect to the Griesmer bound.

                    \item Let $\vert M \vert + \vert N \vert \geq m$ and $\theta_2 = 2(q^m - q^{\vert M \vert} - q^{\vert N \vert } + q^{\vert M \vert + \vert N \vert -m})$. If $0 \leq \theta_2 < 2(q-1)m$ then $\Phi(C_D)$ is optimal with respect to the Griesmer bound.
                \end{enumerate}

				\item \label{corol:5}
				Let $D=a\Delta_M + b\Delta_N \subseteq\mathcal{S}^m$. Then $\Phi(C_{D^c})$ is a $2$-weight $2(q-1) q^{\vert M \vert + \vert N \vert -1}$-divisible code over $\mathbb{F}_q$ with parameters $\big[2(q^{2m} - q^{\vert M \vert + \vert N \vert}), m, 2(q-1)(q^{2m -1} - q^{\vert M \vert + \vert N \vert -1}) \big]$, where $wt_{0} = 2(q-1)(q^{2m -1} - q^{\vert M \vert + \vert N \vert -1})$ and $wt_{\infty} = 2(q-1)q^{2m -1}$, and is minimal if $2m > \vert M \vert + \vert N \vert +1$.
                \begin{enumerate}
                    \item Let $\vert M \vert + \vert N \vert \leq m$ and $\theta_1 = 2(q^{m}-1)$. If $0 \leq \theta_1 < 2(q-1)(\vert M \vert + \vert N \vert)$ then $\Phi(C_D)$ is optimal with respect to the Griesmer bound.

                    \item Let $\vert M \vert + \vert N \vert \geq m$ and $\theta_2 = 2(q^{m} - q^{\vert M \vert + \vert N \vert -m})$. If $0 \leq \theta_2 < 2(q-1)m$ then $\Phi(C_D)$ is optimal with respect to the Griesmer bound.
                \end{enumerate}
			\end{enumerate}
		\end{theorem}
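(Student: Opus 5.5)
The plan is to transfer everything from Proposition~\ref{Prop:2} through the Gray map $\Phi$ of Eq.~\eqref{PhiEq}. Since $\Phi(as+bt)=(t,s+t)$ is an $\mathbb{F}_q$-linear bijection $\mathcal{S}\to\mathbb{F}_q^2$, extended coordinatewise, $\Phi(C_D)$ is an $\mathbb{F}_q$-linear code of length $2|D|$ with $|\Phi(C_D)|=|C_D|$. By definition $wt_H(\Phi(c))=wt_{Lee}(c)$.

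\textbf{Length, dimension and weights.} The dimension is $\log_q|C_D|$, i.e.\ $|M|$ in parts (\ref{corol:1})--(\ref{corol:2}) and $m$ in parts (\ref{corol:3})--(\ref{corol:5}). The Hamming weight distribution of $\Phi(C_D)$ is read off from Tables~\ref{Table:1}--\ref{Table:5}, dividing each frequency by $|\ker c_D|$ because the tables count vectors $v\in\mathcal{S}^m$ rather than codewords. This gives one nonzero weight in parts (\ref{corol:1})--(\ref{corol:2}) and two nonzero weights in parts (\ref{corol:3})--(\ref{corol:5}). The minimum distance $wt_0$ and maximum weight $wt_\infty$ are then the smaller and larger table entries.

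\textbf{Divisibility.} The divisibility constant is the gcd of the nonzero weights. For example, in part (\ref{corol:3}):
\[
\gcd\bigl(2(q-1)q^{m+|N|-1},\,2(q-1)q^{|M|+|N|-1}(q^{m-|M|}-1)\bigr)=2(q-1)q^{|M|+|N|-1},
\]
since $\gcd(q^{m-|M|},q^{m-|M|}-1)=1$. Parts (\ref{corol:4}) and (\ref{corol:5}) are handled the same way.

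\textbf{Minimality.} For the one-weight codes, minimality is automatic: every codeword covering another nonzero codeword must have the same support, which forces proportionality. Alternatively, apply Lemma~\ref{minimal_lemma} with $wt_0/wt_\infty=1$. For the two-weight codes, Lemma~\ref{minimal_lemma} requires $wt_0/wt_\infty>(q-1)/q$.
\begin{itemize}
\item In parts (\ref{corol:3}) and (\ref{corol:4}) the ratio is $(q^m-q^{|M|})/q^m$, and the inequality becomes $q^{m-1}>q^{|M|}$, i.e.\ $m>|M|+1$.
\item In part (\ref{corol:5}) the ratio is $(q^{2m}-q^{|M|+|N|})/q^{2m}$, and the inequality becomes $2m>|M|+|N|+1$.
\end{itemize}

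\textbf{Griesmer optimality.} This is the main computational step. Optimality means no $[n,m,d+1]$ code exists, so it suffices to show
\[
\sum_{i=0}^{m-1}\left\lceil\frac{d+1}{q^i}\right\rceil>n
\]
by Lemma~\ref{griesmerbound}. In each case $d=2(q-1)q^{|M|+|N|-1}X$ for an explicit integer $X$; for instance, $X=q^{m-|M|}-1$ in part (\ref{corol:3}). Hence $d/q^i$ is an integer for all $i\le |M|+|N|-1$, and for those $i$ we have $\lceil (d+1)/q^i\rceil=d/q^i+1$.

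I would split at $\min(m,|M|+|N|)$, which produces the two subcases (a) and (b).
\begin{itemize}
\item For $i<\min(m,|M|+|N|)$, sum the geometric series $d/q^i+1$ exactly, and add one for each such term.
\item For the remaining $i$ in case (a), $|M|+|N|\le m$, write $d=2(q-1)\bigl(A-q^{|M|+|N|-1}\bigr)$ with $A$ a high power of $q$. Then bound $\lceil (d+1)/q^i\rceil\ge d/q^i$ from below, controlling the negative fractional part separately.
\end{itemize}
Comparing the total with $n$ should give
\[
\sum_{i=0}^{m-1}\left\lceil\frac{d+1}{q^i}\right\rceil=n-\theta+(\text{number of terms gaining }+1),
\]
where $\theta$ is the stated $\theta_1$ or $\theta_2$. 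The hypothesis $\theta<2(q-1)(|M|+|N|)$, resp.\ $2(q-1)m$, then forces the sum to exceed $n$.

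I expect the hardest part to be this bookkeeping of ceilings for indices $i$ beyond $|M|+|N|-1$ in case (a). My first attempt there would be to use $\lceil x\rceil\ge x$ and absorb the fractional losses into $\theta_1$, checking that the stated $\theta_1$ matches exactly.
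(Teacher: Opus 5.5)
Your treatment of length, dimension, the weight structure, the divisibility constants via the gcd, and minimality via Lemma~\ref{minimal_lemma} is correct. Dividing the table frequencies by $|\ker c_D|$ is exactly the right correction.

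\textbf{The gap is the last sentence of the Griesmer step.} By your own bookkeeping, the excess of $\sum_{i=0}^{m-1}\lceil (d+1)/q^i\rceil$ over $n-\theta$ comes only from the indices that gain $+1$. There are at most $\min(m,|M|+|N|)$ of them, so nothing of size $2(q-1)(|M|+|N|)$ or $2(q-1)m$ ever enters. Carrying the computation out exactly:
\begin{itemize}
\item In every (b) case, each $d/q^i$ with $i\le m-1$ is an integer and $d=\frac{q-1}{q}n$. Hence
\[
\sum_{i=0}^{m-1}\left\lceil \frac{d+1}{q^i}\right\rceil = n\bigl(1-q^{-m}\bigr)+m = n-\theta_2+m .
\]
\item In (3)(a) the sum is $n-\theta_1+|M|+|N|-\epsilon$, where $\epsilon=1$ if $q\ge 3$ and $|M|+|N|<m$, and $\epsilon=0$ otherwise.
\item In (4)(a) the deficit is not even $\theta_1$. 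The sum is $n-2(q^m-q^{|M|}-q^{|N|}+1)+m+\epsilon'$, where $\epsilon'=1$ if $|M|+|N|<m$ and $\epsilon'=0$ otherwise.
\end{itemize}
So the Griesmer bound excludes an $[n,m,d+1]$ code only when $\theta_2<m$, resp.\ $\theta_1<|M|+|N|-\epsilon$. The stated hypotheses are weaker by a factor $2(q-1)\ge 2$ and do not force the sum past $n$.

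\textbf{This cannot be patched, because the optimality claims in parts (3) and (4) are false as stated.}
\begin{itemize}
\item Take $q=2$, $m=2$, $M=N=\{1\}$. In part (3), $\Phi(C_D)$ is the $4$-fold replication of $\mathbb{F}_2^2$, a binary $[8,2,4]$ code, with $|M|+|N|=m$ and $\theta_2=2<4=2(q-1)m$. Part (4) gives the same $[8,2,4]$ code with $\theta_1=\theta_2=2<4$. But $\lceil 5\rceil+\lceil 5/2\rceil=8=n$, and two copies of the $[3,2,2]$ simplex code followed by $\mathbb{F}_2^2$ form an $[8,2,5]$ code.
\item For (3)(a) with $|M|+|N|<m$, take $q=2$, $m=3$, $M=N=\{1\}$. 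Then $\Phi(C_D)$ is a $[24,3,12]$ code (the $4$-fold replication of the $[7,3,4]$ simplex code punctured at the coordinate $e_1$), and $\theta_1=2<4$. Yet three copies of the $[7,3,4]$ simplex code followed by $\mathbb{F}_2^3$ form a $[24,3,13]$ code.
\end{itemize}

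What your method actually proves is the corrected statement: threshold $m$ in the (b) cases, threshold $|M|+|N|-\epsilon$ in (3)(a), and the corrected deficit above in (4)(a). In part (5) the stated hypotheses are never satisfied when $D^c\neq\emptyset$; for instance $\theta_1=2(q^m-1)\ge 2(q-1)m\ge 2(q-1)(|M|+|N|)$. So those items hold only vacuously.
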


%Section 5
	\section{Construction of subfield-like codes from $\mathcal{S}$-linear codes}\label{Sec:5}
		In this section, we present the construction of subfield-like codes over $\mathbb{F}_q$ from $\mathcal{S}$-linear codes ${C}_{D}$ by means of an $\mathbb{F}_q$-linear map. For more details, we refer the reader to~\cite{AMC1}.
		
		Define the projection $\tau : \mathcal{S} \longrightarrow \mathbb{F}_q$ by
		\begin{equation}\label{tauMap}
			\tau(a\beta_1 + b\beta_2) = \beta_2.
		\end{equation}
		We extend this to a map from $\mathcal{S}^m$ to $\mathbb{F}_q^{m}$ component-wise for any $m \in \mathbb{N}$. Note that $\tau: C_{D} \longrightarrow \tau(C_{D})\subseteq \mathbb{F}_{q}^{\vert D\vert}$ defined by
		\begin{equation}
			\tau\big(c_{D}(v)\big) = \big(\tau(v\cdot d)\big)_{d\in D}
		\end{equation}
		is a surjective linear transformation of $\mathbb{F}_q$-vector spaces and $\tau(C_{D})$ is a linear code over $\mathbb{F}_q$ having length $\vert D\vert$.	
		
		Let $v = a\beta_1 + b\beta_2 \in \mathcal{S}^m$ and $D = aD_1 + bD_2 \subseteq \mathcal{S}^m$. By Eq. \eqref{c_DMap}, we have
		\begin{equation}\label{TauDef}
			\tau(C_{D}) = \{\tau(c_{D}(v)) = (\tau(v\cdot d))_{d\in D} : v \in \mathcal{S}^m \}\subseteq \mathbb{F}_{q}^{\vert D\vert}.
		\end{equation}
		Note that $\tau(C_{D})$ is an $\mathbb{F}_q$-linear code of length $\vert D \vert $.
		Here we define $T := \tau \circ c_{D}$, where $c_{D}$ is defined by Eq. \eqref{c_DMap} and $\tau$ is defined by Eq. \eqref{tauMap}.
		Observe that the map $T: \mathcal{S}^m \longrightarrow C_{D} \longrightarrow \tau(C_D) \subseteq \mathbb{F}_{q}^{\vert D \vert}$ defined by
		\begin{equation}
			T(x) = \big(\tau(c_{D}(x))\big) = \big(\tau(x\cdot d)\big)_{d\in D}
		\end{equation}
		is a surjective linear transformation of $\mathbb{F}_q$-vector spaces. By the first isomorphism theorem, we have
		\begin{equation}\label{DimCount}
			\vert \tau(C_{D})\vert = \frac{\vert \mathcal{S}^m \vert}{\vert \ker(T)\vert}.
		\end{equation}
		
		To obtain the weight distribution of $\tau(C_D)$, we first determine the Hamming weight of an arbitrary element of $\tau(C_D)$. Let $v = a\beta_1 + b\beta_2 \in \mathcal{S}^m$ and $d = at_1 + bt_2 \in D = aD_1 + bD_2 \subseteq \mathcal{S}^m$. We have
		\begin{equation}\label{KeyEq_SF_like}
			\begin{split}
				{ wt_H}(\tau (c_D(v)))
				& = { wt_H}(\tau ((a\beta_1 + b\beta_2)(at_1 + bt_2))_{t_i\in D_i})\\
				& = { wt_H}(\tau ((b \langle \beta_1, t_1 \rangle_q))_{t_i\in D_i})\\
				&={ wt_H}((\langle \beta_1, t_1 \rangle_q)_{t_i\in D_i}).
			\end{split}
		\end{equation}
		Observe from Eq.~\eqref{Eq-Lee} that
		\[
		wt_H\bigl(\tau(c_D(v))\bigr) =\frac{1}{2}\,wt_{Lee}\bigl(c_D(v)\bigr).
		\]
		Therefore, the proof of the following result follows directly from that of Theorem~\ref{Thm-1}.
		
		\begin{theorem}\label{Thm-2}
			Let $m, r, s, k \in \mathbb{N}$ and $q=p^r$. Suppose that $\Delta_1$ and $\Delta_2$ are simplicial complexes of $\mathbb{F}_q^m$ with
			\[
			\mathcal{M}=\{M_1,M_2,\dots,M_s\}
			\quad \text{and} \quad
			\mathcal{N}=\{N_1,N_2,\dots,N_k\}
			\]
			being the sets of maximal elements of $\Delta_1$ and $\Delta_2$, respectively. Let $M_i \setminus \cup_{j\in [s]\setminus \{i\}}M_j\neq \emptyset$ for any $i\in [s]$. 
			\begin{enumerate}
				\item \label{part:1sub}
				Let $D = a\Delta_1 + b\Delta_2 \subseteq \mathcal{S}^m$. Then $\tau{(C_D)}$ is an $\big[\vert \Delta_1 \vert \vert \Delta_2 \vert, \vert \cup_{i\in [s]} M_i \vert, \vert \Delta_2\vert (q-1) q^{{\rm min}_{i\in [s]} \{\vert M_i \vert \} -1}\big]$-linear code over $\mathbb{F}_q$, and is minimal if
                \[
                q^{{\rm min}_{i \in [s]} \{\vert M_i \vert} + 1 \} > (q-1)\sum_{ i \in [s]}q^{\vert M_i \vert}.
                \]
				
				\item \label{part:2sub}
				Let $D = a\Delta_1 + b\Delta_2^c \subseteq \mathcal{S}^m$. Then $\tau{(C_D)}$ is an $\big[\vert \Delta_1 \vert (q^m- \vert \Delta_2 \vert), \vert \cup_{i\in [s]} M_i \vert, (q^m - \vert \Delta_2\vert) (q-1) q^{{\rm min}_{i\in [s]} \{\vert M_i \vert \} -1}\big]$-linear code over $\mathbb{F}_q$, and is minimal if
                \[
                q^{{\rm min}_{i \in [s]} \{\vert M_i \vert} + 1 \} > (q-1)\sum_{ i \in [s]}q^{\vert M_i \vert}.
                \]

				\item \label{part:3sub}
				Let $D = a\Delta_1^c + b\Delta_2 \subseteq \mathcal{S}^m$. Then $\tau{(C_D)}$ is an $\big[ (q^m - \vert \Delta_1 \vert) \vert \Delta_2 \vert, m, \vert \Delta_2\vert (q-1)\big(q^{m-1} - \sum_{i\in [s]}q^{ \vert M_i \vert -1}\big)\big]$-linear code over $\mathbb{F}_q$, and is minimal if
                \[
                q^m > \sum_{ i \in [s]} q^{\vert M_i \vert + 1}.
                \]
				
				\item \label{part:4sub}
				Let $D = a\Delta_1^c + b\Delta_2^c \subseteq \mathcal{S}^m$. Then $\tau{(C_D)}$ is an $\big[ (q^m - \vert \Delta_1 \vert) (q^m- \vert \Delta_2 \vert), m, (q^m - \vert \Delta_2\vert) (q-1)\big(q^{m-1} - \sum_{i\in [s]} q^{ \vert M_i \vert -1}\big)\big]$-linear code over $\mathbb{F}_q$, and is minimal if
                \[
                q^m > \sum_{ i \in [s]} q^{\vert M_i \vert + 1}.
                \]
				
				\item \label{part:5sub}
				Let $D = a\Delta_1 + b\Delta_2 \subseteq \mathcal{S}^m$. Then $\tau{(C_{D^c})}$ is an $\big[ (q^{2m} - \vert \Delta_1 \vert \vert \Delta_2 \vert), m, (q-1)\big(q^{2m-1} - \vert \Delta_2 \vert \sum_{i\in [s]} q^{ \vert M_i \vert -1}\big)\big]$-linear code over $\mathbb{F}_q$, and is minimal if
                \[
                q^{2m} > \vert \Delta_2 \vert \sum_{ i \in [s]} q^{\vert M_i \vert + 1}.
                \]
			\end{enumerate}
		\end{theorem}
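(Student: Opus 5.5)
The argument reduces Theorem~\ref{Thm-2} to Theorem~\ref{Thm-1} through the identity $wt_H(\tau(c_D(v)))=\tfrac12 wt_{Lee}(c_D(v))$ obtained from Eqs.~\eqref{Eq-Lee} and~\eqref{KeyEq_SF_like}. We give the details for part~\ref{part:5sub}; the other parts follow the same pattern, with the corresponding weight expressions from the proof of Theorem~\ref{Thm-1}. The proof has three steps: the length, the dimension, and the weights.

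\textbf{Length.} The length of $\tau(C_{D^c})$ equals $|D^c| = q^{2m}-|\Delta_1||\Delta_2|$, since $\tau$ acts coordinatewise.

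\textbf{Dimension.} Write $v=a\beta_1+b\beta_2$. By Eq.~\eqref{Eq.13} halved, for $\beta_1\neq\mathbf 0$ we have
\[
wt_H(\tau(c_{D^c}(v)))=(q-1)q^{2m-1}-|\Delta_2|\,\mathcal{A}_{\Delta_1,\beta_1}.
\]
Lemma~\ref{Lem-A}(\ref{Lem-A3}) gives $\mathcal{A}_{\Delta_1,\beta_1}\le (q-1)\sum_i q^{|M_i|-1}$, and $|\Delta_2|\sum_i q^{|M_i|}<q^{2m}$ in the cases of interest. Hence the weight is positive exactly when $\beta_1\neq\mathbf 0$.

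For part~\ref{part:5sub}, this positivity needs care. Using the decomposition $D^c=(a\Delta_1^c+b\mathbb{F}_q^m)\sqcup(a\Delta_1+b\Delta_2^c)$, the coordinates indexed by $a\Delta_1^c$ already give a code whose kernel in $\beta_1$ is trivial, because $\Delta_1^c$ spans $\mathbb{F}_q^m$ whenever $\Delta_1\neq\mathbb{F}_q^m$. Therefore $\ker T=\{a\mathbf 0+b\beta_2\}$ has size $q^m$. Eq.~\eqref{DimCount} then gives $|\tau(C_{D^c})|=q^{2m}/q^m=q^m$, so the dimension is $m$.

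In the other parts the kernel is $\{\beta_1\in\Delta^\perp_{\cup M_i}\}\times\mathbb{F}_q^m$ when the $a$-component is $\Delta_1$, by Lemma~\ref{Lem-A}(\ref{Lem-A1}). This yields dimension $|\cup_i M_i|$ in parts~\ref{part:1sub}--\ref{part:2sub}, and dimension $m$ when the $a$-component is $\Delta_1^c$ in parts~\ref{part:3sub}--\ref{part:4sub}.

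\textbf{Minimum distance.} Lemma~\ref{Lem-A}(\ref{Lem-A3}) gives the lower bound, and this bound is attained because Lemma~\ref{Lem-A+}(\ref{Lem-A+2}) applies under the hypothesis $M_i\setminus\cup_{j\ne i}M_j\neq\emptyset$. This is exactly half of $d_L$ in Theorem~\ref{Thm-1}.

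In parts~\ref{part:1sub}--\ref{part:2sub} the minimum uses Lemma~\ref{Lem-A}(\ref{Lem-A2}) together with Lemma~\ref{Lem-A+}(\ref{Lem-A+1}) instead. Choosing $i$ with $|M_i|$ minimal, the relevant quantity $\mathcal{A}$ attains its smallest positive value $(q-1)q^{\min|M_i|-1}$.

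\textbf{Minimality.} Apply Lemma~\ref{minimal_lemma}. This needs $wt_0$ from the previous step and $wt_\infty$.
\begin{itemize}
\item In part~\ref{part:5sub}, $\mathcal{A}\ge 0$, with $\mathcal{A}=0$ possible for $\beta_1\in\Delta^\perp_{\cup M_i}\setminus\{0\}$. So $wt_\infty\le (q-1)q^{2m-1}$. The condition $wt_0/wt_\infty>(q-1)/q$ reduces to $q^{2m-1}-|\Delta_2|\sum_i q^{|M_i|-1}>(q-1)q^{2m-2}$, that is, $q^{2m}>|\Delta_2|\sum_i q^{|M_i|+1}$.
\item In parts~\ref{part:3sub}--\ref{part:4sub} the same computation gives $wt_\infty\le |\Delta_2|(q-1)q^{m-1}$, resp.\ with $q^m-|\Delta_2|$ in place of $|\Delta_2|$. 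This yields $q^m>\sum_i q^{|M_i|+1}$.
\item In parts~\ref{part:1sub}--\ref{part:2sub}, $wt_\infty\le|\Delta_2|(q-1)\sum_i q^{|M_i|-1}$ by Lemma~\ref{Lem-A}(\ref{Lem-A3}). The ratio condition becomes $q^{\min|M_i|+1}>(q-1)\sum_i q^{|M_i|}$.
\end{itemize}

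The main obstacle is the exact dimension in the complement cases. This requires checking that no nonzero $\beta_1$ gives a zero codeword, which is where the positivity of the weight formula, or the spanning argument above, is needed.
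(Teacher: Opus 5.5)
Your route is the paper's: halve the Lee weight via Eqs.~\eqref{Eq-Lee} and~\eqref{KeyEq_SF_like}, then rerun the case analysis from the proof of Theorem~\ref{Thm-1}. The weight formulas, the attainment of the minimum through Lemmas~\ref{Lem-A} and~\ref{Lem-A+}, and your Ashikhmin--Barg computations are all correct.

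The step that fails is the one you single out as the main obstacle. The claim that $\Delta_1^c$ spans $\mathbb{F}_q^m$ whenever $\Delta_1\neq\mathbb{F}_q^m$ is false for $q=2$. Take $m=2$, $\mathcal{M}=\{\{1\},\{2\}\}$ (this satisfies the private-element hypothesis) and $\Delta_2=\mathbb{F}_2^2$. Then $\Delta_1^c=\{(1,1)\}$ and $D^c=a\{(1,1)\}+b\mathbb{F}_2^2$. So $\tau(C_{D^c})=\{0000,1111\}$ is a $[4,1,4]$ code, whereas part (5) predicts $[4,2,0]$. With $\Delta_2=\{\mathbf 0\}$, part (3) likewise predicts a $[1,2,0]$ code, but the true code is $[1,1,1]$. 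In this configuration the statement itself is false, so no argument can close the gap. The paper's one-line reduction has the same hole: the proof of Theorem~\ref{Thm-1} asserts $wt_{Lee}(c_D(v))=0\iff\beta_1=\mathbf 0$ without justification.

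A spanning argument also cannot do better than the positivity argument you set aside. The bound of Lemma~\ref{Lem-A}(\ref{Lem-A3}) is attained by Lemma~\ref{Lem-A+}(\ref{Lem-A+2}). Hence the $\beta_1$-kernel is trivial if and only if $q^m>\sum_i q^{|M_i|}$ in parts (3)--(4), and if and only if $q^{2m}>|\Delta_2|\sum_i q^{|M_i|}$ in part (5). Under the private-element hypothesis, $|M_i|\le m-s+1$, so
\[
\sum_i q^{|M_i|}\le s\,q^{m-s+1}\le q^m .
\]
Equality holds only when $s=1$ and $M_1=[m]$ (length $0$), or when $q=2$, $s=2$ and $|M_1|=|M_2|=m-1$.

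The correct repair is therefore to exclude that last configuration in parts (3)--(4), and in part (5) when in addition $\Delta_2=\mathbb{F}_q^m$. This configuration is automatically excluded in two situations:
\begin{itemize}
\item When $q\ge 3$. There your spanning claim is true: $\Delta_1^c$ is closed under enlarging supports, so it contains $\mathbf 1$ and $\mathbf 1+(\lambda-1)e_i$ for $\lambda\ne 0,1$.
\item Under each of the stated minimality hypotheses, since each implies the corresponding strict inequality.
\end{itemize}
In part (5) you should use both pieces of the decomposition anyway. If $\Delta_2\neq\mathbb{F}_q^m$, the first components $t_1$ range over $\Delta_1^c\cup\Delta_1=\mathbb{F}_q^m$, so the kernel in $\beta_1$ is trivial without any spanning lemma.
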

        
        \begin{corollary}
            Let $r, m \in \mathbb{N}$, $q=p^r$ and let $M, N\subseteq [m]$. Suppose  $\Delta_M$ and $\Delta_N$ are simplicial complexes of $\mathbb{F}_q^m$ with maximal elements $M$ and $N$, respectively.
			\begin{enumerate}
				\item \label{corol:1SF}
				Let $D=a\Delta_M + b\Delta_N \subseteq\mathcal{S}^m$. Then $\tau(C_D)$ is a $1$-weight $(q-1)q^{\vert M \vert + \vert N \vert -1}$-divisible code over $\mathbb{F}_q$ with parameters $\big[q^{\vert M\vert +\vert N \vert}, \vert M \vert, (q-1)q^{\vert M \vert + \vert N \vert -1}\big]$, and hence minimal.
				
				\item \label{corol:2SF}
				Let $D=a\Delta_M + b\Delta_N^c \subseteq\mathcal{S}^m$. Then $\tau(C_D)$ is a $1$-weight $(q-1)(q^m - q^{\vert N \vert })q^{\vert M \vert -1}$-divisible code over $\mathbb{F}_q$ with parameters $\big[q^{\vert M\vert}(q^m- q^{\vert N \vert}), \vert M \vert, (q-1)(q^m - q^{\vert N \vert })q^{\vert M \vert -1} \big]$, and hence minimal.
				
				\item \label{corol:3SF}
				Let $D=a\Delta_M^c + b\Delta_N \subseteq\mathcal{S}^m$. Then $\tau(C_D)$ is a $2$-weight $(q-1)q^{\vert M \vert + \vert N \vert -1}$-divisible code over $\mathbb{F}_q$ with parameters $\big[(q^m- q^{\vert M \vert})q^{\vert N\vert}, m,  (q-1)(q^m - q^{\vert M \vert})q^{\vert N \vert -1} \big]$, where $wt_{0} = (q-1)(q^m - q^{\vert M \vert})q^{\vert N \vert -1}$ and $wt_{\infty} = (q-1)q^{m + \vert N \vert -1}$, and is minimal if $m > {\vert M \vert +1}$.
                \begin{enumerate}
                    \item Let $\vert M \vert + \vert N \vert \leq m$ and $\theta_1 = (q^{\vert N \vert }-1)$. If $0 \leq \theta_1 < (q-1)(\vert M \vert + \vert N \vert)$ then $\tau(C_D)$ is optimal with respect to the Griesmer bound.

                    \item Let $\vert M \vert + \vert N \vert \geq m$ and $\theta_2 = (q^{\vert N \vert } - q^{\vert M \vert + \vert N \vert -m})$. If $0 \leq \theta_2 < (q-1)m$ then $\tau(C_D)$ is optimal with respect to the Griesmer bound.
                \end{enumerate}

				\item \label{corol:4SF}
				Let $D=a\Delta_M^c + b\Delta_N^c \subseteq\mathcal{S}^m$. Then $\tau(C_D)$ is a $2$-weight $(q-1)(q^m - q^{\vert N \vert}) q^{\vert M \vert -1}$-divisible code over $\mathbb{F}_q$ with parameters $\big[(q^m - q^{\vert M \vert})(q^m- q^{\vert N \vert}), m, (q-1)(q^m - q^{\vert N \vert})(q^{m -1} - q^{\vert M \vert -1})\big]$, where $wt_{0} = (q-1)(q^m - q^{\vert N \vert})(q^{m -1} - q^{\vert M \vert -1})$ and $wt_{\infty} = (q-1)(q^m - q^{\vert N \vert})q^{m -1}$, and is minimal if $m > {\vert M \vert +1}$.
                \begin{enumerate}
                    \item Let $\vert M \vert + \vert N \vert \leq m$ and $\theta_1 = [(q^m - q^{\vert M \vert} - q^{\vert N \vert} + 1) -(q-1)(m-\vert M \vert - \vert N \vert )]$. If $0 \leq \theta_1 < (q-1)(\vert M \vert + \vert N \vert)$ then $\tau(C_D)$ is optimal with respect to the Griesmer bound.

                    \item Let $\vert M \vert + \vert N \vert \geq m$ and $\theta_2 = (q^m - q^{\vert M \vert} - q^{\vert N \vert } + q^{\vert M \vert + \vert N \vert -m})$. If $0 \leq \theta_2 < (q-1)m$ then $\tau(C_D)$ is optimal with respect to the Griesmer bound.
                \end{enumerate}

				\item \label{corol:5SF}
				Let $D=a\Delta_M + b\Delta_N \subseteq\mathcal{S}^m$. Then $\tau(C_{D^c})$ is a $2$-weight $(q-1) q^{\vert M \vert + \vert N \vert -1}$-divisible code over $\mathbb{F}_q$ with parameters $\big[(q^{2m} - q^{\vert M \vert + \vert N \vert}), m, (q-1)(q^{2m -1} - q^{\vert M \vert + \vert N \vert -1}) \big]$, where $wt_{0} = (q-1)(q^{2m -1} - q^{\vert M \vert + \vert N \vert -1})$ and $wt_{\infty} = (q-1)q^{2m -1}$, and is minimal if $2m > \vert M \vert + \vert N \vert +1$.
                \begin{enumerate}
                    \item Let $\vert M \vert + \vert N \vert \leq m$ and $\theta_1 = (q^{m}-1)$. If $0 \leq \theta_1 < (q-1)(\vert M \vert + \vert N \vert)$ then $\tau(C_D)$ is optimal with respect to the Griesmer bound.

                    \item Let $\vert M \vert + \vert N \vert \geq m$ and $\theta_2 = (q^{m} - q^{\vert M \vert + \vert N \vert -m})$. If $0 \leq \theta_2 < (q-1)m$ then $\tau(C_D)$ is optimal with respect to the Griesmer bound.
                \end{enumerate}
			\end{enumerate}
        \end{corollary}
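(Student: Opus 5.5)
The plan is to transfer everything from Proposition~\ref{Prop:2} by the identity $wt_H(\tau(c_D(v)))=\tfrac12\,wt_{Lee}(c_D(v))$ that follows from Eq.~\eqref{KeyEq_SF_like} and Eq.~\eqref{Eq-Lee}. Halving each Lee weight in Tables~\ref{Table:1}--\ref{Table:5} gives, for each of the five choices of $D$, the nonzero Hamming weights of the vectors $T(v)$, $v\in\mathcal S^m$, together with the multiplicities with which $v$ produces them. This immediately gives the length $|D|$, the claim that the code is $1$-weight in parts~\ref{corol:1SF} and \ref{corol:2SF} and $2$-weight in the remaining parts, and the values of $wt_0$ and $wt_\infty$. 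The divisibility constants will be read off as the greatest common divisor of the (one or two) nonzero weights. For instance, in part~\ref{corol:3SF} we have $\gcd\big((q-1)q^{m+|N|-1},(q-1)(q^m-q^{|M|})q^{|N|-1}\big)$, and since $q^m-q^{|M|}=q^{|M|}(q^{m-|M|}-1)$ with $q^{m-|M|}-1$ prime to $q$, this gcd is $(q-1)q^{|M|+|N|-1}$. The other parts are handled in the same way.

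For the dimension I will use Eq.~\eqref{DimCount}, namely $|\tau(C_D)|=|\mathcal S^m|/|\ker T|$, noting that $T(a\beta_1+b\beta_2)=0$ if and only if the weight computed above vanishes. When the first coordinate set is $\Delta_M$, Lemma~\ref{Lem-A}(\ref{Lem-A1}) shows that this happens exactly when $\beta_1\in\Delta_M^\perp$. Hence $|\ker T|=q^{m-|M|}q^m$, and the dimension is $|M|$. When the first coordinate set is $\Delta_M^c$, or when the code is $C_{D^c}$, the tables show that only $\beta_1=\mathbf 0$ gives weight $0$. Then $|\ker T|=q^m$ and the dimension is $m$. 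Minimality comes from Lemma~\ref{minimal_lemma}. For the $1$-weight codes the ratio $wt_0/wt_\infty$ equals $1>(q-1)/q$. In parts~\ref{corol:3SF} and \ref{corol:4SF} the ratio is $1-q^{|M|-m}$, which exceeds $1-1/q$ exactly when $m>|M|+1$. In part~\ref{corol:5SF} the ratio is $1-q^{|M|+|N|-2m}$, giving the condition $2m>|M|+|N|+1$. Lemma~\ref{Bonisoli} is an optional consistency check for the $1$-weight cases.

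The optimality statements are the main obstacle, and I would handle them with the Griesmer bound of Lemma~\ref{griesmerbound} as follows. Write $d=wt_0$ and $k=m$, and compute $g(d)=\sum_{i=0}^{m-1}\lceil d/q^i\rceil$ in closed form, splitting the sum at the $q$-adic valuation of $d$. In part~\ref{corol:3SF}, $d=(q-1)q^{|M|+|N|-1}(q^{m-|M|}-1)$. For $i\le|M|+|N|-1$ the terms are exact geometric terms. For larger $i$ the terms are ceilings of $(q-1)(q^{m-|M|}-1)q^{|M|+|N|-1-i}$, and these produce the correction terms that make up $\theta_1$ or $\theta_2$, according to whether $|M|+|N|\le m$ or $|M|+|N|\ge m$. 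In this way I aim to show $n-g(d)=\theta_j$. Then I suppose an $[n,m,d+1]$ code exists. Since $q^i\mid d$ for every $i$ below $\min(m,|M|+|N|)$, each of those terms increases by $1$ when $d$ is replaced by $d+1$, which gives
\[
g(d+1)\ \ge\ g(d)+\min\{m,|M|+|N|\}.
\]
Here I expect the extra factor $(q-1)$ in the hypotheses $\theta_1<(q-1)(|M|+|N|)$ and $\theta_2<(q-1)m$ to be absorbed by a more careful count of the increments, which is the step I would verify first. The hypothesis $\theta_j<(q-1)(\cdot)$ then forces $g(d+1)>n$, a contradiction, so the code is distance optimal. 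Parts~\ref{corol:4SF} and \ref{corol:5SF} follow the same template with their respective $d$ and $n$. Getting the exact bookkeeping of the ceiling terms right in the two regimes is where I expect most of the effort to go.
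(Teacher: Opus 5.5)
Your derivation of the lengths, weights and frequencies, dimensions, divisibility constants and minimality is correct and matches what the paper intends. The route is to halve the Lee weights of Proposition~\ref{Prop:2}, count the kernel via Eq.~\eqref{DimCount}, and apply Lemma~\ref{minimal_lemma}. The paper gives no separate proof of this corollary beyond the observation $wt_H(\tau(c_D(v)))=\tfrac12 wt_{Lee}(c_D(v))$.

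\textbf{The gap is in the optimality clauses.} It sits exactly at the step you flagged, and it cannot be closed. For every integer $d$ one has $\lceil (d+1)/q^i\rceil-\lceil d/q^i\rceil=1$ if $q^i\mid d$, and $0$ otherwise. In parts 3--5 the minimum distance has the form $d=(q-1)q^{|M|+|N|-1}u$ with $p\nmid u$. Hence
\[
g(d+1)-g(d)=\#\{0\le i\le m-1:\ q^i\mid d\}=\min\{m,|M|+|N|\}
\]
holds with equality, not just $\ge$. No finer count of the increments produces a factor $q-1$. So the Griesmer bound excludes an $[n,m,d+1]$ code if and only if $n-g(d)<\min\{m,|M|+|N|\}$.

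Your plan to show $n-g(d)=\theta_j$ also fails in one case. It does work for parts 3 and 5, and for $\theta_2$ in part 4. In part 4(a), however, each of the $m-|M|-|N|$ non-integral quotients $d/q^i$ is rounded up by exactly $1$. This gives
\[
n-g(d)=q^m-q^{|M|}-q^{|N|}+1-(m-|M|-|N|),
\]
which differs from the stated $\theta_1$ whenever $q>2$.

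\textbf{For $q\ge 3$ the optimality assertions are false, not merely unproved.} The stated hypotheses $\theta_1<(q-1)(|M|+|N|)$ and $\theta_2<(q-1)m$ are too weak. Take $q=3$, $m=2$, $|M|=|N|=1$ in part 3:
\begin{itemize}
\item $\tau(C_D)$ is an $[18,2,12]_3$ code with nonzero weights $12$ and $18$.
\item $\theta_1=\theta_2=2<4=(q-1)(|M|+|N|)=(q-1)m$, so the hypotheses of both (a) and (b) hold.
\item But $13+\lceil 13/3\rceil=18$, so the Griesmer bound allows $d=13$.
\item An $[18,2,13]_3$ code actually exists: take as columns the four points of $\mathrm{PG}(1,3)$ with multiplicities $5,5,4,4$.
\end{itemize}

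Your Griesmer argument actually proves optimality under $n-g(d)<|M|+|N|$ when $|M|+|N|\le m$, and under $n-g(d)<m$ when $|M|+|N|\ge m$. In part 4(a) this must use the corrected value of $n-g(d)$ above. These conditions coincide with the ones in the statement only when $q=2$.
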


%%%Section 6
		\section{Self-orthogonal codes}\label{Sec:6}
       In this section, we establish certain conditions under which the Gray image code $\Phi(C_D)$ and the subfield-like code $\tau(C_D)$ are self-orthogonal, where $C_D$ is an $\mathcal{S}$-linear code.

		Let $C$ be an $[n,k,d]$-linear code over $\mathbb{F}_q$. The Euclidean dual of $C$ is defined by
		\[
		C^\perp
		=
		\{x\in \mathbb{F}_q^n : \langle x,c\rangle_q=0 \text{ for all } c\in C\}.
		\]
		If $C\subseteq C^\perp$, then $C$ is called \textit{Euclidean self-orthogonal}.

        Let $C$ be an $[n,k,d]$-linear code over $\mathbb{F}_q$. Then $C$ is called \emph{projective} if the dual code $C^\perp$ has minimum Hamming distance at least $3$.
		
		When the field is a quadratic extension, that is, $\mathbb{F}_{q^2}$, we consider the Hermitian inner product on $\mathbb{F}_{q^2}^n$ defined by
		\begin{equation*}
			\langle x,y\rangle_q^H
			=
			\langle x,\bar{y}\rangle_q
			=
			\sum_{i=1}^{n}x_i\bar{y}_i,
		\end{equation*}
		where $\bar{y}_i$ denotes the conjugate of $y_i$. The Hermitian dual of a quaternary code $C$ is defined as
		\begin{equation*}
			C^{\perp_H}
			=
			\{x\in \mathbb{F}_{q^2}^n :
			\langle x,c\rangle_{q}^H=0 \text{ for all } c\in C\}.
		\end{equation*}
		If $C\subseteq C^{\perp_H}$, then $C$ is called \textit{Hermitian self-orthogonal}.\\
		The following lemma provides a useful criterion for determining whether a code is Euclidean or Hermitian self-orthogonal from its weight distribution.
		\begin{theorem}[\cite{wchuffman}]\label{Thm-4}
			Let $C$ be a linear code over $\mathbb{F}_q$, where $q=2,3$, or $4$.
			\begin{enumerate}
				\item If $q=2$, then $C$ is Euclidean self-orthogonal whenever every codeword of $C$ has weight divisible by $4$.
				
				\item If $q=3$, then $C$ is Euclidean self-orthogonal if and only if every codeword of $C$ has weight divisible by $3$.
				
				\item If $q=4$, then $C$ is Hermitian self-orthogonal if and only if every codeword of $C$ has even weight.
			\end{enumerate}
		\end{theorem}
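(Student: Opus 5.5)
The plan is to treat the three cases separately. Each case compares the inner product of a vector with itself, or of two vectors, with a Hamming weight computed modulo the characteristic. Sufficiency then follows by a polarization argument, which uses that $C$ is closed under addition and, for $q=4$, under $\mathbb{F}_4$-scalar multiplication.

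\emph{Case $q=2$.} For $x,y\in\mathbb{F}_2^n$ we use the identity
\[
wt_H(x+y)=wt_H(x)+wt_H(y)-2\,\vert \textnormal{Supp}(x)\cap\textnormal{Supp}(y)\vert .
\]
If every codeword has weight divisible by $4$, we apply this to $x,y\in C$, noting that $x+y\in C$. It follows that $2\vert\textnormal{Supp}(x)\cap\textnormal{Supp}(y)\vert\equiv 0 \pmod 4$, so the intersection has even size. Since $\langle x,y\rangle_2\equiv\vert\textnormal{Supp}(x)\cap\textnormal{Supp}(y)\vert \pmod 2$, we get $\langle x,y\rangle_2=0$. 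Only one direction is claimed here, so nothing more is needed.

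\emph{Case $q=3$.} Every nonzero $c\in\mathbb{F}_3$ satisfies $c^2=1$, so $\langle x,x\rangle_3\equiv wt_H(x)\pmod 3$. If $C$ is self-orthogonal, then $\langle x,x\rangle_3=0$ for every $x\in C$, and hence $3\mid wt_H(x)$. Conversely, suppose all weights are divisible by $3$. Then $\langle z,z\rangle_3=0$ for every $z\in C$. Expanding $\langle x+y,x+y\rangle_3$ for $x,y\in C$ gives $2\langle x,y\rangle_3=0$. Since $2$ is invertible in $\mathbb{F}_3$, this forces $\langle x,y\rangle_3=0$.

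\emph{Case $q=4$ (Hermitian).} For $c\in\mathbb{F}_4^{\ast}$ we have $c\bar c=c^{3}=1$, so $\langle x,x\rangle^H\equiv wt_H(x)\pmod 2$. This gives necessity immediately. For sufficiency, assume $\langle z,z\rangle^H=0$ for all $z\in C$. For $x,y\in C$ and $\lambda\in\mathbb{F}_4$, the vector $x+\lambda y$ lies in $C$ by $\mathbb{F}_4$-linearity. Expanding $\langle x+\lambda y,x+\lambda y\rangle^H=0$ and cancelling the two diagonal terms (each vanishes by hypothesis) leaves
\[
\bar\lambda\,w+\lambda\,\bar w=0 \quad \text{for all } \lambda\in\mathbb{F}_4,
\]
where $w=\langle x,y\rangle^H$, using $\langle y,x\rangle^H=\overline{\langle x,y\rangle^H}$. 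The left side is ${\rm Tr}_{2}^{4}(\bar\lambda w)$. By nondegeneracy of the trace form, or simply by taking $\lambda=1$ and $\lambda=\omega$, this forces $w=0$.

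I expect this last polarization step to be the only delicate point. One must use the full $\mathbb{F}_4$-scalar closure of $C$, not just additivity, because additivity alone only yields ${\rm Tr}(w)=0$.
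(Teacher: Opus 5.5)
Your proof is correct in all three cases. That includes your remark that the Hermitian case needs closure under $\mathbb{F}_4$-scalars (taking $\lambda=1$ and $\lambda=\omega$), not just additivity.

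The paper gives no proof of its own; it simply cites Huffman--Pless, whose argument follows the same route as yours. That is, the identity $wt_H(x+y)=wt_H(x)+wt_H(y)-2\vert\textnormal{Supp}(x)\cap\textnormal{Supp}(y)\vert$ for $q=2$, and $\langle x,x\rangle\equiv wt_H(x)$ combined with polarization for $q=3,4$.
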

		
		By using the above lemma, we obtain the following result from Theorem~\ref{Thm-1} and Theorem~\ref{Thm-2}.
		
		\begin{theorem}\label{th5}
			Let $m, r, s, k \in \mathbb{N}$ and $q=p^r$. Suppose that $\Delta_1$ and $\Delta_2$ are simplicial complexes of $\mathbb{F}_q^m$ with
			\[
			\mathcal{M}=\{M_1,M_2,\dots,M_s\}
			\quad \text{and} \quad
			\mathcal{N}=\{N_1,N_2,\dots,N_k\}
			\]
			being the sets of maximal elements of $\Delta_1$ and $\Delta_2$, respectively. Define
			\begin{equation}\label{EqTheta}
				\theta: = {\rm min}\{ \vert \cap S_1 \vert + \vert \cap S_2 \vert - 1 :  \emptyset \neq S_1 \subseteq \mathcal{M}, \emptyset \neq S_2 \subseteq \mathcal{N} \}.
			\end{equation}
			Then we have the following result.
			\begin{enumerate}
				\item \label{SO-1}
				Let $C_D$ be the linear codes considered in Theorem \ref{Thm-1} and let $\Phi$ be the map defined by Eq.~\eqref{PhiEq}.
				\begin{enumerate}[label=(\roman*)]
					\item If $q=2$ or $3$ and $\theta\geq 1$, then the code $\Phi(C_D)$ is Euclidean self-orthogonal.
					
					\item If $q=4$ and $\theta\geq 0$, then the code $\Phi(C_D)$ is Hermitian self-orthogonal.
				\end{enumerate}
				
				\item \label{SO-2}
				Let $\tau(C_D)$ be the linear codes considered in Theorem \ref{Thm-2}, where $\tau$ is the map defined by Eq.~\eqref{tauMap}.
				\begin{enumerate}[label=(\roman*)]
					\item If $q=2$ and $\theta \geq 2$, then the code $\tau(C_D)$ is Euclidean self-orthogonal. Moreover, if $q=3$ and $\theta \geq 1$, then the code $\tau(C_D)$ is Euclidean self-orthogonal.
					
					\item If $q=4$ and $\theta \geq 1$, then the code $\tau(C_D)$ is Hermitian self-orthogonal.
				\end{enumerate}
			\end{enumerate}
		\end{theorem}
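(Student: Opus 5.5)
The plan is to reduce self-orthogonality to a divisibility statement about weights and then apply Theorem~\ref{Thm-4}. By Eq.~\eqref{Eq-Lee}, every codeword of $\Phi(C_D)$ has Hamming weight $2\,wt_H\big((\langle\beta_1,t_1\rangle_q)_{t_i\in D_i}\big)$. By Eq.~\eqref{KeyEq_SF_like}, every codeword of $\tau(C_D)$ has Hamming weight equal to half of that. So it suffices to show that the half-weight
\[
W(v):=|D_2|\,\mathcal{A}_{D_1,\beta_1}
\]
is divisible by $q^{\theta}$ for every $v=a\beta_1+b\beta_2$. Here $\mathcal{A}_{D_1,\beta_1}$ denotes the bracket in Eq.~\eqref{keyeq1}, and in part~\ref{part:5} $W(v)$ is replaced by the combination in Eq.~\eqref{Eq.13}.

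\textbf{Step 1: power-of-$q$ expansion of the bracket.} When $D_1=\Delta_1$, Lemma~\ref{Lem-A} gives
\[
\mathcal{A}_{\Delta_1,\beta_1}=\sum_{\emptyset\neq S_1\subseteq\mathcal{M}}\pm(q-1)\,q^{|\cap S_1|-1}\big(1-\Psi(\beta_1\mid\Delta_{\cap S_1})\big),
\]
which is a signed sum of terms $(q-1)q^{|\cap S_1|-1}$. When $D_1=\Delta_1^c$, Lemma~\ref{Lem-8} gives
\[
\mathcal{A}_{\Delta_1^c,\beta_1}=(q-1)q^{m-1}(1-\delta_{\mathbf 0,\beta_1})-\mathcal{A}_{\Delta_1,\beta_1}.
\]
The extra term is divisible by $q^{m-1}$, and since $|\cap S_1|\le m$, this is at least as divisible as the other terms.

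\textbf{Step 2: expansion of $|D_2|$.} By Lemma~\ref{generatinglemma}, $|\Delta_2|=\sum_{S_2}\pm q^{|\cap S_2|}$, and $|\Delta_2^c|=q^m-|\Delta_2|$ has the same kind of expansion, again with $q^m$ dominating.

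\textbf{Step 3: divisibility of the product.} Multiplying the two expansions writes $W(v)$ as a $\mathbb{Z}$-combination of the numbers $(q-1)q^{|\cap S_1|+|\cap S_2|-1}$, together with terms involving $q^m$ or $q^{2m-1}$. Each such number is divisible by $q^{\theta}$ by Eq.~\eqref{EqTheta}, so $q^{\theta}\mid W(v)$. The same argument covers the five cases of Theorem~\ref{Thm-1}, using for part~\ref{part:5} the expression
\[
2(q-1)q^{2m-1}(1-\delta)-2|\Delta_2|\mathcal{A}_{\Delta_1,\beta_1}.
\]

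\textbf{Step 4: conclusion via Theorem~\ref{Thm-4}.} Weights of $\Phi(C_D)$ equal $2W(v)$, and weights of $\tau(C_D)$ equal $W(v)$.
\begin{itemize}
\item For $\Phi(C_D)$ with $q=2$ and $\theta\ge1$, every weight is divisible by $2\cdot 2=4$. With $q=3$ and $\theta\ge1$, every weight is divisible by $3$. With $q=4$, every weight is even even when $\theta\ge 0$. Theorem~\ref{Thm-4} then gives Euclidean, respectively Hermitian, self-orthogonality.
\item For $\tau(C_D)$ with $q=2$, we need $4\mid W(v)$, which needs $\theta\ge2$. For $q=3$, $\theta\ge1$ gives $3\mid W(v)$. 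For $q=4$, $\theta\ge1$ gives $4\mid W(v)$, so every weight is even.
\end{itemize}

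\textbf{Main obstacle.} The delicate point is the boundary case in which a $q$-exponent could fall below $\theta$. This happens with the $q^{m-1}$ and $q^{2m-1}$ terms, and with factors like $q^m$ in $|\Delta_2^c|$ multiplying $q^{|\cap S_1|-1}$. I would verify that $m-1+|\cap S_2|\ge\theta$ and $m+|\cap S_1|-1\ge\theta$, which hold because $|\cap S_i|\le m$. For $\theta=0$ with $q=4$, only the factor $2$ from the Gray map is needed, and this is automatic.
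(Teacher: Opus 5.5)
Your proposal is correct and follows the paper's route. You expand the half-weight via Lemmas~\ref{generatinglemma} and~\ref{Lem-A} into a signed sum of terms $(q-1)q^{|\cap S_1|+|\cap S_2|-1}$, plus the $q^{m}$, $q^{m-1}$, $q^{2m-1}$ contributions, deduce divisibility by $q^{\theta}$, and conclude with Theorem~\ref{Thm-4}; the only difference is that you check the extra exponents in all five cases of Theorem~\ref{Thm-1} using $|\cap S_i|\le m$, whereas the paper writes out only the last case and calls the rest similar.
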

		\begin{proof}
			We will prove part \ref{SO-1} for Theorem~\ref{Thm-1}(\ref{part:5}) only. The proofs for the part \eqref{SO-2} can be obtained by similar arguments.
			Let $v = a\beta_1 + b\beta_2 \in \mathcal{S}^m$. By using Eq. \eqref{Eq.13} and Lemmas \ref{generatinglemma} and \ref{Lem-A}, we have
			\begin{equation*}
				\begin{split}
					wt_{Lee}(c_{D}(v)) 
					= & 2 q^{2m-1} (q-1)\big(1 - \delta_{0, \beta_1}\big)-2 \vert \Delta_2 \vert \mathcal{A}_{\Delta_1, \beta_1}\\
					= & 2 q^{2m-1} (q-1)\big(1 - \delta_{0, \beta_1}\big)\\ 
					& -2\sum_{\substack{\emptyset \neq S_1 \subseteq \mathcal{M}\\ \emptyset \neq S_2 \subseteq \mathcal{N}}} (-1)^{\vert S_1 \vert + \vert S_2 \vert } q^{\vert \cap S_1 \vert + \vert \cap S_2 \vert -1} (q-1)(1 - \Psi(\beta_1 \mid \Delta_{\cap S_1})).
				\end{split}
			\end{equation*}
			Now we consider the following two cases:
			\begin{enumerate}[label=(\roman*)]
				\item If $\beta_1 = \textbf{0}$ then $	wt_{Lee}(c_{D}(v)) =0$.
				\item If $\beta_1 \neq \textbf{0}$ then 
				\begin{equation*}
					wt_{Lee}(c_{D}(v))=2 q^{2m-1} (q-1) -2\sum_{\substack{\emptyset \neq S_1 \subseteq \mathcal{M}\\ \emptyset \neq S_2 \subseteq \mathcal{N}}} (-1)^{\vert S_1 \vert + \vert S_2 \vert } q^{\vert \cap S_1 \vert + \vert \cap S_2 \vert -1} (q-1)(1 - \Psi(\beta_1 \mid \Delta_{\cap S_1})).
				\end{equation*}		
			\end{enumerate}
			Let $\theta: = {\rm min}\{ \vert \cap S_1 \vert + \vert \cap S_2 \vert - 1 :  \emptyset \neq S_1 \subseteq \mathcal{M}, \emptyset \neq S_2 \subseteq \mathcal{N} \}.$
			Now, if $q\in\{2,3\}$ and $\theta\geq 1$, then Theorem~\ref{Thm-4} implies that $\Phi(C_D)$ is Euclidean self-orthogonal. Moreover, if $q=4$ and $\theta\geq 0$, then Theorem~\ref{Thm-4} implies that $\Phi(C_D)$ is Hermitian self-orthogonal. \qed
		\end{proof}

%Section 7

\section{Applications}\label{Sec:7}
        The study of linear codes with specific combinatorial and algebraic properties has become an active area of research because of their numerous applications. In this section, we investigate some of these important classes of linear codes.
        \subsection{Locally recoverable codes}
        %%%%%\textcolor{red}{Check for optimality by using various LRC bounds}\\
        Locally recoverable codes are especially useful in distributed storage systems because a lost symbol can be recovered by accessing only a few other symbols. This reduces the amount of data that needs to be read during the repair process and improves the efficiency of the system. Due to these advantages, locally recoverable codes have received significant attention in recent years.

        A code over $\mathbb{F}_q$ is said to have \emph{locality} $r$ if every coordinate can be recovered by accessing at most $r$ other coordinates, and $r$ is minimal with this property.
        
        \begin{definition}[\cite{LRC-ref}]\label{LRC-Def}
        Let $C$ be a linear code over $\mathbb{F}_q$ with generator matrix $G$. The code $C$ is called a \emph{locally recoverable code} (LRC) with locality $r$ if every column of $G$ can be expressed as an $\mathbb{F}_q$-linear combination of at most $r$ other columns of $G$, where $r$ is the smallest positive integer satisfying this property.
        \end{definition}

        We take $D_2 = \{\mathbf{0}\}$ and consider an ordered defining set $D = aD_1 \subseteq \mathcal{S}^m$, where $D_1 \subseteq \mathbb{F}_q^m$. Now we define the following ordered multiset
        $$D^{(q)}=\{\{ (t, \mathbf{0})  \ \vert \ at  \in D\}\} \subseteq (\mathbb{F}_{q}^m)^2.$$
        Assume that $\vert D^{(q)}\vert = \vert D_1\vert = n$. Here we construct a matrix of size $2m \times n$ over $\mathbb{F}_q$ by using the ordered set $D^{(q)}$ as follows:
        \begin{equation}\label{GenMatrix}
        G = \big[ d_1^{\rm T} \ d_2^{\rm T} \ \cdots \ d_n^{\rm T} \big],
        \end{equation}
        where ${\rm T}$ stands for the transpose operation and $d_i \in D^{(q)}, 1\leq i\leq n$. Clearly, the linear codes over $\mathbb{F}_q$ defined by Eq.~\eqref{TauDef} in Section \ref{Sec:5} is        
        \begin{equation}
        \begin{split}
        \tau(C_D) = C_{D^{(q)}}
                & = \big\{ \big( \langle y,  d_i \rangle_q\big)_{d_i\in D^{(q)}} \ \vert \ y = (\beta_1, \beta_2) \in (\mathbb{F}_q^m)^2 \big\}\\
                &= \big\{yG \ \vert \ y = (\beta_1, \beta_2)\in (\mathbb{F}_q^m)^2 \big\}.
        \end{split}
        \end{equation}
        Therefore, $G$ is a generator matrix of the $\mathbb{F}_q$-linear code $\tau(C_D)$.\par

Let $\Delta^\ast$ denote the set of nonzero elements of a simplicial complex
$\Delta\subseteq\mathbb{F}_q^m$, and let $\overline{\Delta^\ast}$ be the complete set of representatives of the equivalence classes of $\Delta^\ast$ under the relation
\[
x\sim y
\iff
x=\alpha y,\qquad \alpha\in\mathbb{F}_q^\ast.
\]

For each $x\in\Delta^\ast$, its orbit is $[x]=\{\alpha x:\alpha\in\mathbb{F}_q^\ast\}.$ Since $x\neq \mathbf{0}$, every orbit has cardinality $q-1$. As $\overline{\Delta^\ast}$ contains exactly one representative from each orbit, every $x\in\Delta^\ast$ can be uniquely written as $x=\alpha \bar{x},$ for some $\alpha\in\mathbb{F}_q^\ast$ and $\bar{x}\in\overline{\Delta^\ast}$. Therefore, $\Delta^\ast
=
\bigsqcup_{\alpha\in\mathbb{F}_q^\ast}
\alpha\,\overline{\Delta^\ast},$
and hence $|\overline{\Delta^\ast}| = \frac{|\Delta^\ast|}{q-1}.$

The following result holds for $q=2$.
%and follows immediately from Definition~\ref{LRC-Def}.
		\begin{theorem}
			Let $m \in \mathbb{N}$, \ $q=2$ and $M\subseteq [m]$. Suppose $\Delta_{M}$ is the simplicial complex of $\mathbb{F}_2^m$ generated by $M$. Then we have the following:
			\begin{enumerate}
				\item Suppose that $m-2 \geq \vert M \vert \geq 2$ and $D \in \big\{ a\overline{\Delta_M^{\ast}}, a\overline{\Delta_M^{c}}\big\}$, then $\tau(C_D)$ is an LRC with locality $2$.
				
				\item If $D = a\overline{\Delta_M^{c}}$ and $\vert M \vert = m-1$ then $\tau(C_D)$ is an LRC with locality $3$.
			\end{enumerate}
		\end{theorem}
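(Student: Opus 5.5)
Over $\mathbb{F}_2$ we have $\overline{\Delta^\ast}=\Delta^\ast$, since $\mathbb{F}_2^\ast=\{1\}$. By the discussion preceding the statement, $\tau(C_D)$ has generator matrix $G$ whose columns are $(t,\mathbf{0})^{\rm T}$ for $t$ in the defining set $D_1$. The zero lower block plays no role in linear dependencies. So the locality of $\tau(C_D)$ equals the smallest $r$ such that every $t\in D_1$ is a sum of at most $r$ other elements of $D_1$, and some $t$ needs exactly $r$. In characteristic $2$ the coefficients are all $1$. The plan is to show this combinatorial property directly in each case, working with supports in $[m]$.

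\textbf{Locality $2$ for $D_1=\Delta_M^\ast$.} Here $D_1$ is the set of nonzero vectors supported in $M$, with $|M|\ge 2$. First we rule out locality $1$. Distinct nonzero binary vectors are never scalar multiples of each other, so no column equals another column. Next we show every $t$ is a sum of two others. If $|{\rm Supp}(t)|\ge 2$, we split the support into two nonempty disjoint parts $A,B$ and write $t=e_A+e_B$; both summands lie in $\Delta_M^\ast$ and differ from $t$. If $t=e_i$ has weight one, we pick $j\in M\setminus\{i\}$, which exists since $|M|\ge 2$, and write $e_i=(e_i+e_j)+e_j$. Hence the locality is exactly $2$. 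The hypothesis $|M|\le m-2$ is not needed here.

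\textbf{Locality $2$ for $D_1=\Delta_M^c$.} Here $D_1$ is the set of $t$ with ${\rm Supp}(t)\not\subseteq M$. Let $U=[m]\setminus M$, so $|U|\ge 2$. For $t\in D_1$ we seek $x\ne y$ in $D_1$, both different from $t$, with $x+y=t$. Equivalently, we need $x\in D_1\setminus\{t\}$ such that $t+x\in D_1\setminus\{\mathbf{0}\}$, i.e.\ both $x$ and $t+x$ have a coordinate in $U$. Pick $u\in U\cap{\rm Supp}(t)$ and $u'\in U\setminus\{u\}$, and take $x=e_{u'}$ or $x=t+e_{u'}$. Then $t+x$ keeps the coordinate $u$, and $x$ has the coordinate $u'$ in $U$ (or, in the second choice, $t+x=e_{u'}$ does). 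We must also avoid $x=t$, which forces $x\ne\mathbf{0}$. With $x=e_{u'}$: it equals $t$ only if $t=e_{u'}$, impossible since $u\in{\rm Supp}(t)$ and $u\ne u'$. Then $t+x$ has $u$ in its support and $x$ has $u'$, so both lie in $D_1$. Locality $1$ is excluded as before. The condition $|M|\ge 2$ is irrelevant here, but it matches the stated hypothesis.

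\textbf{Locality $3$ for $D_1=\Delta_M^c$ with $U=\{u\}$.} Now $D_1=\{t: t_u=1\}$. Any sum of an even number of elements of $D_1$ has $u$-coordinate $0$, so no $t\in D_1$ is a sum of two elements of $D_1$; together with the exclusion of locality $1$, this gives locality at least $3$. For the upper bound we use that $|M|=m-1\ge 1$ whenever $m\ge 2$. Fix $t$ and pick $i\in M$. We write $t=x+y+z$ with $x=t+e_i$, $y=e_u$, $z=e_u+e_i$. All three have $u$-coordinate $1$, and their sum is $t+e_i+e_u+e_u+e_i=t$. We then need them distinct and different from $t$. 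This fails only when $t$ coincides with one of $e_u$, $e_u+e_i$. In that case we choose a different $i$ if possible, or use an alternative decomposition. This step requires $m\ge 3$ (or a small case check), and it is the main obstacle: handling $t\in\{e_u,e_u+e_i\}$ cleanly by a choice of $i$. For $m=2$ the set $D_1$ has only two elements, so we treat that degenerate case separately or exclude it.
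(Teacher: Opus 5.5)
The paper states this theorem without proof (it is treated as following directly from the definition of locality), so your direct support-based argument is the natural route. Your locality-$2$ arguments are correct for both defining sets, and you are right that $|M|\le m-2$ plays no role for $\Delta_M^\ast$ and $|M|\ge 2$ plays no role for $\Delta_M^c$. The parity argument giving locality at least $3$ when $U=\{u\}$ is also correct.

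The gap is the upper bound in part 2, which you leave open. Your decomposition $t=(t+e_i)+e_u+(e_u+e_i)$ fails at $t=e_u$ for \emph{every} $i\in M$: the summand $e_u$ equals $t$, and $t+e_i=e_u+e_i$ repeats. So ``choose a different $i$'' cannot repair that case, and you give no alternative. The fix is to drop $e_u$ as a summand. For two distinct $i,j\in M$ and any $t$ with $t_u=1$,
\[
t=(t+e_i)+(t+e_j)+(t+e_i+e_j).
\]
All three summands have $u$-coordinate $1$. They are pairwise distinct and distinct from $t$, because all the relevant differences lie in $\{e_i,e_j,e_i+e_j\}$, none of which is $\mathbf{0}$. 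This decomposition works uniformly for every $t\in D_1$ and needs only $|M|\ge 2$, i.e.\ $m\ge 3$.

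The small case also cannot be ``treated separately'': part 2 is false there. For $m=2$ you get $D_1=\{e_u,e_u+e_i\}$, two linearly independent columns, so $\tau(C_D)=\mathbb{F}_2^2$ and neither column is a combination of the other. For $m=1$ there is only one column. In both cases no locality exists. Your proof should therefore assume $m\ge 3$ explicitly and note that this hypothesis is necessary. The paper's statement implicitly requires it too.
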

		
		For the case $p\neq 2$, we obtain the following result.
        %, whose proof follows directly from Definition~\ref{LRC-Def}.
		\begin{theorem}
		Let $m, r, s \in \mathbb{N}$ and $q=p^r$, where $p\neq 2$. Suppose $\Delta_1$ is a simplicial complex of $\mathbb{F}_q^m$ with
		\[
			\mathcal{M}=\{M_1,M_2,\dots,M_s\},
		\]
		the set of maximal elements of $\Delta_1$, such that each $\vert M_i \vert \geq 2$. If $D \in \big\{ a\overline{\Delta_1^{\ast}}, a\overline{\Delta_1^{c}}\big\}$ then $\tau(C_D)$ is an LRC with locality $2$.
		\end{theorem}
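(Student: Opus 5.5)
The plan is to work directly with the generator matrix $G$ of Eq.~\eqref{GenMatrix}. Since $D=aD_1$ with $D_1\in\{\overline{\Delta_1^{\ast}},\overline{\Delta_1^{c}}\}$, the columns of $G$ are the vectors $(t,\mathbf{0})^{\rm T}$ with $t\in D_1$. Linear relations among the columns are therefore exactly linear relations among the elements $t\in D_1\subseteq\mathbb{F}_q^m$. Hence it suffices to prove two facts. First, every $t\in D_1$ is an $\mathbb{F}_q$-linear combination of at most two other elements of $D_1$. Second, locality $1$ is impossible, i.e. no column is a scalar multiple of another column.

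The second fact is immediate from the construction of representatives. Distinct elements of $\overline{\Delta_1^{\ast}}$ lie in different $\mathbb{F}_q^\ast$-orbits, so no two are proportional. For $\overline{\Delta_1^{c}}$, I read the notation as a set of orbit representatives of the nonzero set $\Delta_1^c$. This set is closed under $\mathbb{F}_q^\ast$-scaling, because supports are scaling-invariant. So again no two representatives are proportional, and no column is zero. It follows that the locality is at least $2$.

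For locality at most $2$ in the case $D_1=\overline{\Delta_1^{\ast}}$, take $t$ with $\mathrm{Supp}(t)\subseteq M_i$. If $|\mathrm{Supp}(t)|\ge 2$, split $t=t'+t''$ with nonempty disjoint supports. Both pieces lie in $\Delta_1$, and both differ from $t$ up to scaling, so $t$ is a combination of the representatives of $[t']$ and $[t'']$. If $t$ has weight one, say $t=\lambda e_j$ with $j\in M_i$, use $|M_i|\ge 2$ to pick $k\in M_i\setminus\{j\}$. Then $e_j=(e_j+\mu e_k)-\mu e_k$, where $e_j+\mu e_k\in\Delta_{M_i}$. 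This is exactly where $|M_i|\ge2$ is needed.

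The case $D_1=\overline{\Delta_1^{c}}$ is the step I expect to be the main obstacle, because both summands must stay outside $\Delta_1$. Since $p\ne2$, I would use $x=\tfrac12(x+y)+\tfrac12(x-y)$ with a suitable $y$. Choose $y$ with support disjoint from, or small relative to, that of $x$, so that $x\pm y\notin\Delta_1$. Taking $y=ce_k$ with $k\notin\mathrm{Supp}(x)$ always works: adding coordinates keeps the support outside every $M_i$, since supersets of a non-face are non-faces. This requires $\mathrm{Supp}(x)\ne[m]$.

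If $\mathrm{Supp}(x)=[m]$, instead take $y=x_je_j$ for some $j$. Then $x+y$ still has full support because $2\ne0$, while $x-y$ has support $[m]\setminus\{j\}$. I need to choose $j$ so that $[m]\setminus\{j\}$ is not contained in any $M_i$. If every such $j$ fails, I would fall back on a three-term split only when forced; but the hypothesis on $p$ gives flexibility, since $x=\alpha(x+y')+(1-\alpha)x-\alpha y'$ with $\alpha\neq 0,1$ can also be used. One also has to check that $x\pm y$ are not proportional to $x$, which holds because their supports differ or because $y\ne0$ is not a multiple of $x$. These combinations of representatives then give locality exactly $2$.
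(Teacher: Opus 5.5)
Your argument is correct for $D=a\overline{\Delta_1^{\ast}}$, for ruling out locality $1$, and for the columns $x\in\Delta_1^c$ with $\mathrm{Supp}(x)\neq[m]$. It is also correct when $\mathrm{Supp}(x)=[m]$ and some $[m]\setminus\{j\}$ is not a face. The check that $x+y$ and $x-y$ lie in different orbits is automatic, since otherwise $x$ would be proportional to one of them. The gap is the case you defer: $\mathrm{Supp}(x)=[m]$ and every $[m]\setminus\{j\}$ is a face. That means $\mathcal{M}=\{[m]\setminus\{j\}: j\in[m]\}$, which the hypothesis $|M_i|\ge 2$ allows as soon as $m\ge 3$. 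Your fallback $x=\alpha(x+y')+(1-\alpha)x-\alpha y'$ has the column $x$ itself on the right-hand side, so it recovers nothing. A genuine three-term split would only give locality $3$.

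This case cannot be repaired when $q=3$, and the statement itself is false there. In $\mathbb{F}_3$, if $a+b=c$ with $a,b,c$ all nonzero, then $a=b=-c$. Suppose $x=\alpha u+\beta v$ with $u,v\in\Delta_1^c$ and $\alpha\beta\neq 0$. Either one of $u,v$ misses some coordinate $j$, so it lies in $\Delta_{[m]\setminus\{j\}}\subseteq\Delta_1$, a contradiction. Or both have full support, and then $\alpha u=\beta v=-x$, so $u$ is proportional to $x$.

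A concrete counterexample is $q=3$, $m=3$, $\mathcal{M}=\{\{1,2\},\{1,3\},\{2,3\}\}$. The columns of $G$ are $(t,\mathbf{0})^{\rm T}$ with $t\in\{(1,\pm1,\pm1)\}$. Up to scaling, the only linear relation among them is $(1,1,1)=(1,1,-1)+(1,-1,1)-(1,-1,-1)$. So every column needs all three others, and the locality is $3$, not $2$.

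For $q\ge 5$ you can close the gap. Pick $\lambda_i\in\mathbb{F}_q\setminus\{0,1\}$, not all equal, set $y=(\lambda_i x_i)_i$, and write $x=y+(x-y)$. Both summands have full support and neither is proportional to $x$. A correct version of the theorem for $D=a\overline{\Delta_1^{c}}$ therefore needs either $q\neq 3$ or $\mathcal{M}\neq\{[m]\setminus\{j\}: j\in[m]\}$.
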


        %Projective Code over F_q
        \begin{theorem}
        Let $m, r, s \in \mathbb{N}$ and $q=p^r$. Suppose $\Delta_1$ is a simplicial complex of $\mathbb{F}_q^m$ with
			\[
			\mathcal{M}=\{M_1,M_2,\dots,M_s\},
			\]
        the set of maximal elements of $\Delta_1$. 
        \begin{enumerate}
            \item Let $D = a\overline{\Delta_1^{\ast}}  \subseteq \mathcal{S}^m$ such that $\vert M_{i}\vert \geq 2$ for some $1\leq i \leq s$. Then $\tau(C_{D})$ is projective linear code with parameters
            \[
            \Big[\frac{\vert \Delta_1 \vert -1}{(q-1)},  \vert \cup_{i\in [s]} M_i \vert, q^{{\rm min}_{i\in [s]} \{\vert M_i \vert \} -1}\Big].
            \]

            \item Let $D = a\overline{\Delta_1^{c}}  \subseteq \mathcal{S}^m$ such that $\vert M_{i}\vert \geq 2$ for some $1\leq i \leq s$. Then $\tau(C_{D})$ is projective linear code with parameters
            \[
            \Big[\frac{(q^m - \vert \Delta_1 \vert)}{(q-1)},  m, q^{m-1}-\sum_{i\in [s]}q^{\vert M_i \vert -1}\Big].
            \]
        \end{enumerate}
        \end{theorem}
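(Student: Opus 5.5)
The plan is to treat $\tau(C_D)$ through the generator matrix $G$ of Eq.~\eqref{GenMatrix}. Its columns are the vectors $(\bar t,\mathbf{0})^{\rm T}$, where $\bar t$ runs over the representative set $\overline{\Delta_1^{\ast}}$ (respectively $\overline{\Delta_1^{c}}$). Only the first block $\beta_1$ of $y=(\beta_1,\beta_2)$ contributes. So $\tau(C_D)$ is the projective-style code $\{(\langle \beta_1,\bar t\rangle_q)_{\bar t}:\beta_1\in\mathbb{F}_q^m\}$, and the zero rows of $G$ can be discarded. The proof then has three steps: the length, projectivity, and the dimension together with the minimum distance.

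\textbf{Length and projectivity.} By the orbit decomposition just before the statement, the length is $|\overline{\Delta_1^{\ast}}|=(|\Delta_1|-1)/(q-1)$. In case 2, $\mathbf 0\in\Delta_1$, so $\Delta_1^c$ contains no zero vector. The same orbit argument applies to $\Delta_1^c$ provided $\Delta_1^c$ is closed under $\mathbb{F}_q^\ast$-scaling, which holds because supports are scale invariant. This gives length $(q^m-|\Delta_1|)/(q-1)$.

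For projectivity I would use Theorem~\ref{ThmDual}: it suffices that no column of $G$ is zero and no two columns are linearly dependent. Both hold by construction, since the columns are nonzero and lie in pairwise distinct $\mathbb{F}_q^\ast$-orbits. Hence $d(\tau(C_D)^\perp)\ge 3$. To make the code nondegenerate and the dual distance meaningful, I would note that the hypothesis $|M_i|\ge 2$ for some $i$ guarantees at least two columns. In case 1 this also rules out the trivial situation of a length-one code.

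\textbf{Dimension and minimum distance.} Each codeword of $C_D$ with $D=a\Delta_1^\ast$ is the codeword of the representative code repeated $q-1$ times up to scalars, because $\langle\beta_1,\alpha\bar t\rangle=\alpha\langle\beta_1,\bar t\rangle$. Similarly, the zero vector contributes a zero coordinate. Therefore
\[
wt_H\big(\tau(c_{a\overline{\Delta^\ast}}(v))\big)=\frac{1}{q-1}\,wt_H\big(\tau(c_{a\Delta}(v))\big),
\]
and the analogous identity holds for $\Delta_1^c$. These weights are given by Theorem~\ref{Thm-2}(\ref{part:1sub}) and (\ref{part:3sub}) with $\Delta_2=\{\mathbf 0\}$, so $|\Delta_2|=1$. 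They equal $\mathcal A_{\Delta_1,\beta_1}$, respectively $(q-1)q^{m-1}(1-\delta_{\mathbf 0,\beta_1})-\mathcal A_{\Delta_1,\beta_1}$.

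Dividing by $q-1$ yields the stated minimum distances, namely $q^{\min|M_i|-1}$ and $q^{m-1}-\sum_i q^{|M_i|-1}$. These follow from Lemma~\ref{Lem-A}(\ref{Lem-A2}) and (\ref{Lem-A3}), with attainment supplied by Lemma~\ref{Lem-A+}. The dimension is read off the kernel. By Lemma~\ref{Lem-A}(\ref{Lem-A1}), the kernel is $\Delta^\perp_{\cup M_i}$ in case 1, giving dimension $|\cup_i M_i|$. In case 2 the kernel is $\{\mathbf 0\}$, giving dimension $m$. The $\beta_2$ component is irrelevant here and must be quotiented out.

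\textbf{Main obstacle.} The delicate step is attainment of the minimum distance, which relies on Lemma~\ref{Lem-A+}. That lemma requires the condition $M_i\setminus\cup_{j\ne i}M_j\neq\emptyset$, which is not among the stated hypotheses. I would first try to import it, as in Theorem~\ref{Thm-2}. Failing that, I would present the stated distances as lower bounds that are attained under this condition.
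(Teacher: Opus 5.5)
Your route (specialize to $\Delta_2=\{\mathbf{0}\}$, divide the weights behind Theorem~\ref{Thm-2} by $q-1$ via the orbit decomposition, and get projectivity from pairwise independence of the columns of $G$) is what the paper's statement implicitly rests on; the paper gives no separate proof. You are also right that attaining the minimum distance needs $M_i\setminus\cup_{j\neq i}M_j\neq\emptyset$. For example, with $q=2$, $m=3$, $M_1=\{1,2\}$, $M_2=\{2,3\}$, $M_3=\{1,3\}$, part~1 actually produces a $[6,3,3]$ code, not one of minimum distance $2$.

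The step that does not go through is the dimension in part~2. You assert that the kernel is $\{\mathbf{0}\}$, but nothing you cite gives this. Lemma~\ref{Lem-A}(\ref{Lem-A1}) characterizes $\mathcal{A}_{\Delta_1,\beta_1}=0$, which is the kernel condition only for part~1. Citing Theorem~\ref{Thm-2}(\ref{part:3sub}) does not help, because its dimension claim has the same defect. In part~2 a nonzero $\beta_1$ gives the zero codeword exactly when $\mathcal{A}_{\Delta_1,\beta_1}=(q-1)q^{m-1}$, i.e.\ when $\langle\beta_1,t\rangle_q=0$ for all $t\in\Delta_1^c$. This happens even after importing the extra condition:
\begin{itemize}
\item The simplest case is $s=1$, $M_1=[m]$ with $m\geq 2$, where $\Delta_1^c=\emptyset$.
\item A less degenerate one is $q=2$, $m=4$, $M_1=\{1,2,3\}$, $M_2=\{1,2,4\}$. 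Then $\Delta_1^c=\{(x,y,1,1):x,y\in\mathbb{F}_2\}$, the kernel is $\{\mathbf{0},(0,0,1,1)\}$, and $\tau(C_D)$ is the $[4,3,2]$ even-weight code, whereas the statement predicts $[4,4,0]$.
\end{itemize}
Moreover, $(0,0,1,1)$ is exactly the vector supplied by Lemma~\ref{Lem-A+}(\ref{Lem-A+2}), so your attainment step also fails here: it lands on the zero codeword.

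The missing hypothesis is $q^{m-1}>\sum_{i\in[s]}q^{|M_i|-1}$, equivalently $q^m>\sum_{i\in[s]}q^{|M_i|}$. With it, Lemma~\ref{Lem-A}(\ref{Lem-A3}) gives every $\beta_1\neq\mathbf{0}$ weight at least $q^{m-1}-\sum_{i}q^{|M_i|-1}>0$. So the kernel is trivial, the dimension is $m$, and Lemma~\ref{Lem-A+} then yields a nonzero codeword of exactly this weight. Under the imported condition one always has $\sum_i q^{|M_i|-1}\leq q^{m-1}$, since the vector from Lemma~\ref{Lem-A+} satisfies $\mathcal{A}_{\Delta_1,\beta_1}\leq (q-1)q^{m-1}$; so only the equality case must be excluded. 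Relatedly, your remark that $|M_i|\geq 2$ guarantees at least two columns is true in part~1 but false in part~2, where the length can be $0$ or $1$.
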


        \begin{corollary}\label{Proj-2wt}
            Let $r, m \in \mathbb{N}$, $q=p^r$ and let $M \subseteq [m]$. Suppose  $\Delta_M$ is a simplicial complex of $\mathbb{F}_q^m$ with the maximal element $M$.
			\begin{enumerate}
				\item \label{corol:1SF}
				Let $D = a\overline{\Delta_M^{\ast}} \subseteq\mathcal{S}^m$. Then $\tau(C_D)$ is a $1$-weight projective $q^{\vert M \vert -1}$-divisible code over $\mathbb{F}_q$ with parameters $\big[\frac{q^{\vert M\vert }-1}{q-1}, \vert M \vert, q^{\vert M \vert -1}\big]$, and hence minimal.
				
				\item \label{corol:3SF}
				Let $D = a\overline{\Delta_M^{c}} \subseteq\mathcal{S}^m$. Then $\tau(C_D)$ is a $2$-weight projective $q^{\vert M \vert -1}$-divisible code over $\mathbb{F}_q$ with parameters $\big[\frac{q^m- q^{\vert M \vert}}{q-1}, m,  (q^{m-1} - q^{\vert M \vert -1}) \big]$, where $wt_{0} = (q^{m-1} - q^{\vert M \vert -1})$ and $wt_{\infty} = q^{\vert M \vert  -1}$, and is minimal if $m > {\vert M \vert +1}$.
				
			\end{enumerate}
        \end{corollary}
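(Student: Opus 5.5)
The plan is to reduce everything to the weight formulas already obtained for $\tau(C_D)$ over the full sets $\Delta_M^\ast$ and $\Delta_M^c$, and then pass to orbit representatives. Take $D=aD_1$ with $D_1\in\{\overline{\Delta_M^{\ast}},\overline{\Delta_M^{c}}\}$ and $v=a\beta_1+b\beta_2$. By Eq.~\eqref{KeyEq_SF_like},
\[
wt_H(\tau(c_D(v)))=\#\{t\in D_1:\langle\beta_1,t\rangle_q\neq 0\}.
\]
The condition $\langle\beta_1,\alpha t\rangle_q\neq0$ does not depend on $\alpha\in\mathbb{F}_q^\ast$. Both $\Delta_M^\ast$ and $\Delta_M^c$ are unions of full orbits $[x]$, each of size $q-1$. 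Hence the weight over the representative set is exactly $\frac{1}{q-1}$ times the weight over the full set $\Delta_M^\ast$, resp.\ $\Delta_M^c$.

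\textbf{Weights for $D_1=\overline{\Delta_M^{\ast}}$.} The zero vector contributes nothing, so the full-set weight is $\mathcal{A}_{\Delta_M,\beta_1}$. By Lemma~\ref{Lem-A} with $s=1$, this equals $(q-1)q^{|M|-1}\bigl(1-\Psi(\beta_1\mid\Delta_M)\bigr)$. Dividing by $q-1$, every codeword has weight $0$ or $q^{|M|-1}$. The zero weight occurs exactly when $\beta_1\in\Delta_M^\perp$, and $\beta_2$ is free. Counting the kernel of $T$ and using Eq.~\eqref{DimCount} gives $|\tau(C_D)|=q^{2m}/(q^{m-|M|}q^m)=q^{|M|}$, so the dimension is $|M|$. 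The length is $(q^{|M|}-1)/(q-1)$. A one-weight code is trivially minimal, and it is $q^{|M|-1}$-divisible.

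\textbf{Weights for $D_1=\overline{\Delta_M^{c}}$.} By Lemma~\ref{Lem-8}, the full-set weight is $(q-1)q^{m-1}(1-\delta_{\mathbf{0},\beta_1})-\mathcal{A}_{\Delta_M,\beta_1}$. Dividing by $q-1$ gives the following:
\begin{itemize}
\item weight $0$ if $\beta_1=\mathbf{0}$;
\item weight $q^{m-1}$ if $\mathbf{0}\neq\beta_1\in\Delta_M^\perp$;
\item weight $q^{m-1}-q^{|M|-1}$ if $\beta_1\notin\Delta_M^\perp$.
\end{itemize}
So the kernel has size $q^m$, the dimension is $m$, and the code is two-weight (for $|M|<m$). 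Both weights are divisible by $q^{|M|-1}$. The minimum weight is $wt_0=q^{m-1}-q^{|M|-1}$. The maximum weight is $q^{m-1}$; I would note that the $wt_\infty$ written in the statement should read $q^{m-1}$.

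\textbf{Minimality.} Lemma~\ref{minimal_lemma} requires
\[
\frac{q^{m-1}-q^{|M|-1}}{q^{m-1}}>\frac{q-1}{q}.
\]
This is equivalent to $q^{|M|-m}<q^{-1}$, i.e.\ $m>|M|+1$.

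\textbf{Projectivity.} The columns of the generator matrix $G$ in Eq.~\eqref{GenMatrix} are the vectors $(t,\mathbf{0})$ with $t\in D_1$. These are nonzero, and no two are $\mathbb{F}_q$-proportional, because $D_1$ contains one representative per orbit. Since $G$ is a parity-check matrix of $\tau(C_D)^\perp$, Theorem~\ref{ThmDual} gives $d(\tau(C_D)^\perp)\geq3$.

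The step needing most care is the orbit reduction: one must check that $\Delta_M^c$, like $\Delta_M^\ast$, is closed under $\mathbb{F}_q^\ast$-scaling, so that the factor $\frac{1}{q-1}$ is exact. The other delicate point is reconciling the stated $wt_\infty$ with the computed maximum weight $q^{m-1}$.
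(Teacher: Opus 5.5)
Your proof is correct and follows essentially the route the paper intends. The paper gives no separate proof: the corollary is the specialization of Theorem~\ref{Thm-2} (and its single-maximal-element corollary) to $D_2=\{\mathbf{0}\}$, followed by passing to one representative per $\mathbb{F}_q^\ast$-orbit, which divides every weight by $q-1$ and makes the columns of $G$ pairwise non-proportional. Your correction $wt_\infty=q^{m-1}$ is right, and it is exactly what makes the Ashikhmin--Barg condition reduce to the stated requirement $m>|M|+1$.
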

    
%%%%%%%%%%%%SSS

\subsection{Secret-sharing schemes}
%Few-weight linear codes are of particular interest for their applications in secret-sharing schemes.

In this subsection, we first recall some basic concepts of secret-sharing schemes and several results that will be needed in the sequel.

Let $C$ be an $[n,k]$ linear code over $\mathbb{F}_q$ with generator matrix
\[
G=[\mathbf{g}_0,\mathbf{g}_1,\ldots,\mathbf{g}_{n-1}].
\]
Consider a secret-sharing scheme based on $C$, where the secret $s$ is an element of $\mathbb{F}_q$. The scheme consists of one dealer and $n-1$ participants, denoted by
\[
P_1,P_2,\ldots,P_{n-1}.
\]

To distribute a secret $s\in\mathbb{F}_q$, the dealer randomly selects a vector
$\mathbf{u}=(u_0,u_1,\ldots,u_{k-1})\in\mathbb{F}_q^k$ such that $s=\mathbf{u}\mathbf{g}_0.$ Since $\mathbf{g}_0\neq \mathbf{0}$, the map $\mathbf{u}\longmapsto \mathbf{u}\mathbf{g}_0$ is a nonzero linear functional on $\mathbb{F}_q^k$. Hence, for each $s\in\mathbb{F}_q$, there are exactly $q^{k-1}$ such vectors $\mathbf{u}\in\mathbb{F}_q^k$ satisfying the above equation. The dealer then computes the codeword
\[
\mathbf{t}=\mathbf{u}G=(t_0,t_1,\ldots,t_{n-1})\in C.
\]
For each $1\le i\le n-1$, the coordinate $t_i$ is assigned as the share of participant $P_i$. Note that $t_0=\mathbf{u}\mathbf{g}_0=s,$ so the first coordinate corresponds to the secret.

A subset of participants is called an \emph{access set} if its members can jointly recover the secret from their shares. The collection of all access sets is called the \emph{access structure} of the scheme. An access set is said to be \emph{minimal} if none of its proper subsets is an access set. The collection of all minimal access sets is called the \emph{minimal access structure}.

Let $\{P_{i_1},P_{i_2},\ldots,P_{i_m}\}$ be a subset of participants with corresponding shares $\{t_{i_1},t_{i_2},\ldots,t_{i_m}\}.$ Then this subset can recover the secret if and only if $\mathbf{g}_0$ belongs to the linear span of $\{\mathbf{g}_{i_1},\mathbf{g}_{i_2},\ldots,\mathbf{g}_{i_m}\}.$ If
\[
\mathbf{g}_0=\sum_{j=1}^{m}x_j\mathbf{g}_{i_j}
\]
for some $x_1,\ldots,x_m\in\mathbb{F}_q$, then
\[
s
=\mathbf{u}\mathbf{g}_0
=\sum_{j=1}^{m}x_j(\mathbf{u}\mathbf{g}_{i_j})
=\sum_{j=1}^{m}x_j t_{i_j}.
\]
Therefore, the secret can be reconstructed from the shares
$t_{i_1},t_{i_2},\ldots,t_{i_m}$.

Further details on secret-sharing schemes based on linear codes can be found in \cite{SSS,Ding_Ding}. Here we recall the following two fundamental results, from \cite{SSS-2,SSS-3}.

\begin{theorem}[\cite{SSS-2}]\label{Thm-AccessSet}
Let $C$ be an $[n,k]$ linear code over $\mathbb{F}_q$ with generator matrix
\[
G=[\mathbf{g}_0,\mathbf{g}_1,\ldots,\mathbf{g}_{n-1}].
\]
If $C$ is a minimal code, then the secret-sharing scheme based on $C^\perp$ has exactly $q^{k-1}$ minimal access sets. Moreover, if $\mathbf{g}_i$ is not a scalar multiple of $\mathbf{g}_0$ for $1\le i\le n-1$, then the participant $P_i$ belongs to exactly
$
(q-1)q^{k-2}
$
of the $q^{k-1}$ minimal access sets.
\end{theorem}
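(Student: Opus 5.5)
The plan is to use the standard correspondence, due to Massey, between minimal codewords of $C$ and minimal access sets of the scheme based on $C^\perp$. In the scheme built from $C^\perp$, the shares come from codewords of $C^\perp$. A set of participants $\{P_{i_1},\dots,P_{i_m}\}$ can recover the secret if and only if some codeword $\mathbf{c}=(c_0,c_1,\dots,c_{n-1})\in (C^\perp)^\perp=C$ has $c_0=1$ and $\textnormal{Supp}(\mathbf{c})\subseteq\{0,i_1,\dots,i_m\}$. This is the dual form of the span criterion recalled above, since $\mathbf{g}_0\in\langle \text{columns of a generator of } C^\perp\rangle$ translates into a relation in $C$. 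So I would first prove the following: the minimal access sets are exactly the sets $\textnormal{Supp}(\mathbf{c})\setminus\{0\}$, where $\mathbf{c}\in C$ is a minimal codeword with $c_0=1$. The argument for this step: if an access set is minimal, then any witnessing codeword must have support exactly $\{0\}\cup A$. If this codeword were not minimal, some $\mathbf{w}\preceq\mathbf{c}$ not proportional to $\mathbf{c}$ would exist. A suitable combination $\mathbf{c}-\lambda\mathbf{w}$ would then give a strictly smaller support that either still contains $0$, contradicting minimality of $A$, or, when $w_0 = 0$, would let us take $\mathbf{c}-\lambda\mathbf{w}$ with $\lambda$ killing some other coordinate. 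The converse direction is similar. I also need that distinct normalized minimal codewords give distinct supports, which holds because two minimal codewords with the same support are proportional.

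Next I count. Since $C$ is minimal, every nonzero codeword is minimal, so the minimal access sets correspond bijectively to codewords $\mathbf{c}\in C$ with $c_0=1$. With $\mathbf{c}=\mathbf{u}G$, the condition $c_0=\mathbf{u}\mathbf{g}_0=1$ is a single nontrivial affine linear condition on $\mathbf{u}\in\mathbb{F}_q^k$, because $\mathbf{g}_0\neq\mathbf{0}$. Hence there are exactly $q^{k-1}$ such $\mathbf{c}$, giving $q^{k-1}$ minimal access sets. (I am tacitly assuming $G$ has full rank $k$.)

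For the second claim, $P_i$ lies in the access set of $\mathbf{c}$ exactly when $c_i=\mathbf{u}\mathbf{g}_i\neq0$. So I count the $\mathbf{u}$ with $\mathbf{u}\mathbf{g}_0=1$ and $\mathbf{u}\mathbf{g}_i\neq 0$. Because $\mathbf{g}_i$ is not a multiple of $\mathbf{g}_0$, the two functionals are linearly independent. Therefore, for each value $\mathbf{u}\mathbf{g}_i=\gamma\in\mathbb{F}_q$, the system has $q^{k-2}$ solutions. Summing over $\gamma\in\mathbb{F}_q^\ast$ gives $(q-1)q^{k-2}$. The case $\mathbf{g}_i=\mathbf{0}$ needs care, since then $P_i$ lies in no set; the hypothesis must be read as requiring $\mathbf{g}_i$ to be linearly independent from $\mathbf{g}_0$.

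I expect the main obstacle to be the first step, namely rigorously identifying minimal access sets with supports of minimal codewords having $c_0=1$. This requires handling the duality between recovery in $C^\perp$ and codewords in $C$, together with the support-reduction argument. The counting steps after that are routine linear algebra.
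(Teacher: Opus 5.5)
The paper gives no proof of this statement; it quotes it from \cite{SSS-2}. Your argument is correct and is the standard proof behind the cited result. It first uses Massey's correspondence: the minimal access sets of the scheme based on $C^\perp$ are exactly the sets $\textnormal{Supp}(\mathbf{c})\setminus\{0\}$ for minimal codewords $\mathbf{c}\in C$ with $c_0=1$. It then counts the $\mathbf{u}\in\mathbb{F}_q^k$ with $\mathbf{u}\mathbf{g}_0=1$, respectively with $\mathbf{u}\mathbf{g}_0=1$ and $\mathbf{u}\mathbf{g}_i\neq 0$.

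Three small remarks:
\begin{enumerate}
\item In the dual span criterion, the vectors involved are the columns $\mathbf{h}_0,\mathbf{h}_{i_1},\dots,\mathbf{h}_{i_m}$ of a generator matrix of $C^\perp$, not $\mathbf{g}_0$. A relation $\mathbf{h}_0=\sum_j x_j\mathbf{h}_{i_j}$ is what yields a vector of $(C^\perp)^\perp=C$ with $c_0=1$.
\item Your worry about $\mathbf{g}_i=\mathbf{0}$ is already handled by the hypothesis. Since $\mathbf{0}=0\cdot\mathbf{g}_0$ is a scalar multiple of $\mathbf{g}_0$, the condition ``not a scalar multiple of $\mathbf{g}_0$'' means exactly that $\mathbf{g}_0,\mathbf{g}_i$ are linearly independent.
\item This reading, and the count $q^{k-1}$ itself, require $\mathbf{g}_0\neq\mathbf{0}$. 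The paper's setup assumes this. It is genuinely necessary: a minimal code whose first coordinate is identically zero has no access sets at all.
\end{enumerate}
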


\begin{proposition}[\cite{SSS-3}]\label{Thm-CanNot}
Let $C$ be a linear code over $\mathbb{F}_q$, and let $d^\perp$ denote the minimum Hamming distance of its dual code $C^\perp$. Then, in the secret-sharing scheme based on $C$, any set of at most $d^\perp-2$ participants cannot recover the secret.
\end{proposition}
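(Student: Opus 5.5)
The plan is to turn recoverability into a linear dependence among columns of $G$, and then use the column-independence characterisation of minimum distance in Theorem~\ref{ThmDual}, applied to the dual code $C^\perp$.

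First, $G=[\mathbf{g}_0,\mathbf{g}_1,\ldots,\mathbf{g}_{n-1}]$ is a generator matrix of $C$. Since $(C^\perp)^\perp=C$, the same matrix $G$ is a parity-check matrix of $C^\perp$. Theorem~\ref{ThmDual}(1) applied to $C^\perp$ with $d(C^\perp)=d^\perp$ then shows that every set of $d^\perp-1$ columns of $G$ is linearly independent over $\mathbb{F}_q$. Hence every set of at most $d^\perp-1$ distinct columns is independent, since subsets of independent sets are independent.

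Next, take any set of participants $\{P_{i_1},\ldots,P_{i_\ell}\}$ with $\ell\le d^\perp-2$ and distinct indices $i_j\in\{1,\ldots,n-1\}$. Suppose, for contradiction, that they can recover the secret. By the discussion preceding Theorem~\ref{Thm-AccessSet}, recovery happens exactly when $\mathbf{g}_0=\sum_{j=1}^{\ell}x_j\mathbf{g}_{i_j}$ for some $x_j\in\mathbb{F}_q$. (If one wants the "only if" direction explicitly: when $\mathbf{g}_0$ is not in the span, one can pick $\mathbf{u}$ orthogonal to every $\mathbf{g}_{i_j}$ with $\mathbf{u}\mathbf{g}_0\neq 0$. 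This gives identical shares for different secrets, so the secret is not determined.)

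Rewriting, we get $\mathbf{g}_0-\sum_j x_j\mathbf{g}_{i_j}=\mathbf{0}$. This is a nontrivial linear relation, because the coefficient of $\mathbf{g}_0$ is $1\neq 0$. It involves the $\ell+1\le d^\perp-1$ distinct columns $\mathbf{g}_0,\mathbf{g}_{i_1},\ldots,\mathbf{g}_{i_\ell}$, which contradicts the independence established in the first step. Therefore no such set can recover the secret.

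The only delicate point is bookkeeping rather than substance. One must check that index $0$ is distinct from the participant indices, so that there really are $\ell+1$ distinct columns. One must also make sure Theorem~\ref{ThmDual} is applied to $C^\perp$, not to $C$. The degenerate case $\ell=0$ (when $d^\perp=2$) should be treated separately: the empty set recovers only if $\mathbf{g}_0=\mathbf{0}$, which is excluded because $\mathbf{g}_0\ne\mathbf{0}$, equivalently because $d^\perp\ge 2$ forces every single column to be nonzero.
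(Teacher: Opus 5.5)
The paper gives no proof of this proposition; it is quoted from \cite{SSS-3} and used as a black box. Your argument is correct and is the standard one.

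You can also phrase it without Theorem~\ref{ThmDual}. A recovery relation $\mathbf{g}_0=\sum_{j=1}^{\ell}x_j\mathbf{g}_{i_j}$ gives the vector $\mathbf{c}\in\mathbb{F}_q^n$ with $c_0=1$, $c_{i_j}=-x_j$ and zeros elsewhere. This vector satisfies $G\mathbf{c}^{\rm T}=\mathbf{0}$, so it is a nonzero codeword of $C^\perp$ of weight at most $\ell+1\le d^\perp-1$, which is a contradiction. This is Massey's minimal-codeword viewpoint. Your route through column independence of the parity-check matrix $G$ of $C^\perp$ is the same fact.

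Your justification of the ``only if'' direction is fine. So is your treatment of the case $\ell=0$, though the general argument already covers it: $\mathbf{g}_0=\mathbf{0}$ is a nontrivial relation on a single column.
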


\begin{theorem}\label{ThmSSS-1}
Let $m, r, s, k \in \mathbb{N}$ and $q=p^r$. Suppose that $\Delta_1$ and $\Delta_2$ are simplicial complexes of $\mathbb{F}_q^m$ with
		\[
		\mathcal{M}=\{M_1, M_2, \dots, M_s\}
		\quad \text{and} \quad
		\mathcal{N}=\{N_1, N_2, \dots, N_k\}
		\]
		being the sets of maximal elements of $\Delta_1$ and $\Delta_2$, respectively. Let $M_i \setminus \cup_{j\in [s]\setminus \{i\}}M_j\neq \emptyset$ for any $i\in [s]$ and let $\Phi$ be the map given by Eq.~\eqref{PhiEq}.
\begin{enumerate}
%1
\item \label{part:1SSS}
Let $D=a\Delta_1+b\Delta_2\subseteq \mathcal{S}^m$ and $q^{{\rm min}_{i \in [s]}\{\vert M_i \vert\} +1} > (q-1)\sum_{ i \in [s]}q^{\vert M_i \vert}$. Then the secret-sharing scheme based on $\Phi(C_D)^\perp$ has exactly $q^{\left|\bigcup_{i\in[s]} M_i\right|-1}$ minimal access sets. Moreover, each participant belongs to precisely $(q-1)q^{\left|\bigcup_{i\in[s]} M_i\right|-2}$ of these minimal access sets. Furthermore, any collection of at most $2|\Delta_2|(q-1) q^{{\rm min}_{i\in[s]}\{|M_i|\}-1}-2$ shares cannot recover the secret.

%2
\item \label{part:2SSS}
Let $D = a\Delta_1 + b\Delta^c_2 \subseteq \mathcal{S}^m$ and $q^{{\rm min}_{i \in [s]}\{\vert M_i \vert\} +1} > (q-1)\sum_{ i \in [s]}q^{\vert M_i \vert}$. Then the secret-sharing scheme based on $\Phi(C_D)^\perp$ has exactly $q^{\vert \cup_{i\in [s]} M_i \vert-1}$ minimal access sets. Moreover, each participant belongs to precisely $(q-1)q^{\vert \cup_{i\in [s]} M_i \vert - 2}$ of these minimal access sets. Furthermore, any collection of at most $2(q^m-\vert \Delta_2 \vert )(q-1)q^{{\rm min}_{i\in [s]} \{\vert M_i \vert\} -1} - 2$ shares cannot recover the secret.

%3
\item \label{part:3SSS}
Let $D = a\Delta^c_1 + b\Delta_2 \subseteq \mathcal{S}^m$ and $q^m > \sum_{ i \in [s]} q^{\vert M_i \vert + 1}$. Then the secret-sharing scheme based on $\Phi(C_D)^\perp$ has exactly $q^{m-1}$ minimal access sets. Moreover, each participant belongs to precisely $(q-1)q^{m - 2}$ of these minimal access sets. Furthermore, any collection of at most $2\vert \Delta_2 \vert (q-1)\big(q^{m-1} - \sum_{ i \in [s]}q^{ \vert M_i \vert -1}\big) - 2$ shares cannot recover the secret.

%4
\item \label{part:4SSS}
Let $D = a\Delta^c_1 + b\Delta^c_2 \subseteq \mathcal{S}^m$ and $q^m > \sum_{ i \in [s]} q^{\vert M_i \vert + 1}$. Then the secret-sharing scheme based on $\Phi(C_D)^\perp$ has exactly $q^{m-1}$ minimal access sets. Moreover, each participant belongs to precisely $(q-1)q^{m - 2}$ of these minimal access sets. Furthermore, any collection of at most $2(q-1)(q^m - \vert \Delta_2 \vert )\big(q^{m-1} - \sum_{ i \in [s]} q^{\vert M_i \vert -1}\big) - 2$ shares cannot recover the secret.

%5
\item \label{part:5SSS}
Let $D = a\Delta_1 + b\Delta_2 \subseteq \mathcal{S}^m$ and $q^{2m} > \vert \Delta_2 \vert \sum_{ i \in [s]} q^{\vert M_i \vert + 1}$. Then the secret-sharing scheme based on $\Phi(C_{D^c})^\perp$ has exactly $q^{m-1}$ minimal access sets. Moreover, each participant belongs to precisely $(q-1)q^{m - 2}$ of these minimal access sets. Furthermore, any collection of at most $2(q-1)\big(q^{2m-1} - \vert \Delta_2 \vert \sum_{ i \in [s]} q^{\vert M_i \vert -1}\big) - 2$ shares cannot recover the secret.

\end{enumerate}
\end{theorem}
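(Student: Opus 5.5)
The plan is to obtain each of the five parts by combining three earlier results: Theorem~\ref{Thm-GrayImage}, Theorem~\ref{Thm-AccessSet} and Proposition~\ref{Thm-CanNot}. The argument is the same for every part; I describe part~\ref{part:5SSS} and note the changes needed elsewhere.

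\textbf{Step 1: minimality and dimension.} Under the hypothesis $q^{2m} > \vert \Delta_2 \vert \sum_{i\in[s]} q^{\vert M_i\vert+1}$, Theorem~\ref{Thm-GrayImage}(\ref{part:5GI}) shows that $\Phi(C_{D^c})$ is a minimal $[2(q^{2m}-\vert\Delta_1\vert\vert\Delta_2\vert),\, m,\, d]$ code over $\mathbb{F}_q$, where
\[
d = 2(q-1)\Big(q^{2m-1}-\vert\Delta_2\vert\sum_{i\in[s]} q^{\vert M_i\vert-1}\Big).
\]
Taking $k=m$ in Theorem~\ref{Thm-AccessSet}, the secret-sharing scheme based on $\Phi(C_{D^c})^\perp$ has exactly $q^{m-1}$ minimal access sets.

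\textbf{Step 2: participant counts.} To get the statement that every participant lies in exactly $(q-1)q^{m-2}$ minimal access sets, I must check the second hypothesis of Theorem~\ref{Thm-AccessSet}: no column $\mathbf{g}_i$ with $i\ge1$ of a generator matrix of $\Phi(C_{D^c})$ is a scalar multiple of $\mathbf{g}_0$. I expect this to be the main obstacle.

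\begin{itemize}
\item The columns come from the defining set through the Gray map. The coordinate functionals are $v\mapsto \langle \beta_1,t_1\rangle_q$, and each appears twice, since $\Phi(b\,x)=(x,x)$.
\item So columns are repeated and also come in scalar multiples. The two Gray coordinates of a single $d$, and the pairs $(t_1,t_2)$, $(\lambda t_1,t_2')$, all give proportional columns.
\end{itemize}

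I would handle this in one of two ways. The first option is to order $D^c$ (the codes are only defined up to equivalence) and choose $\mathbf{g}_0$ to be a column whose proportionality class contains no other column. In general no such column exists, so this may fail. The second option is to follow the paper's convention: read the statement as counting participants whose column is not proportional to $\mathbf{g}_0$, and apply Theorem~\ref{Thm-AccessSet} to exactly those participants. I would state this explicitly, in the same way the paper invokes the result.

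\textbf{Step 3: non-recovery bound.} Since $\Phi(C_{D^c})^{\perp\perp}=\Phi(C_{D^c})$, the minimum distance of the dual of the code on which the scheme is built is $d$ from Step~1. Proposition~\ref{Thm-CanNot}, applied to the scheme based on $\Phi(C_{D^c})^\perp$, then shows that any $d-2$ or fewer shares cannot recover the secret. This matches the stated bound.

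\textbf{Other parts.} Parts \ref{part:1SSS}--\ref{part:4SSS} follow verbatim by substituting the corresponding parts \ref{part:1GI}--\ref{part:4GI} of Theorem~\ref{Thm-GrayImage}. In each case:
\begin{itemize}
\item the minimality hypothesis is assumed;
\item the dimension $k$ is either $\vert\bigcup_{i\in[s]} M_i\vert$ or $m$, which gives $q^{k-1}$ minimal access sets and $(q-1)q^{k-2}$ sets per participant;
\item the stated minimum distance gives the $d-2$ non-recovery bound.
\end{itemize}
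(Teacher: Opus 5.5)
\textbf{Steps 1 and 3.} The paper writes no proof of this theorem. It is evidently meant to follow from Theorem~\ref{Thm-GrayImage}, Theorem~\ref{Thm-AccessSet} and Proposition~\ref{Thm-CanNot}, combined exactly as you do. Steps 1 and 3 are correct, with one omission. Theorem~\ref{Thm-AccessSet} needs $\mathbf{g}_0\neq\mathbf{0}$, and in parts 1, 2 and 5 the points with $t_1=\mathbf{0}\in\Delta_1$ (when present) give identically zero coordinates. If the secret sits at such a coordinate, no codeword has $c_0=1$ and the count $q^{k-1}$ collapses. You must explicitly place the secret at a coordinate with $t_1\neq\mathbf{0}$.

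\textbf{Step 2 is a genuine gap and cannot be closed, because the ``each participant'' claim is false as stated.}

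\emph{Every column is duplicated.} Every coordinate of $c_D(v)$ lies in $b\mathbb{F}_q$, and $\Phi(bx)=(x,x)$. Hence $\Phi(c_D(v))=(\mathbf{x},\mathbf{x})$ with $\mathbf{x}=\tau(c_D(v))$, so $\Phi(C_D)$ is the $2$-fold repetition of $\tau(C_D)$. Every column of its generator matrix therefore occurs at least twice, and your first option does not merely ``may fail'': it fails for every ordering.

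\emph{Counterexample participants.} If $i$ is the Gray twin of the secret coordinate, then $c_i=c_0$ for every codeword. So $P_i$ lies in all $q^{k-1}$ minimal access sets, not in $(q-1)q^{k-2}$ of them. The same holds for every participant whose column is a nonzero multiple of $\mathbf{g}_0$, and there are many. Each $t_1$ is repeated once for every admissible $t_2$. Also, $\lambda t_1$ with $\lambda\in\mathbb{F}_q^\ast$ has the same support as $t_1$, so it lies in $\Delta_1$ (resp.\ $\Delta_1^c$) together with $t_1$. In the other direction, participants at zero columns (parts 1, 2, 5) lie in no minimal access set at all.

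\emph{What you can actually prove.} Your second option is therefore not a proof of the statement but a correction of it, and you should present it as such. For $\mathbf{g}_0\neq\mathbf{0}$:
\begin{itemize}
\item a participant lies in exactly $(q-1)q^{k-2}$ minimal access sets if its column is not a scalar multiple of $\mathbf{g}_0$;
\item it lies in all $q^{k-1}$ if its column is a nonzero multiple of $\mathbf{g}_0$;
\item it lies in none if its column is zero.
\end{itemize}
The paper asserts the uniform count by implicitly invoking Theorem~\ref{Thm-AccessSet} without its hypothesis. Your write-up should not silently inherit that.
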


%%%Tau
\begin{theorem}\label{ThmSSS-1}
Let $m, r, s, k \in \mathbb{N}$ and $q=p^r$. Suppose that $\Delta_1$ and $\Delta_2$ are simplicial complexes of $\mathbb{F}_q^m$ with
		\[
		\mathcal{M}=\{M_1, M_2, \dots, M_s\}
		\quad \text{and} \quad
		\mathcal{N}=\{N_1, N_2, \dots, N_k\}
		\]
		being the sets of maximal elements of $\Delta_1$ and $\Delta_2$, respectively. Let $M_i \setminus \cup_{j\in [s]\setminus \{i\}}M_j\neq \emptyset$ for any $i\in [s]$ and let $\Phi$ be the map given by Eq.~\eqref{tauMap}.
\begin{enumerate}
%1
\item \label{part:1SSS}
Let $D=a\Delta_1+b\Delta_2\subseteq \mathcal{S}^m$ and $q^{{\rm min}_{i \in [s]}\{\vert M_i \vert\} +1} > (q-1)\sum_{ i \in [s]}q^{\vert M_i \vert}$. Then the secret-sharing scheme based on $\tau(C_D)^\perp$ has exactly $q^{\left|\bigcup_{i\in[s]} M_i\right|-1}$ minimal access sets. Moreover, each participant belongs to precisely $(q-1)q^{\left|\bigcup_{i\in[s]} M_i\right|-2}$ of these minimal access sets. Furthermore, any collection of at most $|\Delta_2|(q-1) q^{{\rm min}_{i\in[s]}\{|M_i|\}-1}-2$ shares cannot recover the secret.

%2
\item \label{part:2SSS}
Let $D = a\Delta_1 + b\Delta^c_2 \subseteq \mathcal{S}^m$ and $q^{{\rm min}_{i \in [s]}\{\vert M_i \vert\} +1} > (q-1)\sum_{ i \in [s]}q^{\vert M_i \vert}$. Then the secret-sharing scheme based on $\tau(C_D)^\perp$ has exactly $q^{\vert \cup_{i\in [s]} M_i \vert-1}$ minimal access sets. Moreover, each participant belongs to precisely $(q-1)q^{\vert \cup_{i\in [s]} M_i \vert - 2}$ of these minimal access sets. Furthermore, any collection of at most $(q-1)(q^m-\vert \Delta_2 \vert )q^{{\rm min}_{i\in [s]} \{\vert M_i \vert\} -1} - 2$ shares cannot recover the secret.

%3
\item \label{part:3SSS}
Let $D = a\Delta^c_1 + b\Delta_2 \subseteq \mathcal{S}^m$ and $q^m > \sum_{ i \in [s]} q^{\vert M_i \vert + 1}$. Then the secret-sharing scheme based on $\tau(C_D)^\perp$ has exactly $q^{m-1}$ minimal access sets. Moreover, each participant belongs to precisely $(q-1)q^{m - 2}$ of these minimal access sets. Furthermore, any collection of at most $\vert \Delta_2 \vert (q-1)\big(q^{m-1} - \sum_{ i \in [s]}q^{ \vert M_i \vert -1}\big) - 2$ shares cannot recover the secret.

%4
\item \label{part:4SSS}
Let $D = a\Delta^c_1 + b\Delta^c_2 \subseteq \mathcal{S}^m$ and $q^m > \sum_{ i \in [s]} q^{\vert M_i \vert + 1}$. Then the secret-sharing scheme based on $\tau(C_D)^\perp$ has exactly $q^{m-1}$ minimal access sets. Moreover, each participant belongs to precisely $(q-1)q^{m - 2}$ of these minimal access sets. Furthermore, any collection of at most $(q-1)(q^m - \vert \Delta_2 \vert )\big(q^{m-1} - \sum_{ i \in [s]} q^{\vert M_i \vert -1}\big) - 2$ shares cannot recover the secret.

%5
\item \label{part:5SSS}
Let $D = a\Delta_1 + b\Delta_2 \subseteq \mathcal{S}^m$ and $q^{2m} > \vert \Delta_2 \vert \sum_{ i \in [s]} q^{\vert M_i \vert + 1}$. Then the secret-sharing scheme based on $\tau(C_{D^c})^\perp$ has exactly $q^{m-1}$ minimal access sets. Moreover, each participant belongs to precisely $(q-1)q^{m - 2}$ of these minimal access sets. Furthermore, any collection of at most $(q-1)\big(q^{2m-1} - \vert \Delta_2 \vert \sum_{ i \in [s]} q^{\vert M_i \vert -1}\big) - 2$ shares cannot recover the secret.

\end{enumerate}
\end{theorem}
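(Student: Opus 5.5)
The statement is the subfield-like analogue of the preceding Gray-image secret-sharing theorem. I will prove it by applying Theorem~\ref{Thm-AccessSet} and Proposition~\ref{Thm-CanNot} to the codes $\tau(C_D)$ whose parameters are given in Theorem~\ref{Thm-2}. The argument is uniform across the five parts, so I treat part~(5) in detail; the others change only the parameters read off from Theorem~\ref{Thm-2}.

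\textbf{Step 1 (minimality).} Under the stated numerical hypothesis, Theorem~\ref{Thm-2} gives the dimension $k$ of $\tau(C_D)$ ($k=\vert\cup_{i\in[s]}M_i\vert$ in parts (1)--(2), $k=m$ in parts (3)--(5)) and guarantees that $\tau(C_D)$ is minimal. For part~(5) I will recheck minimality through Lemma~\ref{minimal_lemma}. By Eq.~\eqref{KeyEq_SF_like}, every nonzero weight equals half of the corresponding weight in Eq.~\eqref{Eq.13}. Hence, using Lemma~\ref{Lem-A}(\ref{Lem-A3}),
\[
wt_\infty \le (q-1)q^{2m-1}, \qquad wt_0 = (q-1)\Big(q^{2m-1}-\vert\Delta_2\vert\sum_{i\in[s]}q^{\vert M_i\vert-1}\Big).
\]
The inequality $wt_0/wt_\infty>(q-1)/q$ then reduces to $q^{2m}>\vert\Delta_2\vert\sum_{i\in[s]}q^{\vert M_i\vert+1}$, which is exactly the hypothesis.

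\textbf{Step 2 (counting access sets).} Take a generator matrix $G=[\mathbf g_0,\dots,\mathbf g_{n-1}]$ of $\tau(C_D)$ with rank $k$. By Theorem~\ref{Thm-AccessSet}, the scheme based on $\tau(C_D)^\perp$ has $q^{k-1}$ minimal access sets. Moreover, each participant $P_i$ with $\mathbf g_i$ not a scalar multiple of $\mathbf g_0$ lies in $(q-1)q^{k-2}$ of them.

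\textbf{Step 3 (the scalar-multiple condition).} This is the main obstacle, since the statement claims the count holds for \emph{every} participant. The defining sets contain many proportional columns: $\Delta_1$ and $\Delta_1^c$ are closed under $\mathbb F_q^\ast$-scaling, and each $t_1$ is repeated $\vert D_2\vert$ times. So I must either choose $\mathbf g_0$ suitably or exclude the repeated columns. I would first try to order $D$ so that $\mathbf g_0$ comes from a $t_1$ whose projective class contributes no other coordinate. If that is impossible, I would state the result for participants whose column is not proportional to $\mathbf g_0$; this matches how Theorem~\ref{Thm-AccessSet} is phrased and is presumably what the authors intend.

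\textbf{Step 4 (secrecy threshold).} The scheme is based on $C=\tau(C_D)^\perp$, so $C^\perp=\tau(C_D)$ and $d^\perp$ is the minimum distance of $\tau(C_D)$. Theorem~\ref{Thm-2} gives this value explicitly, e.g.\ $(q-1)\big(q^{2m-1}-\vert\Delta_2\vert\sum_{i\in[s]}q^{\vert M_i\vert-1}\big)$ in part~(5). Proposition~\ref{Thm-CanNot} then shows that any $d^\perp-2$ shares reveal nothing, which is the last claim in each part.
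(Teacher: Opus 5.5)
Your Steps 1, 2 and 4 are the argument the paper intends. The paper states this theorem without proof; the implicit argument is minimality of $\tau(C_D)$ from Theorem~\ref{Thm-2}, then Theorem~\ref{Thm-AccessSet}, then Proposition~\ref{Thm-CanNot} with $d^\perp=d(\tau(C_D))$. Your Ashikhmin--Barg recheck for part (5) is also fine, since $\mathcal{A}_{\Delta_1,\beta_1}\ge 0$ gives $wt_\infty\le (q-1)q^{2m-1}$.

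The problem is Step 3. You should not hedge there, because your first option essentially never works.
\begin{itemize}
\item The secret coordinate must have $t_1\neq\mathbf{0}$, otherwise $\mathbf{g}_0=\mathbf{0}$.
\item $D_1\in\{\Delta_1,\Delta_1^c\}$ is closed under multiplication by $\mathbb{F}_q^\ast$.
\item Each $t_1$ occurs once for every element of its fibre. The fibre has size $f=|D_2|$ in parts (1)--(4), and $f=q^m$ or $f=q^m-|\Delta_2|$ in part (5).
\item Hence there are $(q-1)f-1$ further coordinates whose column is a nonzero multiple of $\mathbf{g}_0$. This number is zero only when $q=2$ and $f=1$.
\item Proportionality of two coordinates is intrinsic to the code ($c_i=\lambda c_j$ for all $c$). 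So no ordering of $D$ and no choice of generator matrix avoids it.
\end{itemize}
For such a participant $P_i$, every codeword with $c_0=1$ has $c_i=\lambda\neq 0$. So $P_i$ lies in all $q^{k-1}$ minimal access sets, not in $(q-1)q^{k-2}$ of them. In addition, in parts (1), (2) and (5) the coordinates with $t_1=\mathbf{0}$ give zero columns, and those participants lie in no minimal access set.

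So the clause ``each participant belongs to precisely $(q-1)q^{k-2}$ of these minimal access sets'' is false as stated in general. For example, in part (3) with $q=3$, the coordinates coming from $x$ and $2x\in\Delta_1^c$ have proportional columns. The paper's bare appeal to Theorem~\ref{Thm-AccessSet} ignores that theorem's scalar-multiple hypothesis.

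What your argument actually proves, and what you should state outright instead of ``presumably what the authors intend'', is the following:
\begin{itemize}
\item there are exactly $q^{k-1}$ minimal access sets, where the first coordinate is chosen with $t_1\neq\mathbf{0}$;
\item participants whose column is a nonzero multiple of $\mathbf{g}_0$ lie in all of them;
\item participants with a zero column lie in none;
\item every other participant lies in exactly $(q-1)q^{k-2}$ of them;
\item any $d(\tau(C_D))-2$ shares cannot recover the secret.
\end{itemize}
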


\subsection{Strongly regular graphs}
%Projective two-weight codes have been extensively investigated because of their close connections with finite projective geometries, strongly regular graphs, and combinatorial design theory. 

Strongly regular graphs (SRGs) are an important class of combinatorial structures that arise naturally in coding theory, finite geometry, and design theory. In particular, a remarkable correspondence exists between projective two-weight linear codes and strongly regular graphs. This connection not only provides a combinatorial interpretation of certain classes of linear codes but also offers an algebraic approach to generating strongly regular graphs (SRGs) with prescribed parameters. This subsection constructs several families of strongly regular Cayley graphs arising from projective two-weight codes and explicitly determine their parameters. Furthermore, we prove that the complements of these graphs are also strongly regular and determine their corresponding parameters.

Now we first recall some basic definitions and results from \cite{Few1, GodsilRoyle, delsarte1971, Few2}.

\begin{definition}[\cite{GodsilRoyle}]
	Suppose $\mathcal{V}$ is a non-empty finite set. A \textit{graph} $\mathcal{G}$ is an ordered pair $(\mathcal{V}, \mathcal{E})$, where elements of $\mathcal{V}$ are called \textit{vertices} (or \textit{nodes}), and $\mathcal{E}$ is a set of unordered pairs of distinct elements of $\mathcal{V}$, called \textit{edges}.
\end{definition}

\begin{definition}[Cayley Graph]
	Suppose $(\mathcal{H}, +)$ is a finite additive group and $S$ is a subset of $\mathcal{H}$ satisfying $\textbf{0}_\mathcal{H} \notin S$ and $S = S^{-1}$ (that is, $s \in S$ implies $-s \in S$).
	The \emph{Cayley graph} $\mathcal{H}(S)$ is the undirected graph whose vertex set is $\mathcal{V} = \mathcal{H}$, and any two distinct vertices $x, y \in \mathcal{H}$ are adjacent if and only if $x-y \in S$ , that is, the edge set is
		\[
		\mathcal{E} = \bigl\{\, \{\mathbf{x}, \mathbf{y}\}  \mid \textbf{x}, \textbf{y} \in \mathcal{V},
		\mathbf{x} - \mathbf{y} \in \Omega \,\bigr\}.
		\]
\end{definition}

\begin{definition}[Complement Graph]
Let $\mathcal{G}=(\mathcal{V},\mathcal{E})$ be a graph. The \emph{complement} of $\mathcal{G}$, denoted by $\overline{\mathcal{G}}$, is the graph with vertex set $\mathcal{V}$ and edge set
\[
\overline{\mathcal{E}}
=
\bigl\{
\{u,v\}
\,:\,
u,v\in\mathcal{V},\ u\neq v,\ \{u,v\}\notin\mathcal{E}
\bigr\}.
\]
\end{definition}

\begin{definition}[Strongly Regular Graph]
A graph $\mathcal{G}=(\mathcal{V},\mathcal{E})$ is said to be a \emph{strongly regular graph} with parameters $(N,K,\lambda,\mu)$ if it has $N$ vertices, is $K$-regular, and satisfies the following properties:
\begin{itemize}
    \item every pair of adjacent vertices has exactly $\lambda$ common neighbors;
    \item every pair of nonadjacent vertices has exactly $\mu$ common neighbors.
\end{itemize}
\end{definition}

Let $C$ be an $[n,k]$ linear code over $\mathbb{F}_q$ with generator matrix $G=[\mathbf{x}_1,\mathbf{x}_2,\ldots,\mathbf{x}_n],$
where $\mathbf{x}_i\in\mathbb{F}_q^k$ is the $i^{th}$ column of $G$. Assume that $C$ is projective, i.e., no two columns of $G$ are scalar multiples of one another.

Let $V=\mathbb{F}_q^k$ and define $\mathcal{O}
=
\{\langle \mathbf{x}_i\rangle : 1\le i\le n\},$
where $\langle \mathbf{x}_i\rangle$ denotes the $1$-dimensional subspace of $V$ generated by $\mathbf{x}_i$. Thus, $\mathcal{O}$ may be viewed as a set of points in the projective space $\mathrm{PG}(k-1,q)$.

Define $\Omega =
\{\mathbf{v}\in V\setminus\{\mathbf{0}\} :
\langle \mathbf{v}\rangle\in\mathcal{O}\}.$ In other words, $\Omega =
\bigcup_{i=1}^{n}
\{\alpha\mathbf{x}_i:\alpha\in\mathbb{F}_q^\ast\}.$

Associated with $C$, we define the graph $\mathcal{G}(\Omega)$ whose vertex set is $V=\mathbb{F}_q^k$, and in which two distinct vertices $\mathbf{x},\mathbf{y}\in V$ are adjacent if and only if $\mathbf{x}-\mathbf{y}\in\Omega.$ Hence, $\mathcal{G}(\Omega)$ is the Cayley graph $\mathrm{Cay}(V,\Omega)$ on the additive group $(\mathbb{F}_q^k,+)$ with connection set $\Omega$.

In \cite{Few1}, the authors studied a connection between projective two-weight linear codes and strongly regular graphs. We recall this result below.

\begin{theorem}[\cite{Few1}]\label{SRG-main}
Let $C$ be a projective linear code over $\mathbb{F}_q$, and let $\mathcal{G}(\Omega)$ be the Cayley graph associated with $C$. Then $\mathcal{G}(\Omega)$ is a strongly regular graph if and only if $C$ is a projective two-weight code.
\end{theorem}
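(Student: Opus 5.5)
The plan is the standard spectral argument of Delsarte and of Calderbank--Kantor: compute the eigenvalues of $\mathcal{G}(\Omega)$ with additive characters, express them through codeword weights, and then use the spectral characterization of strongly regular graphs. Throughout I assume, as is implicit in the setting, that $G$ has rank $k$ (the columns $\mathbf{x}_i$ span $V$), and I read ``strongly regular'' with the usual convention that excludes complete and edgeless graphs.

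\emph{Step 1 (well-defined Cayley graph).} By construction $\Omega=\bigcup_{i}\mathbb{F}_q^\ast\mathbf{x}_i$. This set is closed under negation and does not contain $\mathbf{0}$, so $\mathcal{G}(\Omega)$ is an undirected Cayley graph on $(V,+)$. Projectivity means the $n$ orbits $\mathbb{F}_q^\ast\mathbf{x}_i$ are pairwise disjoint. Hence $|\Omega|=(q-1)n$, and $\mathcal{G}(\Omega)$ is $K$-regular with $K=(q-1)n$.

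\emph{Step 2 (eigenvalues via characters).} Let $\psi$ be the canonical additive character of $\mathbb{F}_q$. For an abelian Cayley graph, the characters $\chi_{\mathbf{a}}(\mathbf{v})=\psi(\langle \mathbf{a},\mathbf{v}\rangle_q)$, $\mathbf{a}\in V$, form a complete orthogonal eigenbasis, with eigenvalues $\lambda_{\mathbf{a}}=\sum_{\mathbf{v}\in\Omega}\chi_{\mathbf{a}}(\mathbf{v})$. I will split this sum over the orbits and use the analogue of Lemma~\ref{OrthChar} over $\mathbb{F}_q$. The inner sum $\sum_{\alpha\in\mathbb{F}_q^\ast}\psi(\alpha\langle\mathbf{a},\mathbf{x}_i\rangle_q)$ equals $q-1$ if $\langle\mathbf{a},\mathbf{x}_i\rangle_q=0$ and $-1$ otherwise. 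Let $w(\mathbf{a})=wt_H(\mathbf{a}G)$. Then
\[
\lambda_{\mathbf{a}}=(q-1)\bigl(n-w(\mathbf{a})\bigr)-w(\mathbf{a})=(q-1)n-q\,w(\mathbf{a}).
\]
Since $G$ has full rank, $\mathbf{a}\mapsto\mathbf{a}G$ is a bijection onto $C$. So the eigenvalue $K$ occurs only for $\mathbf{a}=\mathbf{0}$, which makes the graph connected. The remaining eigenvalues are in bijection with the distinct nonzero weights of $C$ via $w\mapsto (q-1)n-qw$.

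\emph{Step 3 (spectral characterization).} I will invoke the standard fact (e.g.\ from \cite{GodsilRoyle}) that a connected regular graph that is neither complete nor edgeless is strongly regular if and only if it has exactly three distinct eigenvalues. By Step 2, this holds exactly when $C$ has exactly two distinct nonzero weights, which together with projectivity is the claimed equivalence.

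\emph{Expected obstacle: the degenerate one-weight case.} A projective one-weight code is the simplex code by Lemma~\ref{Bonisoli}. Then $\Omega=V\setminus\{\mathbf{0}\}$ and $\mathcal{G}(\Omega)$ is complete, with only two eigenvalues. It must be excluded explicitly by the non-complete convention so that the ``only if'' direction is correct. I will also check directly that, when $C$ has weights $w_1\neq w_2$, the graph is not complete: completeness would force $\Omega=V\setminus\{\mathbf{0}\}$, hence the simplex code, which has one weight. If desired, I will also recover $\lambda,\mu$ explicitly from the two restricted eigenvalues $r,s$ using $\mu=K+rs$ and $\lambda=\mu+r+s$.
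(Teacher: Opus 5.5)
The paper gives no proof of this statement; it is quoted from \cite{Few1} without argument. Your proposal is the classical argument behind that citation (Delsarte's eigenvalue computation, as used by Calderbank--Kantor), and it is correct. The character sum gives the eigenvalues $(q-1)n-q\,wt_H(\mathbf{a}G)$. Full rank of $G$ makes the eigenvalue $K=(q-1)n$ simple, so the graph is connected. The three-eigenvalue characterization of connected, non-complete regular graphs then finishes both directions.

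Your self-contained version buys two things over the bare citation. First, it exposes a convention the statement silently relies on. The projective one-weight (simplex) code yields $\Omega=V\setminus\{\mathbf{0}\}$, hence a complete graph, so the ``only if'' direction holds only if complete graphs are not counted as strongly regular. Second, the same computation also gives the parameter theorem the paper uses next. With $r=K-qw_1$ and $s=K-qw_2$, your identities $\mu=K+rs$ and $\lambda=\mu+r+s$ reproduce exactly the value of $\lambda$ in Theorem~\ref{SRG-Param}. The relation $(A-rI)(A-sI)=\frac{(K-r)(K-s)}{q^k}J$ (with $A$ the adjacency matrix and $J$ the all-ones matrix) gives the stated form $\mu=q^2w_1w_2/q^k$.
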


\begin{theorem}[\cite{Few1}]\label{SRG-Param}
Let $C$ be an $[n,k]$-projective two-weight linear code over $\mathbb{F}_q$ with nonzero weights $w_1$ and $w_2$. Then the strongly regular graph $\mathcal{G}(\Omega)$ associated with $C$ has parameters $(N,K,\lambda,\mu)$, where $N=q^k,\ K=n(q-1),$  $\lambda = K^2+3K-q(w_1+w_2)-Kq(w_1+w_2)+q^2w_1w_2,$ and $\mu = \frac{q^2w_1w_2}{q^k}.$
\end{theorem}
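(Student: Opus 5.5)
The plan is to compute the spectrum of the Cayley graph $\mathcal{G}(\Omega)=\mathrm{Cay}(\mathbb{F}_q^k,\Omega)$ with additive characters, and then read off $\lambda$ and $\mu$ from the standard eigenvalue identities for strongly regular graphs. Theorem~\ref{SRG-main} guarantees that the graph is strongly regular, so only its parameters need to be found.

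\textbf{Step 1: $N$ and $K$.} Clearly $N=|\mathbb{F}_q^k|=q^k$. Projectivity of $C$ means the $n$ lines $\langle\mathbf{x}_i\rangle$ are distinct, so the sets $\{\alpha\mathbf{x}_i:\alpha\in\mathbb{F}_q^\ast\}$ are pairwise disjoint. Hence $|\Omega|=n(q-1)$, and the Cayley graph is $K$-regular with $K=n(q-1)$.

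\textbf{Step 2: the spectrum.} The eigenvalues of a Cayley graph on an abelian group are the character sums $\chi(\Omega)=\sum_{\mathbf{v}\in\Omega}\chi(\mathbf{v})$. For $\mathbf{a}\in\mathbb{F}_q^k$, let $\chi_{\mathbf{a}}(\mathbf{v})=w_p^{\langle u_0,\langle \mathbf{a},\mathbf{v}\rangle_q\rangle_p}$ for a fixed nonzero $u_0$, which is a nontrivial character for $\mathbf{a}\neq\mathbf{0}$. Then
\[
\chi_{\mathbf{a}}(\Omega)=\sum_{i=1}^{n}\sum_{\alpha\in\mathbb{F}_q^\ast}\chi\big(\alpha\langle\mathbf{a},\mathbf{x}_i\rangle_q\big).
\]
The inner sum equals $q-1$ when $\langle\mathbf{a},\mathbf{x}_i\rangle_q=0$, and equals $-1$ otherwise, by the orthogonality relation of Lemma~\ref{OrthChar}. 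The number of $i$ with $\langle\mathbf{a},\mathbf{x}_i\rangle_q\neq0$ is exactly $wt_H(\mathbf{a}G)$. Therefore
\[
\chi_{\mathbf{a}}(\Omega)=(q-1)(n-wt_H(\mathbf{a}G))-wt_H(\mathbf{a}G)=K-q\,wt_H(\mathbf{a}G).
\]
Since $C$ has exactly the two nonzero weights $w_1,w_2$, the nontrivial eigenvalues are $r=K-qw_1$ and $s=K-qw_2$.

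\textbf{Step 3: $\lambda$ and $\mu$.} I use the adjacency identity $A^2=KI+\lambda A+\mu(J-I-A)$. Restricting it to the orthogonal complement of $\mathbf{1}$ shows that $r,s$ are the roots of $x^2-(\lambda-\mu)x-(K-\mu)=0$. Hence $r+s=\lambda-\mu$ and $rs=\mu-K$, which give $\lambda=K+r+s+rs$.

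Substituting $r+s=2K-q(w_1+w_2)$ and $rs=K^2-Kq(w_1+w_2)+q^2w_1w_2$ gives
\[
\lambda=K^2+3K-q(w_1+w_2)-Kq(w_1+w_2)+q^2w_1w_2,
\]
as claimed.

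For $\mu$, write $(A-rI)(A-sI)=\mu J$. This holds on $\mathbf{1}^\perp$, where both sides vanish. Both sides also agree on $\mathbf{1}$, because the identity above gives $(A-rI)(A-sI)\mathbf{1}=(K^2-(\lambda-\mu)K-(K-\mu))\mathbf{1}$ and the displayed consequence of the adjacency relation makes this equal to $\mu N\mathbf{1}$. Applying the identity to $\mathbf{1}$ therefore yields $(K-r)(K-s)=\mu N$. Since $(K-r)(K-s)=q^2w_1w_2$, we get $\mu=q^2w_1w_2/q^k$.

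\textbf{Main obstacle.} The only delicate point is the two different-looking expressions for $\mu$: the relation $\mu=K+rs$ and the relation $\mu N=(K-r)(K-s)$. I would handle this by deriving both from the same quadratic relation, evaluated once on $\mathbf{1}^\perp$ and once on $\mathbf{1}$. A secondary detail is choosing a nontrivial character of $\mathbb{F}_q$ so that Lemma~\ref{OrthChar} gives the value $-1$ for the nonzero case.
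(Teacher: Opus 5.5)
Your proof is correct: the additive-character computation gives the nontrivial eigenvalues $r=K-qw_1$ and $s=K-qw_2$, and the identities $r+s=\lambda-\mu$, $rs=\mu-K$ and $(K-r)(K-s)=\mu N$ yield exactly the stated $\lambda$ and $\mu=q^2w_1w_2/q^k$. The paper does not prove this result itself (it is quoted from \cite{Few1}), and your eigenvalue argument is the standard proof of this correspondence; the only simplification worth noting is that your $\mu$ step is just the adjacency identity evaluated at $\mathbf{1}$, i.e.\ $K(K-\lambda-1)=\mu(N-K-1)$, so the detour through $(A-rI)(A-sI)=\mu J$ is not needed.
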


\begin{theorem}[\cite{Ref-SRG}]\label{Thm-SRG-Complement}
Let $\mathcal{G}$ be a graph. Then $\mathcal{G}$ is strongly regular if and only if its complement $\overline{\mathcal{G}}$ is strongly regular.

Moreover, if $\mathcal{G}$ is a strongly regular graph with parameters $(N,K,\lambda,\mu),$ then its complement $\overline{\mathcal{G}}$ is a strongly regular graph with parameters $\bigl(N,\,
N-K-1,\,
N-2K+\mu-2,\,
N-2K+\lambda
\bigr).$
\end{theorem}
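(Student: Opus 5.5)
The plan is a direct counting argument on neighbourhoods, applied to $\mathcal{G}$ and $\overline{\mathcal{G}}$. The equivalence will come for free from the parameter computation. Indeed, $\overline{\overline{\mathcal{G}}}=\mathcal{G}$, so once I show that ``$\mathcal{G}$ strongly regular'' implies ``$\overline{\mathcal{G}}$ strongly regular'', applying the same implication to $\overline{\mathcal{G}}$ gives the converse. It therefore suffices to assume $\mathcal{G}=(\mathcal{V},\mathcal{E})$ is strongly regular with parameters $(N,K,\lambda,\mu)$ and compute the parameters of $\overline{\mathcal{G}}$. For a vertex $x$, write $\Gamma(x)$ for its neighbourhood in $\mathcal{G}$. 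Its neighbourhood in $\overline{\mathcal{G}}$ is then $\overline{\Gamma}(x)=\mathcal{V}\setminus(\Gamma(x)\cup\{x\})$.

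\textbf{Regularity and vertex count.} The vertex set is unchanged, so $\overline{\mathcal{G}}$ has $N$ vertices. Since $|\Gamma(x)|=K$ and $x\notin\Gamma(x)$, every vertex has degree $|\overline{\Gamma}(x)|=N-K-1$ in $\overline{\mathcal{G}}$.

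\textbf{Adjacent pairs in $\overline{\mathcal{G}}$.} Let $x\neq y$ be adjacent in $\overline{\mathcal{G}}$, so they are non-adjacent in $\mathcal{G}$. Their common neighbours in $\overline{\mathcal{G}}$ are exactly the vertices outside $\Gamma(x)\cup\Gamma(y)\cup\{x,y\}$. By inclusion--exclusion and the $\mu$-condition, $|\Gamma(x)\cup\Gamma(y)|=2K-\mu$. Because $x,y$ are non-adjacent, neither $x$ nor $y$ lies in $\Gamma(x)\cup\Gamma(y)$, so the excluded set has size $2K-\mu+2$. This gives the new $\lambda$-parameter $N-2K+\mu-2$.

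\textbf{Non-adjacent pairs in $\overline{\mathcal{G}}$.} Let $x\neq y$ be non-adjacent in $\overline{\mathcal{G}}$, so they are adjacent in $\mathcal{G}$. Again the common neighbours in $\overline{\mathcal{G}}$ are the vertices outside $\Gamma(x)\cup\Gamma(y)\cup\{x,y\}$. Now $|\Gamma(x)\cup\Gamma(y)|=2K-\lambda$, and this union already contains $y\in\Gamma(x)$ and $x\in\Gamma(y)$. The excluded set therefore has size exactly $2K-\lambda$, giving the new $\mu$-parameter $N-2K+\lambda$.

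\textbf{Main obstacle.} The only delicate point is the bookkeeping of whether $x$ and $y$ themselves lie in the union of neighbourhoods; this is precisely what produces the ``$-2$'' in one case and its absence in the other. I would also remark on the degenerate cases, complete or edgeless graphs, where one of the two conditions is vacuous. The counting identities still hold formally there, so with the paper's definition (no nondegeneracy requirement) no separate treatment is needed.
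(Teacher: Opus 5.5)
Your proof is correct and complete. The degree count and the two inclusion--exclusion computations are right: keeping track of whether $x,y$ lie in $\Gamma(x)\cup\Gamma(y)$ gives exactly $N-2K+\mu-2$ and $N-2K+\lambda$. Deducing the converse from $\overline{\overline{\mathcal{G}}}=\mathcal{G}$ is valid, and your remark that the complete and edgeless cases are covered vacuously, under the paper's definition, is also accurate. The paper gives no proof of its own here and simply quotes the result from the literature; your neighbourhood-counting argument is the standard proof of this fact.
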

By applying Corollary~\ref{Proj-2wt}, we obtain the following result.
\begin{theorem}\label{Thm-SRG}
	Let $r, m \in \mathbb{N}$, $q=p^r$ and let $M \subset [m]$ such that $\vert M \vert < m \leq 2\vert M \vert$.
	Let $D = a\overline{\Delta_M^{c}} \subseteq\mathcal{S}^m$. Then the Cayley graph $\mathcal{G}(\Omega)$ corresponding to $\tau(C_D)$ is strongly regular with parameters
	\begin{equation*}
		\left(q^m,\, q^m-q^{|M|},\, 2q^m-3q^{|M|},\, q^{|M|}-q^{2|M|-m} \right).
	\end{equation*}
	Furthermore, if $\vert M \vert = m-1$, then complement graph $\overline{\mathcal{G}(\Omega)}$ is also strongly regular with parameters
	\begin{equation*}
		\left( q^m,\, q^{|M|}-1,\, 3q^{|M|}-q^m-q^{2|M|-m}-2,\, q^m-q^{|M|} \right).
	\end{equation*}
\end{theorem}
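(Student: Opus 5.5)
The plan is to apply Theorem~\ref{SRG-main} and Theorem~\ref{SRG-Param} to the code $\tau(C_D)$ with $D=a\overline{\Delta_M^{c}}$. Corollary~\ref{Proj-2wt}(\ref{corol:3SF}) supplies all the data these results need.

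First I record what Corollary~\ref{Proj-2wt}(\ref{corol:3SF}) gives. The code $\tau(C_D)$ is a projective two-weight $[n,m]$ code over $\mathbb{F}_q$, with
\[
n=\frac{q^m-q^{|M|}}{q-1}.
\]
Its nonzero weights are $w_1=q^{m-1}-q^{|M|-1}$ and $w_2$. The hypothesis $|M|<m$ ensures that $\Delta_M^c\neq\emptyset$ and that the code genuinely has two distinct weights, not one. Since the code is projective, the columns of its generator matrix (the elements of $\overline{\Delta_M^{c}}$, padded by the zero block as in Eq.~\eqref{GenMatrix}) are pairwise non-proportional. Hence the set $\Omega$ has exactly $n(q-1)=q^m-q^{|M|}$ elements.

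Theorem~\ref{SRG-main} then immediately shows that $\mathcal{G}(\Omega)=\mathrm{Cay}(\mathbb{F}_q^m,\Omega)$ is strongly regular. Theorem~\ref{SRG-Param} gives $N=q^m$ and $K=n(q-1)=q^m-q^{|M|}$. For $\mu$ I substitute the weights $w_1=q^{m-1}-q^{|M|-1}$ and $w_2=q^{|M|-1}$ recorded in Corollary~\ref{Proj-2wt} into $\mu=q^2w_1w_2/q^m$. This gives
\[
\mu=\frac{q^{|M|}\,(q^m-q^{|M|})}{q^m}=q^{|M|}-q^{2|M|-m}.
\]
The hypothesis $m\le 2|M|$ is exactly what makes this an integer. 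Next I expand
\[
\lambda=K^2+3K-q(w_1+w_2)(1+K)+q^2w_1w_2
\]
in powers of $q^m$ and $q^{|M|}$, using $q(w_1+w_2)=q^m$ and $q^2w_1w_2=q^m\mu$. The claim is that this simplifies to $2q^m-3q^{|M|}$. This algebraic reduction is the step I expect to be the main obstacle. It is also where I must double-check which pair of weights is the correct input: the second weight in the proof of Proposition~\ref{Prop:2} would be $q^{m-1}$, which would change $\mu$ and $\lambda$. So I would first verify the weights directly from Eq.~\eqref{KeyEq_SF_like}, and then carry out the simplification carefully, grouping terms in $K$.

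For the complement, I would apply Theorem~\ref{Thm-SRG-Complement} with the parameters just obtained. This gives $(N,\,N-K-1,\,N-2K+\mu-2,\,N-2K+\lambda)$. Substituting $N-K-1=q^{|M|}-1$ and the values of $\mu$ and $\lambda$, and then specialising to $|M|=m-1$, should yield the stated tuple. In particular, $N-2K+\lambda=q^m-2q^m+2q^{|M|}+2q^m-3q^{|M|}=q^m-q^{|M|}$, which matches the last entry. For the third entry I would check that $N-2K+\mu-2$ agrees with $3q^{|M|}-q^m-q^{2|M|-m}-2$.
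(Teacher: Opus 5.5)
\textbf{There is a genuine gap, and it cannot be closed, because the statement as written is false.}

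The step you flagged and then postponed is exactly where the argument breaks. The second weight $q^{|M|-1}$ recorded in Corollary~\ref{Proj-2wt} is wrong. Compute directly from Eq.~\eqref{KeyEq_SF_like}: for $\beta_1\neq\mathbf{0}$, the weight of $\tau(c_D(v))$ is $\frac{1}{q-1}\,\#\{t\in\Delta_M^c:\langle\beta_1,t\rangle_q\neq 0\}=q^{m-1}-q^{|M|-1}\bigl(1-\Psi(\beta_1\mid\Delta_M)\bigr)$. So the nonzero weights are $w_1=q^{m-1}-q^{|M|-1}$ and $w_2=q^{m-1}$. This agrees with Table~\ref{Table:3} at $|N|=0$, halved and divided by $q-1$.

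With these weights, Theorem~\ref{SRG-Param} gives $K=q^m-q^{|M|}$, $\mu=q^2w_1w_2/q^m=q^m-q^{|M|}$ and $\lambda=q^m-2q^{|M|}$. (One eigenvalue, $K-qw_1$, equals $0$.) So the ``main obstacle'' you anticipated is not algebraic. Your simplification to $\lambda=2q^m-3q^{|M|}$ and $\mu=q^{|M|}-q^{2|M|-m}$ works only because the wrong weight forces $q(w_1+w_2)=q^m$. The hypothesis $m\le 2|M|$ is likewise just an artifact of that wrong $\mu$.

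No repair can reach the stated parameters. The claimed $\lambda=2q^m-3q^{|M|}$ exceeds $K-1=q^m-q^{|M|}-1$, since $q^m\ge 2q^{|M|}$; no graph can have $\lambda > K-1$. Likewise, the claimed complement has $\mu'=q^m-q^{|M|}>q^{|M|}-1=K'$, which is also impossible.

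The graph can in fact be identified directly. Since $\Delta_M^c$ is closed under nonzero scalars, $\Omega=\Delta_M^c=\mathbb{F}_q^m\setminus\Delta_M$. So $x\sim y$ iff $x$ and $y$ lie in different cosets of the subspace $\Delta_M$. Hence $\mathcal{G}(\Omega)$ is the complete multipartite graph with $q^{m-|M|}$ parts of size $q^{|M|}$, with parameters $(q^m,\,q^m-q^{|M|},\,q^m-2q^{|M|},\,q^m-q^{|M|})$. Its complement is $q^{m-|M|}$ disjoint copies of $K_{q^{|M|}}$, with parameters $(q^m,\,q^{|M|}-1,\,q^{|M|}-2,\,0)$. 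This is also what Theorem~\ref{Thm-SRG-Complement} returns from the corrected values. Both hold for all $1\le|M|<m$, with no need for $m\le 2|M|$ or $|M|=m-1$.

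The paper's own justification is just ``apply Corollary~\ref{Proj-2wt}''. That is the same route as yours, and it inherits the same faulty weight. Carried out with the correct weights, your plan proves the corrected parameters above, not the stated ones.
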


\section{Conclusions}\label{Sec:8}
        In this paper, we studied a class of $\mathcal{S}$-linear codes constructed from simplicial complexes, where $\mathcal{S}$ is a finite extension of a commutative non-unital ring $I$ of order $p^2$. The corresponding Gray image codes and subfield-like codes are also investigated and obtained several classes of few-weight divisible codes over $\mathbb{F}_q$. We established conditions under which these codes are minimal, optimal and self-orthogonal. As applications, we constructed families of locally recoverable codes (LRCs) and determined their locality. We also presented two infinite classes of projective few-weight codes. Furthermore, we investigated the access structures of secret-sharing schemes based on the duals codes. Finally, we obtained a class of strongly regular graphs and explicitly determined their parameters. Moreover, we showed that the complement of these graphs are strongly regular and computed their corresponding parameters.

        %These results demonstrate that simplicial complexes over finite ring extensions provide a rich framework for constructing linear codes with desirable combinatorial and algebraic properties.

		\section*{Acknowledgment}
		This work was supported by the ANRF National Postdoctoral Fellowship (NPDF) under Grant No.~PDF/2025/004899.

	\end{document}